\newcommand{\R}{\mathbb{R}}
\newcommand{\C}{\mathbb{C}}
\newcommand{\Z}{\mathbb{Z}}
\newcommand{\SL}{\operatorname{SL}}
\newcommand{\GL}{\operatorname{GL}}
\newcommand{\PSL}{\operatorname{PSL}}
\newcommand{\SO}{\operatorname{SO}}
\newcommand{\Sp}{\operatorname{Sp}}
\newcommand{\Spin}{\operatorname{Spin}}
\newcommand{\SU}{\operatorname{SU}}
\newcommand{\Id}{\operatorname{Id}}
\newcommand{\ImC}{\operatorname{Im}}
\newcommand{\ReC}{\operatorname{Re}}
\newcommand{\wt}{\widetilde{w}}
\newtheorem{theorem}{Theorem}[section]
\newtheorem{proposition}[theorem]{Proposition}
\newtheorem{lemma}[theorem]{Lemma}
\def\Aut{\operatorname{Aut}}
\def\a{\alpha}
\def\endofproof {\hfill{$\Box$}\\ }
\renewcommand{\(}{\begin{equation}}
\renewcommand{\)}{\end{equation}}
\newcommand{\bea}{\begin{eqnarray}}
\newcommand{\eea}{\end{eqnarray}}
\begin{document}

\title{\bf Integral group actions on symmetric spaces and \\ discrete duality symmetries of supergravity theories}

\date{\today}

\author{Lisa Carbone, Scott H.\ Murray and Hisham Sati}

\thanks{The first author was supported in part by NSF grant \#DMS-1101282, and the third author 
was supported by NSF grant \#PHY-1102218 and a University of Pittsburgh CRDF grant.}

\begin{abstract} For $G=G(\mathbb{R})$ a split, simply connected, semisimple Lie group of rank $n$ and 
$K$ the maximal compact subgroup of $G$, we give a method for computing Iwasawa coordinates of $K\backslash G$ using the Chevalley generators and the Steinberg presentation.  When $K\backslash G$ is a scalar coset for a supergravity theory in dimensions $\geq 3$, we determine the action of the integral form $G(\Z)$  on $K\backslash G$. We give explicit results for the action of the discrete U-duality groups $\SL_2(\mathbb{Z})$ and $E_7(\mathbb{Z})$ on the scalar cosets $\SO(2)\backslash \SL_2(\mathbb{R})$  and $[\SU(8)/\{\pm \Id\}]\backslash E_{7(+7)}(\mathbb{R})$  for  type IIB supergravity in   ten dimensions and 11-dimensional supergravity reduced to $D=4$ dimensions, respectively. For the former, we use this to determine the discrete U-duality transformations on the scalar sector in the Borel gauge and we describe the discrete symmetries of the dyonic charge lattice. We  determine the spectrum-generating symmetry group for fundamental BPS solitons of type IIB supergravity in $D=10$ dimensions at the classical level and we propose an analog of this symmetry at the quantum level. We indicate how our methods can be used to study the orbits of discrete U-duality groups in general.
 
\end{abstract}

\maketitle

\tableofcontents

\section{Introduction}

In classical supergravity theories, one is interested in classifying 
solutions of the equations of motion that enjoy certain desirable properties. 
These often take the form of extremal black $p$-brane 
solutions, and are usually related by the underlying algebraic structure of classical U-duality. 
In the quantum theory, that is in string theory or in M-theory, 
the corresponding solutions are believed to be related by 
discrete U-duality (\cite{HT}), leading to a subtle and complex orbit structure under the discrete U-duality group.

Thus knowing the orbits of the U-duality group  on the space of solutions
allows one, at least in principle, 
to know all solutions of a given class of equations in the quantum theory. 
Such solutions are usually charged under (higher analogs of) electric and magnetic fields. 
The electric and magnetic charge 
vectors of the asymptotically flat 
$p$-brane solutions form irreducible 
representations of the U-duality 
group (\cite{LPS}). 

The question of determining the distinct charge vector orbits under  U-duality 
is of central importance. The U-duality orbits for real valued charges are classified and have been understood for some time (\cite{FG}, \cite{LPS}).  However, a classification of orbits under the discrete U-duality group in known only in certain cases (see  [BDDFMR] for an excellent overview of the status of this question and the open issues).

We recall that in dimensionally reduced supergravity theories, there is a Lie group $G$ and subgroup $K$ such that the scalar fields take values in the space $K\backslash G$
of right cosets. 
This coset space has a $K$-action on the left and a right action by the discrete U-duality group $G(\Z)$. In most cases, $G=G(\mathbb{R})$ is a split, simply connected, semisimple algebraic group of rank $n$ and $K$ is its maximal compact subgroup.

In this work,  we give a method using the Steinberg presentation for computing Iwasawa coordinates of the quotient $K\backslash G$ and the action of the $\mathbb{Z}$-form $G(\mathbb{Z})$ on $K\backslash G$  by means of Iwasawa decomposition $G=KH^+U$ (see Sections~\ref{S-action-sl2} and~\ref{Sec Iwasawa}). 
These computations encode the discrete U-duality symmetries on the scalar coset $K\,\backslash G$. Similar computations are common in the theory of automorphic forms on finite dimensional Lie groups (see \cite{H-C}). 

Our computations hence involve finding the standard Iwasawa form of a generic element $g\in G$ and then allowing $z\in G(\mathbb{Z})$ to act on the coset $Kg$. 
The task is then to determine a compensating element $k\in K$ so that $kgz$ is again in standard Iwasawa form.
This requires completely determining the constants given in the Steinberg relations for the  group $G$ (\cite{St})
and explicitly describing $K$ and $G(\Z)$ in terms of the Steinberg generators. The computer algebra system Magma (\cite{BCFS}) is essential here, since it automatically computes these constants without infeasible hand calculations. 

The hardest part of constructing the action of $G(\Z)$ seems to be the action of  the Weyl group representatives in $G$.
Tables~1 through~8 give the Iwasawa coordinates and compensating elements for the action of these elements, for all finite-dimensional 
U-duality groups $G$.
As examples,  we compute  the action of the generators of the discrete groups
 $\SL_2(\mathbb{Z})$ and $E_{7(+7)}(\mathbb{Z})$ on the cosets  $\SO(2)\backslash \SL_2(\mathbb{R})$ and  $[\SU(8)/\{\pm \Id\}]\backslash E_{7(+7)}(\mathbb{R})$. These cosets are known to encode the scalar fields of  type IIB supergravity in $D=10$ dimensions and 11-dimensional supergravity reduced to  $D=4$ 
 dimensions, respectively. 

These techniques can also be extended to determine the orbits of $G(\Z)$ on the charge lattice. % (\cite{CMS}). 
Our methods are general and do not depend on the choice of group $G$. Our techniques could hence be applied to all U-duality groups, including the hyperbolic Kac--Moody group $E_{10}$, which is conjectured to be the U-duality group of 11-dimensional supergravity reduced to 1 dimension (see Section \ref{other}).

Earlier fundamental works, particularly \cite{BDDFMR},  make use of Jordan algebras and 
Freudenthal systems (as described in \cite{Bh} and \cite{Kr}) and of integral forms of groups
(as described in \cite{BDDR}, \cite{EG} and \cite{Gr}) to study extremal black $p$-brane 
solutions of supergravity in 6, 5 and 4-dimensional classical theories and their relationship under U-duality. 
%
%We develop algebraic methods using  tools from computational group theory (following \cite{CMT}) in order to determine Iwasawa coordinates for the scalar coset $K\,\backslash G$ and the orbit structure of the discrete duality group $G(\Z)$ on $K\,\backslash G$. 
%
After preparation of this manuscript, we also learned of the work of Cacciatori, Cerchiai and Marrani (\cite{CC}, \cite{CCM1}--\cite{CCM5}) who
studied various properties of the Iwasawa parametrization and gave a general method for constructing it explicitly at the group level.
In particular, in \cite{CCM1} and \cite{CCM4},  they gave Iwasawa coordinates in the  symplectic frame for the symmetric space $[\SU(8)/\{\pm \Id\}]\backslash E_{7(+7)}(\mathbb{R})$  and described some of the implications for the corresponding supergravity theory.

We are grateful to Thibault Damour for suggesting the study of this problem and for helpful discussions. We also wish to thank Arun Ram for providing his unpublished results that we used in Section 6.

%%%%%%%%%%%%%%
\section{Preliminaries on U-duality }
%%%%%%%%%%%%

A rank $d$ U-duality group 
arises in the reduction of M-theory on a $(d+1)$-torus $T^{d+1}$.
It is generated by the T-duality group $\SO(d, d)$ and the S-duality group
$\SL_{d+1}(\Z)$, which acts geometrically as the modular group of the torus $T^{d+1}$. 
The U-duality group is then 
$$ \SL_{d+1}(\Z) \bowtie \SO_{d, d}( \Z),$$
%where the symbol $\bowtie$ is used by physicists to denote  
the group generated by the two non-commuting 
actions  (see \cite{OP} for a survey). 

For example,  dimensional reduction of M-theory on the torus $T^7$ leads to $D=4$, $N=8$  maximal 
supergravity which has non-compact $E_{7(+7)}(\R)$ global symmetry action on the 
56-dimensional fundamental representation of the Lie algebra $E_7$. 

In $D=4$ spacetime dimensions, 11-dimensional supergravity has the  maximal  number $N=8$ of supersymmetries. The equations of motion and the Bianchi identities of 
the $N=8$ supergravity theory in  four dimensions are invariant  under the non-compact U-duality group $E_{7(+7)}(\R)$ (\cite{CJ}).

In $D=4$ dimensions, the electric and magnetic charges are subject to the Dirac--Zwanziger--Schwinger quantization condition. 
The electric and magnetic charges live on a lattice $\mathcal{Q}$ in a 56-dimensional vector space $V$, with 28 electric and 28 magnetic fundamental charges. 
The  representation $V$ gives rise to a faithful representation of $E_{7(+7)}(\R)$ in $\Sp_{56}(\mathbb{R})$. 

The discrete group $E_{7(+7)}(\mathbb{Z})$ acts on (\cite{CJ}):
\begin{itemize}
\item the set of magnitudes of electric and magnetic charges on the lattice $\mathcal{Q}$ given by the Dirac--Zwanziger--Schwinger quantization condition;
\item the abelian gauge fields as generalized electromagnetic duality; and
\item the 70 scalar fields of the theory, which take values in the coset space 
$$[\SU(8)/\{\pm \Id\}]\backslash E_{7(+7)}(\mathbb{R}).$$
\end{itemize}

 The $\Z$-form $G(\Z)$, for any simple and simply connected Chevalley group or Kac--Moody group $G$, may be defined as the stabilizer 
$$G({\mathbb{Z}})=\{g\in G({\mathbb{R}})\mid g\cdot V_{\Z}\subseteq V_{\Z}\}$$
of a lattice $V_{\Z}$ in a fundamental representation $V$ for $G$ ([St], [BC]).

So for $G=E_7$ we get
$$E_{7(+7)}(\mathbb{Z}) = E_{7(+7)}(\mathbb{R}) \cap \Sp_{56}(\mathbb{Z}),$$
as discovered in [HT], following [CJ] in the framework of type II string theory. 

Soul\'e (\cite{S}) gave a rigorous mathematical proof that the $E_{7(+7)}(\mathbb{Z})$ of Hull and Townsend coincides with the Chevalley $\Z$-form of $G=E_7$. 
Here $E_{7(+7)}(\mathbb{R}) \cap \Sp_{56}(\mathbb{Z})$ is the stabilizer of the standard lattice in the 56-dimensional fundamental representation of $E_7$. (See also\cite{MS}  for a discussion of the role of $E_8$).

\subsection{$\SL_2(\Z)$-symmetry in supergravity}
 The U-duality group $G(\Z)= E_{n(+n)}(\Z)$
has  many subgroups isomorphic to $\SL_2(\Z)$. We highlight two particularly distinguished ones.
The first  interchanges the NS-NS (Neveu--Schwarz) fields with the R-R 
(Ramond-Ramond) fields
(called X-duality in \cite{LPS1})
and is conjectured to be a
non-perturbative symmetry of 
 type IIB superstring theory
 in $D=10$ (\cite{Sc}) and  type IIA superstring theory in 
$D=9$ (\cite{BHO}). 

The second $\SL_2(\Z)$ subgroup implements 
electric-magnetic duality (\cite{DL}), which exists
only in $D=4$ space-time dimensions, 
and is again non-perturbative.

%In this sense, four dimensions and hence the rank 7 case are special. 

%The group $SL_2$ also has importance in other contexts. For instance, the `$T^3$-model is an irreducible model coupled to a single vector multiplet which can be obtained by circle compactification of minimal supergravity in five dimensions. The symmetry group is \cite{BMOS} ${\rm SL}(2, \R)$ and the scalar manifold is given by the special K\"ahler manifold \cite{CvP} ${\rm SL}(2, \R)/{\rm SO}_2(\R)$. 

 \subsection{The role of the Weyl group}
 The role of the Weyl group $W$ associated to  the discrete
U-duality group $G(\Z)$ is analogous to that of the $\Z_2$ subgroup of 
the U(1) electric-magnetic duality group 
in Maxwell theory, which describes the discrete interchange of electric and magnetic 
fields
$$%\Z_2: 
E \mapsto B {\rm ~and~} B \mapsto -E.$$
The Weyl group acts as rotations by integral multiples of $\pi/2$ on the 
space of field strengths. In contrast, the full U-duality group $G$ (over $\R$ or $\Z$)
includes more general rotations in the space of field strengths (\cite{LPS1}).
The Weyl group preserves the total number of 
electric and magnetic charges, whereas the U-duality group does not, so that $W$ gives
a characterization of the independent $p$-brane solutions of a given type (\cite{LPS1}).

%Similarly, the Weyl group of the distinguished $SL_2$ subgroups also implements right angle rotations \cite{LPS0}.
%Note from LC: removed this since subgroups of SL2 don't have a Weyl group and we discuss the situation in detail somewhat later.

%\attn{Problem: In identifying the subgroup $W$ of a U-duality group $G$ that
%preserves the total number of electric and magnetic charges as the  Weyl
%group of $G$, the authors of \cite{LPS0} do not use the classical definition 
%of the Weyl group as a quotient of the normalizer by the centralizer of the Cartan 
%subalgebra of $G$. They say that since this standard definition is a quotient 
%construction, it is not even clear in general that its result properly defines
%a subgroup of $G$. {\it But we can, right?}}

% Note from LC: the Weyl group is indeed not a subgroup but it doesn't matter, one can still talk about it's distingished properties.

\subsection{The classical coset}

The dimensional reduction of 11-dimensional supergravity on 
a $d$-torus gives rise to a theory in $D$ dimensions with the 
scalar fields having the following symmetry pattern given 
by Cremmer and Julia \cite{CJ1} (see \cite{HPS}, \cite{Ni}):
%\attn{I do not see, how the $R^+$-group}
$$
\begin{tabular}{|c|c||l|l|l|l|}
\hline
$D$ & $d$ & $E_d$&  $G=E_{d(+d)}$ & $K$ & $\#$scalars $=\dim K\backslash G$\\
\hline
\hline
%10 & 1 & $\R^+$ & 1 & 1\\
9 & 2 & $A_{1}$ &$\SL_2 (\R)$  & $\SO(2)$ & $4 -1=3$\\
8 & 3 & $A_2A_1$ &$\SL_3(\R) \times \SL_2(\R)$ & $\SO(3)\times \SO(2)$ & $11 - 4=7$\\
7 & 4 & $A_4$&$\SL_5(\R)$ &  $\SO(5)$ & $24-10=14$\\
6 & 5 & $D_5$&$\Spin(5,5)$ & $\left(\Spin(5) \times  \Spin(5)\right)/\{\pm\Id\}$ & $45 - 20=25$\\
5 & 6 & $E_6$&$E_{6(+6)}(\R)$ & $\Sp(8)/\{\pm\Id\}$ & $78 - 36=42$ \\
4 & 7 & $E_7$&$E_{7(+7)}(\R)$ & $\SU(8)/\{\pm\Id\}$ & $133 - 63=70$\\
3 & 8 & $E_8$&$E_{8(+8)}(\R)$ & $\Spin(16)/\{\pm\Id\}$ & $248 - 120=128$\\
\hline
\end{tabular}$$
We recall that the notation $E_{d(+d)}$ refers to the non-compact split, simply connected form of $E_d$.

%Note from LC: removed $\times \R^+$ from $SL_2(\R)$ since I followed [HPS] exactly.

A systematic way to study the orbits of the global symmetry groups $G=E_{\tiny {11-D},+(11-D)}$ in $D$ dimensions, on  the charge vector
space,  is to start with the simplest solution
and then apply the global symmetry group $G$. Since the global symmetry commutes
with supersymmetry, the generated solutions will also be supersymmetric. This study was carried out in \cite{LPS} for
$4 \leq D \leq 9$ where a classification of orbits was also given.

To give the spectrum of BPS solitons, one needs to classify the sets of 
solutions at fixed values of the scalar moduli, that is, the asymptotic values of all the diatonic and 
axionic scalars.

%The initial simple charge configuration is fixed by the  subgroup $K$ of $G$. 

We recall the case of $D=4$, following  \cite{CJ}, \cite{HT}, \cite{ADFFT} and  \cite{BFT}. 
The equations of motion 
and the Bianchi identities of the $N=8$ classical supergravity theory in  four dimensions are invariant 
under the classical U-duality group $E_{7(+7)}(\R)$. The group $E_{7(+7)}(\R)$  
acts simultaneously on both the space spanned by the
70 scalar fields $\phi^{\alpha}$ and on the vector space {\bf\scshape{Q}} generated by the 28 electric and 28 
magnetic quantized charges
 in the 56-dimensional fundamental representation.
  
 A static, 
spherically symmetric BPS black hole solution is characterized  in general by a vector $\vec{Q}\in$ {\bf\scshape{Q}}
and a particular point $\phi_\infty$ on the moduli space of the theory whose $70$ coordinates 
$\phi^\alpha_\infty$ are the values of the scalar fields at spatial infinity ($r\rightarrow \infty$). 

Acting 
on a black hole solution ($\phi_\infty\,,\,\vec Q$) by means of a $U$-duality transformation $g\in E_{7(+7)}(\R)$ one 
generates a new black hole solution ($\phi_\infty^g\,,\,{\vec Q}^g$). The BPS black hole solutions therefore fill the U-duality orbits.

The orbits of the classical $E_{7(+7)}(\R)$ group are studied in
\cite{FM}, \cite{FG} and  \cite{LPS}. The orbits can be viewed as 
similar, in a sense, to the orbits of 
time-like, light-like and space-like vectors in Minkowski space,
except that one uses a quartic invariant $I_4$ instead of a 
quadratic form.

%
% \cite{ADFFT} study the general equations determining BPS Black Holes
% in $N=8$ supergravity in four dimensions 
% by performing a general group theoretical analysis of the Killing spinor equation
%enforcing the BPS condition. Its solutions parametrize the UÐduality orbits of BPS solutions that
%are characterized by having 40 of the 70 scalars Þxed to constant values.

The different orbits with various supersymmetries can
also be related to intersecting branes and are
characterized by certain group invariant polynomials (\cite{FM}),
namely the analogous quartic invariant $I_4$ in four dimensions  (\cite{KK}).

\subsection{The quantum coset}
In the full quantum theory, it is conjectured that charges are quantized and the duality symmetry is 
broken to the discrete U-duality subgroup $G(\Z)$ (\cite{HT})
as a consequence of the Dirac--Zwanziger--Schwinger charge quantization 
condition.  Classical supergravity solutions correspond to the limit of large values of integer
quantized charges. 

 The orbits of the charge vectors under the integral group $G(\Z)$ are  subtle and 
a complete characterization is not yet known. 
Partial results on discrete orbit classification
are made in certain cases by introducing new arithmetic U-duality 
invariants not appearing in the real form. These are given by the
`greatest common divisor' of U-duality representations built out of the basic charge
vector representations (\cite{BDDFMR}, \cite{BDDR},  \cite{DGN}, \cite{Se}).
 
However, a classification of orbits under the discrete U-duality group in known  in certain cases. For example, for the subclass of `projective black holes' satisfying an arithmetic condition (Section 4.4 of \cite{BDDFMR}), a complete classification was given in \cite{BDDFMR} in terms of an $E_{7(+7)}(\Z)$-orbit of a certain canonical form. Here the techniques involved the use of integral Jordan algebras, the integral Freudenthal triple system and the work of Krutelevich (\cite{Kr}). In \cite{BDDFMR}, it was shown that all black holes with the same quartic norm are U-duality related, while the situation for non-projective black holes remains unclear.

%\paragraph{Four dimensions:} The electric and magnetic charges are all point-like and 
%belong to the same representation of the U-duality group. 
%
%This leads to $N=8$ supergravity in four dimensions. 

\subsection{The relationship between $G(\R)$ and $G(\Z)$ orbits} 

For completeness, 
here we state two observations, the first from 
\cite{BDDFMR} and the second from \cite{FMMS}. 

First, the $G(\Z)$ orbits fall into disjoint sets corresponding  to the orbits of the classical theory under the continuous group $G(\R)$. This follows from the fact that
 the conditions separating the continuous orbits are manifestly
invariant under the corresponding discrete U-dualities and hence states that are 
unrelated in the continuous case remain unrelated in the discrete case.

Second, 
in order for the greatest common divisor of a collection of U-duality 
representations to be well
defined, such  representations must be non-vanishing (\cite{BDDFMR}). One must then identify the class of orbits to which a given state belongs to in the continuous case. This in turn yields an  identification of the subset of arithmetic invariants that are well defined for this
particular state.

\subsection{The impact on M-theory of identifying the $E_{7(+7)}(\Z)$ invariants.} 
 Manifestly $E_{7(+7)}(\Z)$ invariant 
 partition functions would tell us about
 the full microscopic physics of M-theory (on a  seven-torus). 

As discussed above, the integral form $E_{7(+7)}(\Z)$ has discrete invariants that are not seen in the 
real form and are given 
by the greatest common divisor of certain sets of numbers 
which correspond to covariant tensors of $E_{7(+7)}(\R)$
(\cite{BDDR}, \cite{Se}). 

As shown in \cite{BFK}, some of the orbits of $E_{7(+7)}(\Z)$ should play an important role
in counting the micro-states of $D=4$, $N=8$ supergravity.

A relation between time-like, light-like and space-like orbits of the 
$E_{7(+7)}(\R)$ symmetry and 
discrete $E_{7(+7)}(\Z)$ invariants is established in \cite{BFK}. 
The time-like, light-like, and space-like orbits in $E_{7(+7)}(\R)$ 
corresponds to $I_4>0$, $I_4=0$, and $I_4<0$, respectively.

%%%%%%%%%%%
\section{Action of $\SL_2(\mathbb{Z})$ on $\SO(2)\backslash \SL_2(\mathbb{R})$}
\label{S-action-sl2}
%%%%%%%%%%%%%%

In this section we work out the case for $G=\SL_2(\R)$ in detail.  We begin with the details of  the action of $\SL_2(\mathbb{Z})$ on $\SO(2)\backslash \SL_2(\mathbb{R})$ in matrix form and then reduce it to a discussion that uses only Chevalley generators of  the group $\SL_2(\R)$. This   allows us to generalize the discussion to higher rank in the next section. 

Recall that $ {\SO}(2)= %$ denote the subgroup:
%$$ {\SO}_2(\mathbb{R}) = \left\{ \left(\begin{matrix}\cos\theta&-\sin\theta\\ \sin\theta&\cos\theta\\ \end{matrix}\right)\mid 0\leq \theta<2\pi\right\}=
\{A\in \SL_2(\mathbb{R})\mid AA^T=A^TA=\Id\}.$
 The group $\SL_2(\mathbb{R})$ acts on  the  Poincar\'e upper half plane 
 $\mathbb{H}=\{z\in\mathbb{C}\mid \ImC(z)>0\}$ 
by fractional linear transformations
$$\left(\begin{matrix}  a & b \\ c & d  \end{matrix}\right):  z\mapsto \dfrac{az+b}{cz+d}.$$
%
%,\quad z\in \mathbb{H}\;,\  \left(\begin{matrix}  a & b \\ c & d  \end{matrix}\right)\in \SL_2(\mathbb{R}).$$
%
As is well known (see \cite{Mi}), the action of  $\SL_2(\mathbb{R})$ on $\mathbb{H}$ is transitive and the stabiliser of $i\in \mathbb{H}$ is $\SO(2)$.
Hence
the coset space $\SO(2)\backslash \SL_2(\mathbb{R})$ is homeomorphic to $\mathbb{H}$ via the map  
$$\SO(2)A \mapsto \dfrac{ai+b}{ci+d}\qquad\text{where $A=\left(\begin{matrix}  a & b \\ c & d  \end{matrix}\right)$.}$$ %Thus the coset $\SO_2(\mathbb{R})\cdot \Id$ corresponds to the point $i$.

The $\Z$-form  $G(\Z)=\SL_2(\mathbb{Z})$ acts on $\mathbb{H}$ by integral fractional linear transformations with fundamental domain
 $\mathcal{F}=\{z\in \mathbb{H}\mid |z|\geq 1 \text{ and } |\ReC(z)|\leq 1/2\}$.
It is well known that $\SL_2(\mathbb{Z})$ is generated by
$$T=\left(\begin{matrix} 1 & 1\\ 0 & 1\end{matrix}\right):z\mapsto z+1 \quad\text{and}\quad S=\left(\begin{matrix} 0 & 1\\ -1 & 0\end{matrix}\right):z\mapsto -1/z.$$
%\begin{lemma} 
%for  the group 
%\end{lemma}
%One may prove 
%this
%the lemma 
%by observing that any element of $\SL_2(\mathbb{Z})$ can be transformed to the identity matrix by a sequence of elementary row operations given by $S$, $T$ 
%and/or their inverses. It turns out that $\SL_2(\mathbb{Z})$ can also be generated by $TS$ and $S$. We also have:
%$$TS=\left(\begin{matrix} 1 & -1\\ 1 & 0\end{matrix}\right), \ (TS)^3=\left(\begin{matrix} -1 & 0\\ 0 & -1\end{matrix}\right)$$
%$$(TS)^2=\left(\begin{matrix} 0 & -1\\ 1 & -1\end{matrix}\right),  \ S^2=\left(\begin{matrix} -1 & 0\\ 0 & -1\end{matrix}\right)$$
%so that  $(TS)^3$ and  $S^2$ are  identity elements of $PSL_2(\mathbb{Z})=SL_2(\mathbb{Z})/\{\pm Id\}$.
Writing $a=TS$, $b=S$ we obtain group presentations
\begin{align*}
\SL_2(\mathbb{Z})&=\langle a,b\mid a^6=b^4=1, a^3=b^2\rangle,
\\
\PSL_2(\mathbb{Z})&=\langle a,b\mid a^3=b^2=1\rangle.
\end{align*}
We also note that $S$ sends $z\in \mathbb{H}$ to  $-1/z$ and rotates $\mathcal{F}$ about the point $i$, fixing $i$ as shown in Figure~1. This is the only fixed point for the action of $S$ on $\mathbb{H}$. 
Furthermore, $S$ inverts the arc from $e^{2\pi i/3}$ to $e^{\pi i/3}$, while $TS$ fixes $e^{\pi i/3}$.

\begin{figure}\label{fig:SL2}
\begin{center}
\resizebox{5.89in}{4.88in}{\includegraphics[]{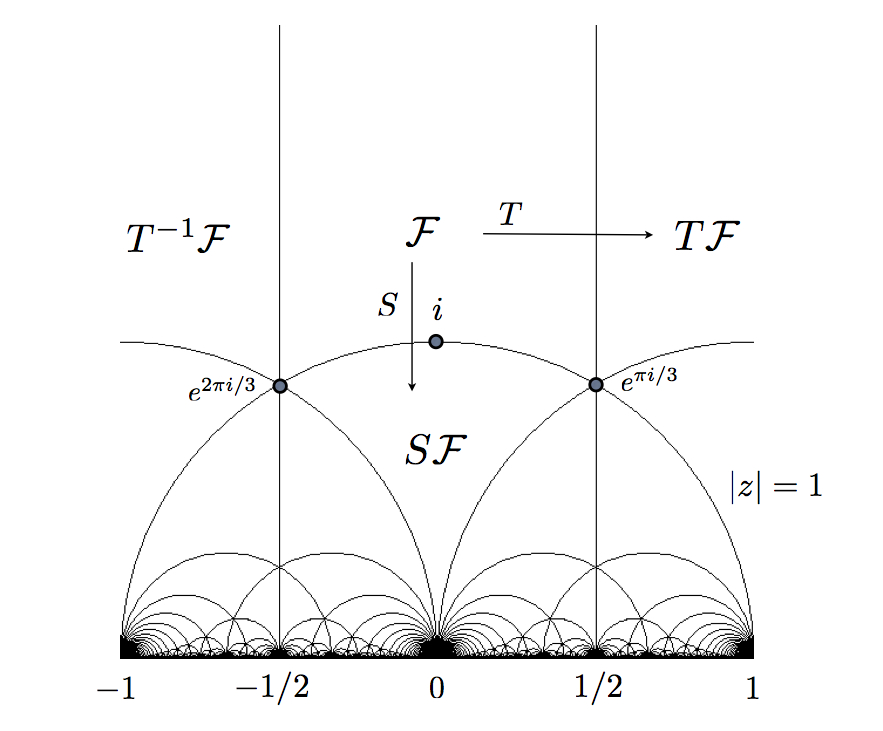}}
\caption{The fundamental domain for $\SL_2(\mathbb{Z})$ on $\mathbb{H}$}
\end{center}
\end{figure}

Since we will use the Steinberg presentation to generalize our results in the next section, we briefly describe the construction of $\SL_2(\R)$ as a Chevalley group.
Recall that the matrices 
$$e=\left(\begin{matrix} 0 & 1\\ 0 & 0\end{matrix}\right), \qquad h= \left(\begin{matrix} 1 & 0\\ 0& -1\end{matrix}\right),
\qquad f= \left(\begin{matrix} 0 & 0\\ 1 & 0\end{matrix}\right)$$
form a basis for the Lie algebra $\mathfrak{sl}_2(\mathbb{R})$ of the Lie group $\SL_2(\R)$.  The multiplication table for this Lie algebra is completely determined by the relations
$$[e,f]=h, \qquad  [h,e]=2e,\qquad  [h,f]=-2f.$$
Let  $\mathfrak{h}=\mathbb{R}h$, let
$\mathfrak{h}^*$ be the dual vector space, and let $\alpha\in\mathfrak{h}^*$ be defined by $\alpha(h)=2$.
Then $\mathfrak{sl}_2(\mathbb{R})$ has a {\it root space decomposition}, that is the eigenspace decomposition of $\mathfrak{sl}_2(\mathbb{R})$ with respect to 
the adjoint ad $h$, where  (ad $h)(X)=[h,X]$. 
The roots are $\alpha, -\alpha\in\mathfrak{h}^{\ast}$ with corresponding root vectors $e_{\alpha}=e$ and $e_{-\alpha}=f$.
% be corresponding root vectors, that is. We may choose 
%$$e_{\alpha}=x=\left(\begin{matrix} 0 & 1\\ 0 & 0\end{matrix}\right)\in \mathbb{R}x, \qquad {\rm and} \qquad 
%e_{-\alpha}=y= \left(\begin{matrix} 0 & 0\\ 1 & 0\end{matrix}\right)\in \mathbb{R}y.$$

The map
 $$\exp:\mathfrak{sl}_2(\mathbb{R})\longrightarrow \SL_2(\mathbb{R})$$
is the matrix exponential $\exp(X) = \sum_{k=0}^\infty{1 \over k!}X^k$. 
Now $\SL_2(\R)$ has Steinberg generators
\begin{align*}
\chi_{\alpha}(x)=\exp(xe_{\alpha})={\rm Id}+xe=\left(\begin{matrix} 1 & x\\ 0 & 1\end{matrix}\right)\quad\text{and}\quad
\chi_{-\alpha}(x)=\exp(xe_{-\alpha})={\rm Id}+xf=\left(\begin{matrix} 1 & 0\\ x & 1\end{matrix}\right),
\end{align*}
for $x\in\R$. We also define supplementary generators for $t\in\R^*$
\begin{align*}
\widetilde{w}_{\alpha}(t)&=\chi_{\alpha}(t)\chi_{-\alpha}(-t^{-1})\chi_{\alpha}(t)%=\exp(ue_{\alpha})\exp(-u^{-1}e_{-\alpha})\exp(ue_{\alpha})
=\left(\begin{matrix} 0 & t\\ -t^{-1} & 0\end{matrix}\right),\\
\widetilde{w}_{\alpha}&=\widetilde{w}_{\alpha}(1)=\left(\begin{matrix} 0 & 1\\ {-1} & 0\end{matrix}\right),\quad\text{and}\\
h_{\alpha}(t)&=\widetilde{w}_{\alpha}^{-1}\widetilde{w}_{\alpha}(t)=\left(\begin{matrix} t^{-1} & 0\\ 0& t \end{matrix}\right).
\end{align*}

%Noting that $x^2=y^2=0$, the exponential map on $x$ and $y$ gives
%$$\exp(x)= \exp(e_{\alpha})={\rm Id}+x=\left(\begin{matrix} 1 & 1\\ 0 & 1\end{matrix}\right),$$
%$$\exp(y)= \exp(e_{-\alpha})={\rm Id}+y=\left(\begin{matrix} 1 & 0\\ 1 & 1\end{matrix}\right).$$

%Next we describe Steinberg's presentation of $\SL_2(\mathbb{R})$ as a Chevalley group. That is, we give Steinberg's generators and defining relations for $\SL_2(\mathbb{R})$. As in [St], $SL_2(\mathbb{R})$ is generated by elements $\chi_{\alpha}(u)$ and $\chi_{-\alpha}(v)$, $u,v\in \mathbb{R}$, where
%$$\chi_{\alpha}(u)=\exp(ue_{\alpha})={\rm Id}+ux=\left(\begin{matrix} 1 & u\\ 0 & 1\end{matrix}\right),$$
%$$\chi_{-\alpha}(v)=\exp(ve_{-\alpha})={\rm Id}+vy=\left(\begin{matrix} 1 & 0\\ v & 1\end{matrix}\right).$$
Define the \emph{toral subgroup}
$$H=\{ h_{\alpha}(u)\mid u\in \R^{\ast}\},$$ which is isomorphic to $\R^*$.
Let $X_{\pm\alpha}=\{\chi_{\pm\alpha}(u)\mid u\in \R\}.$
Then $X_{\pm\alpha}$ is a subgroup isomorphic to the additive group of $\R$, called the {\it root group} associated with $\pm\alpha$.
%Let $\mathbb{R}^{\ast}=\mathbb{R}\backslash\{0\}$ and, for $u\in \mathbb{R}^{\ast}$, set 
%$$\widetilde{w}_{\alpha}(u)=\chi_{\alpha}(u)\chi_{-\alpha}(-u^{-1})\chi_{\alpha}(u)=\exp(ue_{\alpha})\exp(-u^{-1}e_{-\alpha})\exp(ue_{\alpha})=\left(\begin{matrix} 0 & u\\ -u^{-1} & 0\end{matrix}\right),$$
%whose inverse is $\widetilde{w}_\alpha (u)^{-1}= \widetilde{w}_\alpha (-u)$, 
%and set 
%\(
%\widetilde{w}_{\alpha}=\widetilde{w}_{\alpha}(1)=\left(\begin{matrix} 0 & 1\\ {-1} & 0\end{matrix}\right).
%\label{eq w}
%\)
We note that $\widetilde{w}_{\alpha}$ has order 4  and we set $\widetilde{W}=\langle \widetilde{w}_{\alpha}\mid \widetilde{w}_{\alpha}^4=1\rangle$. Then $\widetilde{W}$ is called the {\it extended Weyl group}. The  \emph{Weyl group} 
$$W=\langle w_{\alpha}\mid w_{\alpha}^2=1\rangle,$$ 
which is a symmetry group of the root system, is the quotient $\widetilde{W}/\widetilde{W}\cap H$.  
%We set 
%$$h_{\alpha}(u)=\widetilde{w}_{\alpha}(u)\widetilde{w}_{\alpha}^{-1}=\left(\begin{matrix} u & 0\\ 0& u^{-1} \end{matrix}\right),$$

\begin{lemma}\label{relns}
$\SL_2(\mathbb{R})$ is generated by $\chi_\alpha(u)$, $\chi_{-\alpha}(u)$ for $u\in\R$; with supplementary generators $\widetilde{w}_\alpha(t)$, $\wt_\alpha$ and $h_{\alpha}(t)$ defined as above for $t\in\R^*$.
The Steinberg relations give defining relation for $\SL_2(\mathbb{R})$ with respect to these generators:
\begin{align}
h_{\alpha}(uv)&=h_{\alpha}(u)h_{\alpha}(v),\\
%h_{\alpha}(u)h_{\alpha}(v)&=h_{\alpha}(v)_{\alpha}(u),\\
  \widetilde{w}_\alpha h_\alpha(t)\widetilde{w}_\alpha^{-1} &=  h_{\alpha}(t^{-1}),\\
\chi_{\alpha}(u+v)&=\chi_{\alpha}(u)\chi_{\alpha}(v),\\
\widetilde{w}_{\alpha}(t)\chi_{\alpha}(u)\widetilde{w}_{\alpha}(t)^{-1}&=\chi_{-\alpha}(-t^2u).
\end{align}
\end{lemma}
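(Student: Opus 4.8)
The plan is to verify each of the four Steinberg relations directly using the explicit $2\times 2$ matrix representations already recorded in the excerpt, and then to argue that these relations suffice to present $\SL_2(\R)$. For the individual relations, I would simply multiply matrices: relation (1) follows from $\diag(u^{-1},u)\diag(v^{-1},v)=\diag((uv)^{-1},uv)$; relation (3) is the additivity $\bigl(\begin{smallmatrix}1&u\\0&1\end{smallmatrix}\bigr)\bigl(\begin{smallmatrix}1&v\\0&1\end{smallmatrix}\bigr)=\bigl(\begin{smallmatrix}1&u+v\\0&1\end{smallmatrix}\bigr)$; relation (2) is the conjugation $\bigl(\begin{smallmatrix}0&1\\-1&0\end{smallmatrix}\bigr)\diag(t^{-1},t)\bigl(\begin{smallmatrix}0&1\\-1&0\end{smallmatrix}\bigr)^{-1}=\diag(t,t^{-1})=h_\alpha(t^{-1})$; and relation (4) is the computation $\widetilde{w}_\alpha(t)\,\bigl(\begin{smallmatrix}1&u\\0&1\end{smallmatrix}\bigr)\,\widetilde{w}_\alpha(t)^{-1}$, where $\widetilde{w}_\alpha(t)=\bigl(\begin{smallmatrix}0&t\\-t^{-1}&0\end{smallmatrix}\bigr)$ and $\widetilde{w}_\alpha(t)^{-1}=\bigl(\begin{smallmatrix}0&-t\\t^{-1}&0\end{smallmatrix}\bigr)$, which I expect to land on $\bigl(\begin{smallmatrix}1&0\\-t^2u&1\end{smallmatrix}\bigr)=\chi_{-\alpha}(-t^2u)$. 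These are all one-line checks; I would present them compactly rather than grinding through every entry.

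For the generation statement, I would invoke the standard fact that $\SL_2(\R)$ is generated by the elementary matrices $\chi_\alpha(u)$ and $\chi_{-\alpha}(u)$: this is Gaussian elimination, since any $g\in\SL_2(\R)$ can be reduced to the identity by elementary row operations (using that at least one entry in the first column is nonzero, clearing the off-diagonal entries, and handling the resulting diagonal via $\diag(t^{-1},t)=h_\alpha(t)$, which is itself a word in the elementary generators through the definitions $h_\alpha(t)=\widetilde{w}_\alpha^{-1}\widetilde{w}_\alpha(t)$ and $\widetilde{w}_\alpha(t)=\chi_\alpha(t)\chi_{-\alpha}(-t^{-1})\chi_\alpha(t)$). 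The supplementary generators $\widetilde{w}_\alpha(t)$, $\widetilde{w}_\alpha$, $h_\alpha(t)$ are then definitionally words in $\chi_{\pm\alpha}$, so they add no new elements; they are listed only because the relations are stated in terms of them.

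The subtler point — and the one I would treat as the main obstacle — is the claim that relations (1)--(4) are \emph{defining} relations, i.e. that the abstract group generated by symbols $\chi_{\pm\alpha}(u)$, $h_\alpha(t)$, $\widetilde{w}_\alpha(t)$, $\widetilde{w}_\alpha$ subject to these relations (together with the definitional identities) maps isomorphically onto $\SL_2(\R)$. Here I would appeal to Steinberg's theorem (\cite{St}): for a simply connected Chevalley group of type $A_1$ over a field, the Steinberg relations (the additivity of root groups, the commutator formulas — vacuous in rank one beyond the single torus/Weyl interaction — and the action of $\widetilde{w}_\alpha(t)$ on root groups and on the torus) constitute a presentation. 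Since $A_1$ has no pair of non-proportional roots, the Chevalley commutator relations reduce to relation (3) alone, and the remaining relations (1), (2), (4) are exactly the torus multiplicativity, the Weyl action on the torus, and the Weyl action on root groups, so the list is complete. Thus the proof is: check (1)--(4) by matrix multiplication, note that $\{\chi_{\pm\alpha}(u)\}$ generate by elimination, and cite \cite{St} for the assertion that nothing further is needed.
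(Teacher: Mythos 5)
Your overall strategy---verify the four relations by direct matrix multiplication, obtain generation from the elementary matrices by Gaussian elimination, and appeal to Steinberg's presentation theorem for the claim that the relations are defining---is the natural one, and the paper itself offers nothing beyond a citation of \cite{St} for this lemma. Relations (1), (2) and (3) do check out exactly as you describe.

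However, the one computation you defer with ``which I expect to land on'' is precisely the one that does not land there. With $\widetilde{w}_\alpha(t)=\left(\begin{smallmatrix}0&t\\-t^{-1}&0\end{smallmatrix}\right)$, $\widetilde{w}_\alpha(t)^{-1}=\left(\begin{smallmatrix}0&-t\\t^{-1}&0\end{smallmatrix}\right)$ and $\chi_\alpha(u)=\left(\begin{smallmatrix}1&u\\0&1\end{smallmatrix}\right)$ one finds
\begin{equation*}
\widetilde{w}_\alpha(t)\,\chi_\alpha(u)\,\widetilde{w}_\alpha(t)^{-1}
=\begin{pmatrix}0&t\\-t^{-1}&-t^{-1}u\end{pmatrix}\begin{pmatrix}0&-t\\t^{-1}&0\end{pmatrix}
=\begin{pmatrix}1&0\\-t^{-2}u&1\end{pmatrix}=\chi_{-\alpha}(-t^{-2}u),
\end{equation*}
not $\chi_{-\alpha}(-t^{2}u)$. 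This is consistent with Steinberg's Lemma 20, where the exponent is $-\langle\alpha,\alpha\rangle=-2$ and the sign is $-1$; it is also what makes the expanded identity \eqref{sl2rel} fail as printed (its two sides have lower-left entries $-t^{-2}u$ and $-t^{2}u$ respectively), whereas the rank-one relation \eqref{E-rank1} of Section~\ref{Sec Iwasawa}, which conjugates by $\chi_{-\alpha}(t)$ directly, is stated correctly. So relation (4) of the lemma is off by $t\mapsto t^{-1}$ relative to the explicit matrices defined just above it, and a proof that claims to verify it by matrix multiplication cannot succeed as written. The fix is either to replace $-t^{2}u$ by $-t^{-2}u$ in (4) (and correspondingly in \eqref{sl2rel}), or to note explicitly that a different normalization of $\widetilde{w}_\alpha(t)$ is being used. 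The remainder of your argument---generation via row reduction, with $h_\alpha(t)$ and $\widetilde{w}_\alpha(t)$ being words in the $\chi_{\pm\alpha}$, and completeness of the relation set in rank one per \cite{St}---is sound.
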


%(a) {\it Exponential rule}
%$$\chi_{\alpha}(u+v)=\chi_{\alpha}(u)\chi_{\alpha}(v),\ u,v\in \mathbb{R}.$$

%(b) $\widetilde{w}_{\alpha}$ {\it action on root groups}
%$$\widetilde{w}_{\alpha}(t)\chi_{\alpha}(u)\widetilde{w}_{\alpha}(t)^{-1}=\chi_{-\alpha}(-t^2u), \ u,t\in \mathbb{R}^{\ast}.$$

%(c) {\it Commutativity in $H=\langle h_{\alpha}(u)\mid u\in \R^{\ast}\rangle$}
%$$h_{\alpha}(uv)=h_{\alpha}(u)h_{\alpha}(v),\ u,v\in \mathbb{R}^{\ast},$$
%$$[h_{\alpha}(u),h_{\alpha}(v)]=1,\ u,v\in \mathbb{R}^{\ast}.$$

%(d) {\it Torus action on root groups}
%\begin{eqnarray}
%h_{\alpha}(u)\chi_{\alpha}(v)h_{\alpha}(u)^{-1}&=&\chi_{\alpha}(vu^{\langle\alpha,\alpha^{\vee}\rangle})=\chi_{\alpha}(vu^2), u\in \mathbb{R}^{\ast}, v\in \mathbb{R}\;,
%\nonumber\\
%h_{\alpha}(u)\chi_{-\alpha}(v)h_{\alpha}(u)^{-1}&=&\chi_{-\alpha}(vu^{\langle-\alpha,\alpha^{\vee}\rangle})=\chi_{-\alpha}(vu^{-2}), u\in \mathbb{R}^{\ast}, v\in \mathbb{R}\;.
%\nonumber
%\end{eqnarray}

%(e) $\widetilde{w}_{\alpha}$ {\it action on the torus}
%$$\widetilde{w}_{\alpha}h_{\alpha}(u)\widetilde{w}_{\alpha}^{-1}=h_{\alpha}(u^{-1}), \ u\in \mathbb{R}^{\ast}.$$
%\end{lemma} 

%Note that the defining relations are (a) (b) (c), 
%while (d) (e) are consequences of these relations.
Expanding the last relation in terms of $\chi$'s, we get
\begin{equation}\label{sl2rel}
\chi_{-\alpha}(-t^{-1})\chi_{\alpha}(u)\chi_{-\alpha}(t^{-1})= \chi_{\alpha}(-t) \chi_{-\alpha}(-t^2u)\chi_{\alpha}(t).
\end{equation}
We note that this relation is required for $\SL_2(\mathbb{R})$, but for higher rank groups it 
is a consequence of the other Steinberg relations ([St2]).

We can now give an analogous the presentation for $\SL_2(\mathbb{Z})$. 
%\begin{lemma}
% $SL_2(\mathbb{Z})$ is generated by
%$$T=\chi_{\alpha}(1) \qquad \text { and } \qquad S=\chi_{\alpha}(1)\chi_{-\alpha}(-1)\chi_{\alpha}(1)=\widetilde{w}_{\alpha}$$
%or by
%$$\chi_{\alpha}(1)\quad \text { and } \quad \chi_{-\alpha}(1)\;.$$
%\end{lemma}
%
The main difference in the defining relations is that  $\mathbb{Z}^{\ast}=\{\pm 1\}$, which causes %the torus action on the root group (relation (4) above) and 
the  $\widetilde{w}_{\alpha}$  action on the torus (relation (2) above) to be trivial.  
\begin{proposition} 
%The following are defining relations for
 $\SL_2(\mathbb{Z})$ is generated by $h_\alpha(-1)$ and $\chi_{\pm\alpha}(n)$ for $n\in\Z$.
 Defining relations are
\begin{align}
%(i) {\it Exponential rule}
\chi_{\pm\alpha}(n)&=\chi_{\pm\alpha}(1)^n,\\
%
%(ii) $\widetilde{w}_{\alpha}$ {\it action on root group}
\widetilde{w}_{\alpha}\chi_{\alpha}(\pm 1)\widetilde{w}_{\alpha}^{-1}&=\chi_{-\alpha}(\mp 1),\\
%
%(iii) {\it Commutativity in $H=\langle h_{\alpha}(\pm 1)\rangle$}
h_{\alpha}(-1)^2&=1.
\end{align}
\end{proposition}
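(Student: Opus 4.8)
The plan is to derive the presentation of $\SL_2(\Z)$ from the presentation of $\SL_2(\R)$ in Lemma~\ref{relns} by specializing the ring from $\R$ to $\Z$, together with the classical fact (recalled above) that $\SL_2(\Z)$ is generated by $S$ and $T$. First I would check that the proposed generators are the Steinberg generators of $\SL_2(\Z)$ in disguise: from the matrix formulas, $\chi_\alpha(1)=T$, $\chi_{-\alpha}(1)=\left(\begin{smallmatrix}1&0\\1&1\end{smallmatrix}\right)$, $\widetilde w_\alpha=S$, and $h_\alpha(-1)=\left(\begin{smallmatrix}-1&0\\0&-1\end{smallmatrix}\right)=-\Id$. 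Since $\chi_{-\alpha}(1)=S^{-1}TS\cdot(\text{correction})$ — more precisely $\widetilde w_\alpha\chi_\alpha(-1)\widetilde w_\alpha^{-1}=\chi_{-\alpha}(1)$ by relation (8) — the subgroup generated by $h_\alpha(-1)$ and $\chi_{\pm\alpha}(n)$ contains $T$ and $S$, hence equals $\SL_2(\Z)$; conversely all these matrices have integer entries and determinant $1$, so they lie in $\SL_2(\Z)$. This settles the generation claim.

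Next I would verify that relations (6)--(8) hold in $\SL_2(\Z)$: (6) is immediate from relation (3) of Lemma~\ref{relns} (additivity of $\chi_{\pm\alpha}$), (7) $h_\alpha(-1)^2=h_\alpha(1)=\Id$ follows from relation (1), and (8) is the specialization of relation (4) with $t=1$, using $-t^2u=-u$ and $u=\pm1$. The substantive direction is completeness: I must show every relation among these generators in $\SL_2(\Z)$ is a consequence of (6)--(8). The cleanest route is to start from the known presentation $\SL_2(\Z)=\langle a,b\mid a^6=b^4=1,\ a^3=b^2\rangle$ with $a=TS$, $b=S$, translate it into the new generators via $b=\widetilde w_\alpha$ and $T=\chi_\alpha(1)$, and show that the relators $a^6$, $b^4$, $a^3b^{-2}$ are all derivable from (6)--(8). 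For instance $b^4=\widetilde w_\alpha^4$: using (8) twice one computes $\widetilde w_\alpha^2=\widetilde w_\alpha\chi_\alpha(1)\chi_{-\alpha}(-1)\chi_\alpha(1)\widetilde w_\alpha^{-1}\cdot\widetilde w_\alpha^2$... more directly, $\widetilde w_\alpha^2$ acts on the root groups by $\alpha\mapsto-\alpha\mapsto\alpha$ with a sign, and one identifies $\widetilde w_\alpha^2=h_\alpha(-1)$ from the matrix identity $S^2=-\Id$; relation (7) then gives $\widetilde w_\alpha^4=h_\alpha(-1)^2=1$. Similarly $a^3=b^2$ becomes $(\chi_\alpha(1)\widetilde w_\alpha)^3=\widetilde w_\alpha^2=h_\alpha(-1)$, which should reduce to a finite calculation using (6) and (8) to move $\widetilde w_\alpha$'s past $\chi$'s; the key intermediate fact is that $\chi_\alpha(1)\widetilde w_\alpha\chi_\alpha(1)=\widetilde w_\alpha(1)\chi_\alpha(1)$ matches the definition of $\widetilde w_\alpha(t)$ at $t=1$.

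The main obstacle I expect is the completeness argument, i.e.\ proving no relations are missing. The risk is that specializing the group presentation of Lemma~\ref{relns} to $t=u\in\Z^*=\{\pm1\}$ might in principle lose information or, conversely, that (6)--(8) might be strictly weaker than the two-generator presentation. I would handle this by working with the abstract group $\Gamma$ defined by generators and relations (6)--(8), constructing an explicit surjection $\Gamma\twoheadrightarrow\SL_2(\Z)$ from the matrix realization, and then exhibiting an inverse by rewriting every element of $\Gamma$ in a normal form controlled by the known structure of $\SL_2(\Z)$ — concretely, using (6)--(8) to reduce any word to the form $h_\alpha(-1)^\epsilon\cdot(\text{word in }\chi_\alpha(1)\text{ and }\widetilde w_\alpha)$ and then invoking the $\langle a,b\mid a^6=b^4=1,a^3=b^2\rangle$ presentation, after checking $a^6=1$ follows from $a^3=b^2$ and $b^4=1$. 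Since $\widetilde w_\alpha$ and $h_\alpha(-1)$ generate a finite (quaternion, order $8$) subgroup once (7) and $\widetilde w_\alpha^2=h_\alpha(-1)$ are known, the bookkeeping is finite, and the only delicate point is confirming that relation (8) — the $\widetilde w_\alpha$-conjugation on root groups — together with additivity (6) suffices to replace the full $\SL_2(\R)$ relation~\eqref{sl2rel} after specialization, which it does precisely because over $\Z$ the toral parameter $t$ is forced to be a unit and the $h_\alpha$-conjugation relation (2) of Lemma~\ref{relns} degenerates to the trivial action, as noted in the remark preceding the proposition.
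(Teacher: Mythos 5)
The paper states this proposition without any proof, so there is nothing of the authors' to compare against; your proposal must stand on its own. The easy half is fine: the generators are integral matrices of determinant one whose span contains $T$ and $S$, and relations (6)--(8) hold in $\SL_2(\Z)$ by specializing Lemma~\ref{relns}. Your identification of completeness as the substantive issue, and your plan to prove it by comparison with a known two-generator presentation of $\SL_2(\Z)$, is the right (essentially the only) strategy.

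The gap is at the pivot of that completeness argument. You need $\widetilde{w}_\alpha^2=h_\alpha(-1)$ to hold in the \emph{abstract} group $\Gamma$ presented by (6)--(8), and you justify it by the matrix identity $S^2=-\Id$. That identity holds in $\SL_2(\Z)$, but in the completeness direction you must derive it from the listed relations; quoting the matrix computation is circular. Nor is the point cosmetic: $h_\alpha(-1)^2=1$ is the only listed relation mentioning $h_\alpha(-1)$, so if $h_\alpha(-1)$ were an independent generator, $\Gamma$ would be a free product $\Gamma_0\ast\Z/2$, which has trivial center and therefore cannot be $\SL_2(\Z)$. The repair is to invoke the \emph{definition} $h_\alpha(-1)=\widetilde{w}_\alpha^{-1}\widetilde{w}_\alpha(-1)$ from Section~\ref{S-action-sl2} and check, using only additivity (6), that the word $\widetilde{w}_\alpha(-1)=\chi_\alpha(-1)\chi_{-\alpha}(1)\chi_\alpha(-1)$ is the formal inverse of $\widetilde{w}_\alpha=\chi_\alpha(1)\chi_{-\alpha}(-1)\chi_\alpha(1)$; then $h_\alpha(-1)=\widetilde{w}_\alpha^{-2}$ identically, relation (8) is literally $\widetilde{w}_\alpha^4=1$, and $h_\alpha(-1)=\widetilde{w}_\alpha^2$ follows --- exactly the remark after the proposition. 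Two further steps are asserted but not carried out. First, the reduction of $(\chi_\alpha(1)\widetilde{w}_\alpha)^3=\widetilde{w}_\alpha^2$ to (6)--(8) is the actual content of the completeness proof and cannot be left as ``a finite calculation''; note in particular that relation (7) only governs $\widetilde{w}_\alpha$-conjugation of $\chi_\alpha(\pm1)$, not of $\chi_{-\alpha}(\pm1)$, and it is the latter you need in order to push $\widetilde{w}_\alpha$ past the $\chi_{-\alpha}$ factors occurring inside $\widetilde{w}_\alpha$ itself. Second, with the paper's convention $S=\left(\begin{smallmatrix}0&1\\-1&0\end{smallmatrix}\right)$ one computes $(TS)^3=+\Id\neq S^2$, so the identification $a=TS$, $b=S$ does not actually satisfy $a^3=b^2$; the comparison presentation must first be set up with the correct sign (e.g.\ $a=T^{-1}S$, which has order $6$ and satisfies $(T^{-1}S)^3=S^2$) before your translation of the relators can go through.
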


It follows that $h_{\alpha}(-1)=\widetilde{w}_{\alpha}^2$. We note that for higher rank groups with more than one root, there are additional relations between the root groups, which will not always commute.

We now obtain Iwasawa coordinates for the coset space $\SO(2)\backslash \SL_2(\mathbb{R})$ in terms of these Chevalley generators. 
%The following  is well known.
%\begin{theorem} 
We use the Iwasawa decomposition: every $g\in \SL_2(\mathbb{R})$ has a unique representation as $g=khu$, where
\begin{align*}
k\in K&={\SO}(2),\qquad% = \left\{ \left(\begin{matrix}\cos\theta&-\sin\theta\\ \sin\theta&\cos\theta\\ \end{matrix}\right)\mid 0\leq \theta<2\pi\right\},$$
h\in H^+=\left\{h_\alpha(t) \mid t\in \R, t>0\right\},\qquad
u\in U=X_{\alpha}.
\end{align*}
%\end{theorem}
%
%To obtain Iwasawa coordinates for the coset space $\SO(2)\backslash \SL_2(\mathbb{R})$, we now consider the right action of $\SL_2(\mathbb{Z})$ on this space. 
%Given $z\in \SO(2)\backslash \SL_2(\mathbb{R})=K\backslash KH^+N$. 
Now $Kg=Kkhu=Khu$, and so the coset $Kg\in K\backslash G$ is uniquely determined by $h=h_\alpha(t^{-1})\in H^+$ and $u=\chi_\alpha(x)\in U$. So we may represent an
element of $\SO(2)\backslash \SL_2(\mathbb{R})$ by its Iwasawa coordinates
\begin{equation}
hu%=\left(\begin{matrix} t & 0\\ 0& t^{-1} \end{matrix}\right)\left(\begin{matrix} 1 & x\\ 0 & 1\end{matrix}\right)
=\left(\begin{matrix} t & tx\\ 0 & t^{-1} \end{matrix}\right).
\label{eq zhu}
\end{equation}
%which is in upper triangular form. Conversely, a matrix in this form can be decomposed into Iwasawa coordinates $hu$.

We next consider the right action of $\SL_2(\mathbb{Z})$ on $\SO(2)\backslash \SL_2(\mathbb{R})$ 
in  Iwasawa coordinates. 
For $z \in \SL_2(\mathbb{Z})$, we compute $Kg\cdot z=Khu\cdot z$ and then try to use the relations from the Steinberg presentation to ensure that the result is again in  Iwasawa coordinates. It suffices to consider $z$ a generator of $\SL_2(\Z)$.
It is easy to see that the action of $S$ on $\SO(2)\backslash \SL_2(\mathbb{R})\cong\mathbb{H}$ has order 2. The element $S$ fixes the coset 
$\SO(2)\cdot \Id$ corresponding to the point $i\in\mathbb{H}$ and has no other fixed points in $\SO(2)\backslash \SL_2(\mathbb{R})$.  The element $T$ acts as a translation on $\mathbb{H}$ and hence has infinite order and acts without fixed points on $\SO(2)\backslash \SL_2(\mathbb{R})$. 
%
%The torus elements $h_{\alpha}(t)$, $t\in\mathbb{R}^{\ast}$ act trivially on $SO_2(\mathbb{R})\backslash SL_2(\mathbb{R})$, hence $S^2=h_{\alpha}(-1)$ acts trivially on $SO_2(\mathbb{R})\backslash SL_2(\mathbb{R})$ and since $S^4=1$, the action of $S$ on $SO_2(\mathbb{R})\backslash SL_2(\mathbb{R})$ has order 2.
%
%
%
%
% We will also need to appeal to the generators of the group $K=\SO(2)$. 

%It suffices to consider $z=S$ and $z=T$. % generating set for $\SL_2(\mathbb{Z})$ to characterize the action on the element $hu$,
%with $h \in H^+$, $u\in N$ and $z\in \SO_2(\R)/\SL_2(\R)$ as in \eqref{eq zhu}.
We note that $T=\chi_\alpha(1)
%=\left(\begin{matrix} 1 & 1\\ 0 & 1\end{matrix}\right)
\in N$ and  $S=\widetilde{w}_\alpha
%=\left(\begin{matrix} 0 & 1\\ -1 & 0\end{matrix}\right)
\in K$, since $\det(S)=1$ and $SS^T=S^TS={\rm Id}$.
%$h=\left(\begin{matrix} t & 0\\ 0& t^{-1} \end{matrix}\right)\in H^+$, 
%$u=\left(\begin{matrix} 1 & x\\ 0 & 1\end{matrix}\right)\in U_{\alpha}$. We have
%$$hu=\left(\begin{matrix} t & 0\\ 0& t^{-1} \end{matrix}\right)
%\left(\begin{matrix} 1 & x\\ 0 & 1\end{matrix}\right)=
%\left(\begin{matrix} t & tx\\ 0 & t^{-1} \end{matrix}\right).$$
The right action of $T$
%=\left(\begin{matrix} 1 & 1\\ 0 & 1\end{matrix}\right)$ 
is
$$Khu\cdot T
%=K\left(\begin{matrix} t & tx\\ 0& t^{-1} \end{matrix}\right)\left(\begin{matrix} 1 & 1\\ 0 & 1\end{matrix}\right)
=K\left(\begin{matrix} t & t(x+1)\\ 0 & t^{-1} \end{matrix}\right)
=Kh_\alpha(t^{-1})\chi_\alpha(x+1),$$
%=\left(\begin{matrix} t & 0\\ 0& t^{-1} \end{matrix}\right)\left(\begin{matrix} 1 & x+1\\ 0 & 1\end{matrix}\right)\;,$$
which is in Iwasawa form. The right action of $S$
%=\left(\begin{matrix} 0 & 1\\ -1 & 0\end{matrix}\right)$ 
is
$$Khu\cdot S%=\left(\begin{matrix} t & tx\\ 0& t^{-1} \end{matrix}\right)\left(\begin{matrix} 0 & 1\\ -1 & 0\end{matrix}\right)
=K\left(\begin{matrix} -tx & t\\ -t^{-1} & 0 \end{matrix}\right),$$
which is {\it not} in Iwasawa form. So we need a \emph{compensating element} $\kappa\in \SO(2)$ such that $\kappa hu\cdot S$ is upper triangular. 
\begin{proposition}\label{iwasawa}
A compensating element for the action of $S$ on $\SO(2)\backslash\SL_2(\R)$ is
$$\kappa=\dfrac{1}{\sqrt{t^2x^2+t^{-2}}}\left(\begin{matrix} -tx & -t^{-1}\\ t^{-1} & -tx \end{matrix}\right).$$
In Iwasawa coordinates the action is
$$Kh_\alpha(t^{-1})\chi_\alpha(x)\cdot S =Kh_\alpha\left(1/\sqrt{t^2x^2+t^{-2}}\right)\chi_\alpha\left( \dfrac{-t^2x }{{t^2x^2+t^{-2}}} \right).$$
%
%\left(\begin{matrix} {\sqrt{t^2x^2+t^{-2}}}  & \dfrac{-t^2x }{\sqrt{t^2x^2+t^{-2}}}      \\ 0 & \dfrac{1}{\sqrt{t^2x^2+t^{-2}}} \end{matrix}\right)=
%  \left(\begin{matrix} {\sqrt{t^2x^2+t^{-2}}}  & 0     \\ 0 & \dfrac{1}{\sqrt{t^2x^2+t^{-2}}} \end{matrix}\right) \left(\begin{matrix} 1 & \dfrac{-t^2x }{{t^2x^2+t^{-2}}}      \\   0 & 1\end{matrix}\right)$$
%which is again in Iwasawa coordinates. 
\end{proposition}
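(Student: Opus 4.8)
The plan is to verify both claims by direct matrix computation, since everything has already been reduced to explicit $2\times 2$ matrices. First I would start from the product computed just above the proposition,
$$
Khu\cdot S = K\begin{pmatrix} -tx & t\\ -t^{-1} & 0\end{pmatrix},
$$
and look for $\kappa\in\SO(2)$ making $\kappa\begin{pmatrix} -tx & t\\ -t^{-1} & 0\end{pmatrix}$ upper triangular. Writing $\kappa=\begin{pmatrix} c & -s\\ s & c\end{pmatrix}$ with $c^2+s^2=1$, the lower-left entry of the product is $-scx\,t - c\,t^{-1}$... more carefully, I would just impose that the $(2,1)$ entry vanishes: this forces $s(-tx)+c(-t^{-1})=0$, i.e. $(s,c)$ is proportional to $(-t^{-1}, tx)$ up to sign, and normalizing to $c^2+s^2=1$ gives exactly the stated $\kappa$ (the overall sign is pinned down by requiring the resulting $H^+$-entry to be positive, as the Iwasawa decomposition demands). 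So the choice of $\kappa$ is not really something to be ``guessed'': it is forced, and the proof is the one-line verification that the displayed $\kappa$ indeed lies in $\SO(2)$ (check $\det=1$ and orthogonality, both immediate from $(-tx)^2+(t^{-1})^2 = t^2x^2+t^{-2}$) and kills the $(2,1)$ entry.

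Next I would compute the product $\kappa\,hu\cdot S$ in full. Using $hu\cdot S = \begin{pmatrix} -tx & t\\ -t^{-1} & 0\end{pmatrix}$ and the given $\kappa$, the result is
$$
\frac{1}{\sqrt{t^2x^2+t^{-2}}}\begin{pmatrix} t^2x^2+t^{-2} & -t^2x\\ 0 & 1\end{pmatrix}
= \begin{pmatrix} \sqrt{t^2x^2+t^{-2}} & \dfrac{-t^2x}{\sqrt{t^2x^2+t^{-2}}}\\[1ex] 0 & \dfrac{1}{\sqrt{t^2x^2+t^{-2}}}\end{pmatrix}.
$$
Now I compare this with the Iwasawa normal form~\eqref{eq zhu}, namely $h_\alpha(t'^{-1})\chi_\alpha(x') = \begin{pmatrix} t' & t'x'\\ 0 & t'^{-1}\end{pmatrix}$. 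Matching the diagonal gives $t' = \sqrt{t^2x^2+t^{-2}}$, hence $h_\alpha(t'^{-1}) = h_\alpha\!\left(1/\sqrt{t^2x^2+t^{-2}}\right)$; matching the $(1,2)$ entry gives $t'x' = -t^2x/\sqrt{t^2x^2+t^{-2}}$, so $x' = -t^2x/(t^2x^2+t^{-2})$, which is precisely the stated value. This establishes the formula for the action in Iwasawa coordinates.

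There is essentially no deep obstacle here; the only genuine point requiring care is the positivity/sign bookkeeping. One must check that $t' = \sqrt{t^2x^2+t^{-2}} > 0$ so that $h_\alpha(t'^{-1})$ legitimately lies in $H^+$, and that the sign of $\kappa$ was chosen consistently with this — had we taken $-\kappa$ (also in $\SO(2)$), we would have landed on $-hu\cdot S$, giving a negative diagonal and hence no element of $H^+$. Since the Iwasawa decomposition $G=KH^+U$ is unique (as recalled before~\eqref{eq zhu}), once we exhibit one $\kappa\in\SO(2)$ with $\kappa\,hu\cdot S\in H^+U$, the coordinates $(t',x')$ are automatically the correct ones and uniqueness is not an extra burden to prove. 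I would also remark, for context, that $t'^{-1}=\sqrt{t^2x^2+t^{-2}}\,^{-1}$ is nothing but $\ImC(S\cdot z)$ for $z$ the point of $\mathbb{H}$ with coordinates $(t,x)$, which serves as an independent sanity check against the fractional-linear-transformation description of the $S$-action given earlier.
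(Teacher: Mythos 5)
Your proof is correct and follows essentially the same route as the paper: verify $\kappa\in\SO(2)$ and compute $\kappa\,hu\cdot S$ directly, obtaining $\frac{1}{\sqrt{t^2x^2+t^{-2}}}\bigl(\begin{smallmatrix} t^2x^2+t^{-2} & -t^2x\\ 0 & 1\end{smallmatrix}\bigr)$. The extra material you include --- the derivation showing $\kappa$ is forced by the vanishing of the $(2,1)$ entry, the explicit matching against the normal form of equation~(\ref{eq zhu}), and the sign/positivity bookkeeping --- is all sound and merely makes explicit what the paper leaves implicit.
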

\begin{proof}
Note that $\det(\kappa)=1$ and $\kappa\kappa^T=\kappa^T\kappa={\rm Id}$, thus $\kappa\in \SO(2)$. We have
 $$\kappa hu\cdot S=\dfrac{1}{\sqrt{t^2x^2+t^{-2}}}\left(\begin{matrix} -tx & -t^{-1}\\ t^{-1} & -tx \end{matrix}\right)\left(\begin{matrix} -tx & t\\ -t^{-1} & 0 \end{matrix}\right)
 =\dfrac{1}{\sqrt{t^2x^2+t^{-2}}}\left(\begin{matrix} t^2x^2+t^{-2} & -t^2x\\ 0 & 1 \end{matrix}\right)\;,$$
 which is in upper triangular form. %The following lemma shows that the element $ \kappa hu\cdot S$ can also be written in Iwasawa form $h'u'$, $h'\in H^+$, $u'\in N$.  We have
 \end{proof}
% \begin{proposition}\label{iwasawa}
%  $$\kappa hu\cdot S  =\left(\begin{matrix} {\sqrt{t^2x^2+t^{-2}}}  & \dfrac{-t^2x }{\sqrt{t^2x^2+t^{-2}}}      \\ 0 & \dfrac{1}{\sqrt{t^2x^2+t^{-2}}} \end{matrix}\right)=
 % \left(\begin{matrix} {\sqrt{t^2x^2+t^{-2}}}  & 0     \\ 0 & \dfrac{1}{\sqrt{t^2x^2+t^{-2}}} \end{matrix}\right) \left(\begin{matrix} 1 & \dfrac{-t^2x }{{t^2x^2+t^{-2}}}      \\   0 & 1\end{matrix}\right)$$
%which is again in Iwasawa coordinates. 
%\end{proposition}

An example of the right action of $S$ in Iwasawa coordinates is
$$\left(\begin{matrix} 1 & 0\\ 0 & 1\end{matrix}\right)\left(\begin{matrix} 1 & 1\\ 0 & 1\end{matrix}\right)\mapsto \left(\begin{matrix} \sqrt{2} & 0\\ 0 & \tfrac{1}{\sqrt{2}}\end{matrix}\right)\left(\begin{matrix} 1 & -\tfrac{1}{2}\\ 0 & 1\end{matrix}\right).$$

%The `Iwasawa-restoring' element $\kappa$ can be written in terms of the generators of $SO_2(\mathbb{R})$, as we will discuss later. 
The physical significance of the compensating element $\kappa$ arises from its role as
 the element that 
restores the Borel gauge in the dimensional reduction of supergravity on 
a 2-torus $T^2$.

We now  write $\kappa$ in terms of the Steinberg generators.
%following 
%upper triangular, lower triangular, anti-diagonal, and diagonal
%matrices, respectively, in $SL_2(\mathbb{R})$. 
The computer algebra system Magma (\cite{BCFS}) was used to find this formula, which can easily be verified by direct matrix multiplication.
\begin{proposition}
$\kappa=\chi_{\alpha}\left(-t^2x\right)h_{\alpha}\left(-1/\sqrt{t^4x^2+1}\right)\widetilde{w}_{\alpha}\chi_{\alpha}\left(-t^2x\right)$.
\end{proposition}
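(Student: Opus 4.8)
The plan is to verify the claimed factorization of $\kappa$ directly by multiplying out the right-hand side as $2\times 2$ matrices, since all the Steinberg generators involved have been given explicit matrix forms earlier in the section. First I would substitute the matrix expressions
\[
\chi_\alpha(-t^2x)=\begin{pmatrix} 1 & -t^2x\\ 0 & 1\end{pmatrix},\qquad
h_\alpha\!\left(-1/\sqrt{t^4x^2+1}\right)=\begin{pmatrix} -\sqrt{t^4x^2+1} & 0\\ 0 & -1/\sqrt{t^4x^2+1}\end{pmatrix},\qquad
\widetilde w_\alpha=\begin{pmatrix} 0 & 1\\ -1 & 0\end{pmatrix},
\]
and multiply the four factors left to right. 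The product of the last two $\chi_\alpha$ and $\widetilde w_\alpha$ factors gives a simple matrix, and then one is left with two more multiplications. I expect the result to collapse to
\[
\frac{1}{\sqrt{t^4x^2+1}}\begin{pmatrix} -t^2x & -1\\ 1 & -t^2x\end{pmatrix},
\]
and the final step is to observe that $\sqrt{t^4x^2+1}=t\sqrt{t^2x^2+t^{-2}}$, so that after pulling the factor of $t$ through the matrix entries this equals the $\kappa$ of Proposition~\ref{iwasawa}. (Equivalently, one can rescale rows/columns: $\tfrac1t\cdot(-t^2x)=-tx$, $\tfrac1t\cdot(-1)=-t^{-1}$, $\tfrac1t\cdot 1=t^{-1}$, matching the stated entries exactly.)

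The only subtlety — and hence the main obstacle, though it is a mild one — is bookkeeping of the square-root normalizations: one must keep $t>0$ in force (from the Iwasawa condition $h\in H^+$) so that $\sqrt{t^4x^2+1}=t\sqrt{t^2x^2+t^{-2}}$ rather than $|t|\sqrt{\cdots}$, and one must track the sign coming from the $h_\alpha$ argument being negative. As a consistency check I would also confirm $\det\kappa=1$ and $\kappa\kappa^T=\mathrm{Id}$ from this factored form, which is immediate since each Steinberg generator on the right either lies in $\SO(2)$ (namely $\widetilde w_\alpha$) or, together with the diagonal and unipotent factors, recombines to an orthogonal matrix — but in practice the direct multiplication already exhibits $\kappa$ in the exact form of Proposition~\ref{iwasawa}, where orthogonality was checked. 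Since the paper explicitly notes the formula "can easily be verified by direct matrix multiplication," the proof I would write is simply that computation presented cleanly, with the normalization identity highlighted.

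Finally, I would remark for the reader that the structural reason this factorization works is that $\kappa$ must conjugate $\widetilde w_\alpha$-twisted data back into the Borel, so it naturally decomposes along the Bruhat-type pattern $U\cdot H\cdot \widetilde w_\alpha\cdot U$; the leftmost and rightmost $\chi_\alpha(-t^2x)$ are not independent accidents but reflect the $\widetilde w_\alpha$-relation $\widetilde w_\alpha\chi_\alpha(u)\widetilde w_\alpha^{-1}=\chi_{-\alpha}(-u)$ from Lemma~\ref{relns}. This observation is what makes the pattern generalize to higher rank in the next section, so it is worth stating even though it is not logically needed for the $\SL_2$ verification itself.
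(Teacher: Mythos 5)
Your proposal is correct and coincides with the paper's own (implicit) proof: the paper simply notes the identity "can easily be verified by direct matrix multiplication," and your computation carries that out, landing on $\frac{1}{\sqrt{t^4x^2+1}}\left(\begin{smallmatrix} -t^2x & -1\\ 1 & -t^2x\end{smallmatrix}\right)$ and correctly using $\sqrt{t^4x^2+1}=t\sqrt{t^2x^2+t^{-2}}$ (valid since $t>0$) to match the $\kappa$ of Proposition~\ref{iwasawa}. Your closing structural remark about the $U\cdot H\cdot\widetilde w_\alpha\cdot U$ pattern is accurate and is indeed the form that generalizes in Section~\ref{Sec Iwasawa}.
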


We finally  remark that  the general linear group can be similarly analyzed.  
In this case, the Steinberg generators for $\GL_2(\mathbb{R})$ are (\cite{CMT})
$$\chi_{\alpha}(x)=\left(\begin{matrix} 1 & x\\ 0 & 1\end{matrix}\right)\quad\text { and }\quad \chi_{-\alpha}(x)=\left(\begin{matrix} 1 & 0\\ x & 1\end{matrix}\right),$$
for $x\in\mathbb{R}$, and also two types of toral generator
$$h_1(t)=\left(\begin{matrix} t & 0\\ 0 & 1\end{matrix}\right)\quad\text { and }\quad h_2(t)=\left(\begin{matrix} 1 & 0\\ 0 & t\end{matrix}\right),$$
for $t\in \mathbb{R}^*$.

%%%%%%%%%%%%%%%%%%%%%%%%%
 \section{Iwasawa coordinates in a finite-dimensional Lie group}
 %%%%%%%%%%%%%%%%%%%%%%%%%%%%
 \label{Sec Iwasawa}
Let $G=G(\mathbb{R})$ be a split, simply connected, semisimple Lie group of rank $n$.  
Let $K$ be the maximal compact subgroup of $G$. 
In this section we present our method, using the Steinberg 
presentation, for computing Iwasawa coordinates of $K\backslash G$.  
We also compute the action of the generators of the extended Weyl group on $K\backslash G$, since these are the most interesting case 
as we saw in the last section.  
When $K\,\backslash G$ is 
a scalar coset for a supergravity theory, we apply this method in later sections 
to find the action of the integral form $G(\Z)$ of $G$ on $K\,\backslash G$.

Let $\Phi$ be the root system of $G$ and let $\Delta$ be a system of simple roots.
The Chevalley generators are $\chi_\alpha(x)$ for $\alpha\in\Phi$ and $x\in\R$.
We also define supplementary generators
$$\widetilde{w}_\beta(t) = \chi_{\beta}(t)\chi_{-\beta}(-t^{-1})\chi_{\beta}(t),\quad \widetilde{w}_\beta=\widetilde{w}_\beta(1),\quad h_\beta(t)=\widetilde{w}_\beta^{-1}\widetilde{w}_\beta(t)$$
for $\beta\in\Delta$ a simple root and $t\in\mathbb{R}^\times$. % For convenience, we will write $S_\beta=\widetilde{w}_\beta(1)$. 
The precise Steinberg relations we are using are from  \cite{CMT} based on the method of Carter (\cite{Carter}):
%signs given by the formula of \cite{Rylands,Fresneletal}.
\newcommand{\tens}{\otimes}
\begin{align}
  h_\beta(t)h_\beta(u) &= h_\beta(tu), \label{E-ha1eq}\\
  h_\alpha(t)h_\beta(u) &= h_\beta(u)h_\alpha(t), \label{E-ha1eq}\\
%  h_\beta(t)     &= x_{\beta}(-1)x_{-\beta}(1)x_{\beta}(-1) \cdot x_{\beta}(t)x_{-\beta}(-t^{-1})x_{\beta}(t),\\
  \widetilde{w}_\alpha^{-1} h_\beta(t)\widetilde{w}_\alpha &= \prod_{\gamma\in\Delta} h_{\gamma}(t^{a_\gamma}) \qquad\text{where $w_\alpha\beta=\sum_{\gamma\in\Delta} a_\gamma \gamma$},\\
  \chi_\alpha(a) \chi_{\alpha}(b)  &= \chi_\alpha(a+b), \label{E-xcommeqs} \\
  \chi_{\beta}(-b)\chi_\alpha(a){\chi_{\beta}(b)} &= \chi_\alpha(a) \prod_{i,j>0} \chi_{i\alpha+j\beta}(C_{ij\alpha\beta}a^ib^j), \label{E-xswap}\\
  \chi_{-\alpha}(-t)\chi_\alpha(a){\chi_{-\alpha}(t)} &= \chi_{-\alpha}(-t^2a)^{\chi_\alpha(t^{-1})},  \label{E-rank1}\\
  \wt_\alpha^{-1}\chi_\beta(a)\wt_\alpha &=\chi_{w_\alpha\beta}(\eta_{\alpha\beta}a).
\end{align}
The structure constants $C_{ij\alpha\beta}$ and $\eta_{\alpha\beta}$ are as defined in \cite{CMT}, with all extraspecial structure constants equal to $(+1)$.
Note that equation~(\ref{E-rank1}) does not quite agree with the relation (\ref{sl2rel}) given in the last section -- this is a consequence of the fact that groups act on the right in Magma.

Let $\Phi^+=\{\alpha_1,\dots,\alpha_N\}$ denote the positive roots. We list the elements of $\Phi^+$ an  order consistent with height, that is, $\text{ht}(\alpha_i)< \text{ht}(\alpha_j)$ implies $i<j$. In particular, the simple roots are $\alpha_1,\dots,\alpha_n$.
Write $\kappa_i$ for $\kappa_{\alpha_i}$, $h_i$ for $h_{\alpha_i}$, $\chi_r$ for $\chi_{\alpha_r}$, and $\wt_i$ for $\wt_{\alpha_i}$.
Define the root subgroups $X_r=X_{\alpha_r}=\{\chi_r(a)\mid a\in \mathbb{R}\}$.
We define the unipotent subgroup $U=\langle X_\alpha\mid\alpha\in\Phi^+\rangle$, toral subgroup $H=\langle h_\alpha(t)\mid\alpha\in\Delta,t\in\R^*\rangle$, and 
$H^+=\langle h_\alpha(t)\mid\alpha\in\Delta,t\in\R,t>0\rangle$.
Let $\theta$ be a Cartan involution on $G$ defined by $\theta(\chi_\alpha(x))=\chi_{-\alpha}(-1/x)$ for all roots $\alpha\in\Phi$.  Then the maximal compact subgroup $K$ of $G$ is the stabilizer of $\theta$.  Then we get an Iwasawa decomposition $G=KH^+U$ as in the previous section, and every coset in $K\backslash G$ has Iwasawa coordinates
$Khu$ for unique $h\in H^+$ and $u\in U$.
We now have
\begin{lemma} \label{ic}
The Iwasawa coordinates for a coset in $K\backslash G$ are $Khu$ where
$$h=\prod_{j=1}^n h_j(t_j)\quad\text{and} \quad 
u = \prod_{r=1}^N \chi_r(x_r),$$
for some 
$t_j, x_r\in\R,$ with $t_j>0$.
\end{lemma}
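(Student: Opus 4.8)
The plan is to establish Lemma~\ref{ic} as a direct consequence of the Iwasawa decomposition $G = KH^+U$ stated just above, together with explicit parametrizations of $H^+$ and $U$ as products over simple roots and positive roots respectively. The content of the lemma is essentially a normal-form statement: $H^+$ is an $n$-dimensional abelian subgroup with coordinates $(t_1,\dots,t_n)$, $t_j>0$, and $U$ is an $N$-dimensional unipotent group with coordinates $(x_1,\dots,x_N)\in\R^N$, and the product decomposition of each factor is unique in the prescribed order.

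First I would treat the toral factor. Since $G$ is split, simply connected and semisimple of rank $n$, the torus $H$ is generated by $h_\alpha(t)$ for $\alpha\in\Delta$, and relations \eqref{E-ha1eq} (both the multiplicativity $h_\beta(t)h_\beta(u)=h_\beta(tu)$ and the commutativity $h_\alpha(t)h_\beta(u)=h_\beta(u)h_\alpha(t)$) show that the map $(\R^\times)^n \to H$, $(t_1,\dots,t_n)\mapsto \prod_{j=1}^n h_j(t_j)$, is a surjective group homomorphism. By simple connectedness the cocharacter lattice is spanned by the $h_{\alpha_j}$ freely, so this homomorphism is an isomorphism; restricting to positive coordinates gives the identity component $H^+\cong(\R_{>0})^n$ with unique coordinates $t_j>0$. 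Thus every $h\in H^+$ is uniquely $\prod_{j=1}^n h_j(t_j)$ with $t_j>0$.

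Next I would treat the unipotent factor. The standard structure theory of Chevalley groups (uniqueness of expression in $U$, e.g. as in \cite{St}) gives that, having fixed the height-compatible ordering $\alpha_1,\dots,\alpha_N$ of $\Phi^+$, the product map $\R^N\to U$, $(x_1,\dots,x_N)\mapsto\prod_{r=1}^N\chi_r(x_r)$, is a bijection: surjectivity follows by repeatedly applying the commutator relation \eqref{E-xswap} to move root group elements into the prescribed order (the height-compatibility ensures that each commutator $\chi_{i\alpha+j\beta}$ produced involves a root of strictly greater height, so the rewriting terminates), and injectivity follows because $U$ has a filtration by the lower central series whose successive quotients are the individual root groups. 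Combining the two factorizations with the Iwasawa decomposition $g=khu$ and the fact that $Kg=Khu$ depends only on $(h,u)$ yields the stated form of the coordinates.

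The main obstacle, and the only point requiring genuine care, is the uniqueness of the ordered product expression in $U$ — i.e., verifying that the rewriting procedure using \eqref{E-xswap} is well-defined and that no collapse can occur. This is where the height-compatible ordering of $\Phi^+$ is essential, and one must check that the structure constants $C_{ij\alpha\beta}$ appearing do not interfere with the filtration argument; but this is classical and can be cited from \cite{St} or \cite{CMT} rather than reproved. Everything else is a routine assembly of the Iwasawa decomposition with these two normal-form statements.
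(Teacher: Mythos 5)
Your argument is correct, and it is essentially the justification the paper itself relies on: the paper states this lemma without proof, as an immediate consequence of the Iwasawa decomposition $G=KH^+U$ together with the standard normal forms $H^+\cong(\R_{>0})^n$ (via simple connectedness) and $U\cong\R^N$ (via unique ordered product over $\Phi^+$, citable from \cite{St} or \cite{CMT}). Your write-up supplies exactly those standard facts, so there is nothing to correct; the only minor remark is that uniqueness of the ordered product in $U$ holds for any fixed ordering of the positive roots, not only height-compatible ones, though the height-compatible choice is what makes the later collection algorithm terminate cleanly.
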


Given a root $\beta\in\Phi$, the reflection $w_\beta$ has a reduced expression as a product of simple reflections $w_{\beta_1}\cdots w_{\beta_\ell}$,
and we define $\wt_\beta:= \widetilde{w}_{\beta_1}\cdots\widetilde{w}_{\beta_\ell}$. Note that this agrees with the previous definition for $\beta$ simple, and that $\wt_\beta$ was shown to be independent of the particular reduced expression used in \cite{CMT}.
Since $G$ is simply connected, for each $\beta\in\Phi$, the group $\langle  \chi_\beta(t), h_\beta(t), \wt_\beta\mid t\in\mathbb{R}^\times\rangle$ is isomorphic to $\SL_2(\mathbb{R})$.
We can therefore use the results of Section~\ref{S-action-sl2} for this special case to 
 finding the element $\kappa_\beta(x)\in K$ that makes $\kappa_\beta(x)\chi_\beta(x)\wt_\beta$ an element of 
$H^+U$, for a given root $\beta \in \Phi$.
\begin{lemma} If $\kappa_\beta(x) := \chi_{\beta}(-x)h_\beta(-1/\sqrt{x^2+1})\wt_\beta \chi_{\beta}(-x)$, then $\kappa_\beta(x)$ is in the maximal compact subgroup $K$, and 
$\kappa_\beta(x)\chi_\beta(x)\wt_\beta$ is in $H^+U$.
\end{lemma}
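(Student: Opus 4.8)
The plan is to reduce everything to the rank-one subgroup $G_\beta := \langle \chi_\beta(t),\, h_\beta(t),\, \wt_\beta \mid t\in\R^\times\rangle$ and to quote Section~\ref{S-action-sl2}. By the remark just before the lemma, $G_\beta\cong\SL_2(\R)$; fix the canonical isomorphism $\phi_\beta\colon\SL_2(\R)\xrightarrow{\sim}G_\beta$ determined by $\phi_\beta(\chi_{\pm\alpha}(u))=\chi_{\pm\beta}(u)$, which then also satisfies $\phi_\beta(h_\alpha(t))=h_\beta(t)$ and, for $\beta$ simple, $\phi_\beta(\wt_\alpha)=\wt_\beta$ (since then $\wt_\beta=\chi_\beta(1)\chi_{-\beta}(-1)\chi_\beta(1)$ lies in $G_\beta$ and is the image of $\wt_\alpha$; the general positive-root case is the same once one notes, as in \cite{CMT}, that $\wt_\beta\in G_\beta$ is the Weyl representative corresponding to $\wt_\alpha$). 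We take $\beta\in\Phi^+$, the case relevant to the Weyl-group action below.

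First I would identify $\kappa_\beta(x)$ as $\phi_\beta$ applied to the compensating element of Section~\ref{S-action-sl2} at $t=1$. Indeed, setting $t=1$ in the Steinberg-generator formula for $\kappa$ gives $\chi_\alpha(-x)h_\alpha(-1/\sqrt{x^2+1})\wt_\alpha\chi_\alpha(-x)$, which by Proposition~\ref{iwasawa} (at $t=1$) is the matrix $\tfrac{1}{\sqrt{x^2+1}}\left(\begin{smallmatrix}-x&-1\\ 1&-x\end{smallmatrix}\right)\in\SO(2)$. Applying $\phi_\beta$ factor by factor gives $\kappa_\beta(x)\in\phi_\beta(\SO(2))$.

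To conclude $\kappa_\beta(x)\in K$ I would check that $\phi_\beta(\SO(2))=K\cap G_\beta$. Since $\theta(\chi_{\pm\beta}(u))=\chi_{\mp\beta}(-1/u)$ and $G_\beta=\langle X_\beta,X_{-\beta}\rangle$, the Cartan involution $\theta$ preserves $G_\beta$, and $\phi_\beta$ carries the Cartan involution of $\SL_2(\R)$ with fixed points $\SO(2)$ to $\theta|_{G_\beta}$; as $K=G^\theta$, this yields $K\cap G_\beta=(G_\beta)^\theta=\phi_\beta(\SO(2))$, hence $\kappa_\beta(x)\in K$. For the $H^+U$ claim, apply $\phi_\beta$ to the rank-one identity $\kappa\,\chi_\alpha(x)\,\wt_\alpha=h_\alpha\!\big(1/\sqrt{x^2+1}\big)\chi_\alpha\!\big(-x/(x^2+1)\big)$ — this is exactly the Iwasawa-coordinates assertion of Proposition~\ref{iwasawa} at $t=1$, recalling $S=\wt_\alpha$ and $h_\alpha(1)=\Id$ — to obtain $\kappa_\beta(x)\chi_\beta(x)\wt_\beta=h_\beta\!\big(1/\sqrt{x^2+1}\big)\chi_\beta\!\big(-x/(x^2+1)\big)$. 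Finally $\chi_\beta(\,\cdot\,)\in X_\beta\subseteq U$ because $\beta\in\Phi^+$, and $h_\beta(s)=\prod_i h_{\alpha_i}(s^{m_i})\in H^+$ for $s>0$ because the coroot $\beta^\vee=\sum_i m_i\alpha_i^\vee$ has $m_i\in\Z_{\ge 0}$; hence the product lies in $H^+U$.

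The matrix bookkeeping here is already done in Section~\ref{S-action-sl2}, so the only genuine point — and the place I would be most careful — is the compatibility $K\cap G_\beta=\phi_\beta(\SO(2))$, i.e.\ that $\theta$ restricts to $G_\beta$ as the standard Cartan involution of $\SL_2(\R)$; this is immediate from the defining formula for $\theta$ on the root groups $X_{\pm\beta}$ together with the fact that they generate $G_\beta$.
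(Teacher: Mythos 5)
Your argument is correct and is exactly the reduction the paper intends: the lemma is stated there without a written proof, immediately after the observation that $\langle \chi_\beta(t), h_\beta(t), \wt_\beta\rangle$ is a copy of $\SL_2(\R)$ to which Propositions 3.3 and 3.5 of Section 3 apply, and your proposal simply carries out that reduction while supplying the details the paper leaves implicit (that the Cartan involution restricts to the standard one on this rank-one subgroup, so its fixed points land in $K$, and that $h_\beta(s)\in H^+$ for $s>0$ and $\chi_\beta(\cdot)\in U$ for $\beta\in\Phi^+$). There is no substantive difference from the paper's approach.
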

%\begin{proof}
%We can prove part (1) in $\SL_2$; for part (2):
%$$\kappa_\alpha(x)\chi_\alpha(x)S_\alpha(1)=
%\chi_{\beta}(-x)h_\beta(1/\sqrt{x^2+1})S_\alpha^2=\chi_{\beta}(-x)
%h_\beta(-1/\sqrt{x^2+1}).$$
%\end{proof}

Given Iwasawa coordinates $hu$ as in Lemma~\ref{ic} and $\alpha_i\in\Delta$, we now find $\kappa\in K$ such that $\kappa hu \wt_i\in H^+U$.
%An arbitrary element of $H^+U$ has the form $b = hu$, where
%$$h=\prod_{j=1}^n h_j(t_j)\quad\text{and} \quad 
%u = \prod_{r=1}^N \chi_r(x_r).$$
First we can rewrite $u= \chi_i(x_i)\tilde{u}$, where
$$\tilde{u} =  \prod_{r=1}^{i-1} \chi_r(x_r)^{\chi_i(x_i)}\cdot \prod_{r=i+1}^N \chi_r(x_r).$$
Applying the Steinberg commutator relations we get
$$\chi_r(x_r)^{\chi_i(x_i)}= \chi_r(x_r)\prod_{\substack{a,b>0\text{ s.t.}\\a\alpha_r+b\alpha_i\in\Phi}} \chi_{a\alpha_r+b\alpha_i}(C_{abri}x_r^ax_i^b),$$
and hence $\tilde{u}\in\prod_{r\ne i} X_r$.
We now rearrange
$$hu\wt_i = h \chi_i(x_i) \tilde{u} \wt_i = 
 \chi_i(x_i)^{h^{-1}}h\tilde{u} \wt_i =  \chi_i(x_i)^{h^{-1}} \wt_i (h\tilde{u})^{\wt_i}.$$
Notice that  $(h\tilde{u})^{\wt_i}\in H^+U$, since $h^{\wt_i}\in H^+$, and $\tilde{u}^{\wt_i}\in(\prod_{r\ne i}X_r)^{\wt_i} =\prod_{r\ne i}X_{w_i\alpha_r}\subseteq U$ because $w_i\alpha_r\in\Phi^+$ when $r\ne i$.
Also
$\chi_i(x_i)^{h^{-1}} = \chi_i\left(c_i x_i\right),$
where $c_i:=\prod_{j=1}^nt_j^{-C_{ij}}$ for $C=(C_{ij})$ the Cartan matrix.
We can now choose $\kappa=\kappa_i(c_ix_i)$ to ensure that $\kappa hu\wt_i=\kappa\chi_i\left(c_i x_i\right)\wt_i \cdot (h\tilde{u})^{\wt_i}$ is indeed in $H^+U$.

We now give an explicit formula for $\kappa hu\wt_i$.
%For convenience write $r_i:=\sqrt{(c_ix_i)^2+1}.$
\begin{proposition} Given $h=\prod_{j=1}^n h_j(t_j)\in H^+$ and $u = \prod_{r=1}^N \chi_r(x_r)\in U$, we have
$$\kappa hu \cdot \wt_i= h^{(i)}u^{(i)},$$
where
\begin{align*}
 h^{(i)} &=h_i(c_i/r_i)h\in H^+, \quad\text{and}\\
 u^{(i)}&=\chi_{i}\left(-\left({c_i}/{r_i}\right)^2x_i\right)
\cdot\prod_{r=1}^{i-1} \chi_{w_i\alpha_r}(\eta_{ri}x_i) \tilde{\chi}_{ri} \cdot \prod_{r=i+1}^N \chi_{w_i\alpha_r}(\eta_{ri}x_i)\in U,
\end{align*}
for $c_i=\prod_{j=1}^nt_j^{-C_{ij}}$, $r_i=\sqrt{(c_ix_i)^2+1}$, and
$$ \tilde{\chi}_{ri} =  
\prod_{\substack{a,b>0\text{ s.t.}\\a\alpha_r+b\alpha_i\in\Phi}}
\chi_{w_i(a\alpha_{r}+b\alpha_i)}(\eta_{a\alpha_{r}+b\alpha_i,\alpha_i}C_{abri}x_r^ax_i^b).$$
\end{proposition}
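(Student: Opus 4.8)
\emph{Proof idea.} The plan is to take the factorisation produced by the reduction in the paragraph preceding the statement,
$$\kappa\,hu\,\wt_i \;=\; \bigl(\kappa\,\chi_i(c_ix_i)\,\wt_i\bigr)\,(h\tilde u)^{\wt_i},\qquad \kappa=\kappa_i(c_ix_i),$$
make each of the two factors explicit, and recombine them into Iwasawa form. Here $c_i=\prod_{j=1}^n t_j^{-C_{ij}}$, $r_i=\sqrt{(c_ix_i)^2+1}$, and $\tilde u$ is the unipotent element obtained from $u$ in the preceding discussion; it is supported on the root groups $X_\gamma$ with $\gamma\in\Phi^+\setminus\{\alpha_i\}$, being a product of the $\chi_r(x_r)$ ($r\ne i$) together with commutator factors $\chi_{a\alpha_r+b\alpha_i}(C_{abri}x_r^ax_i^b)$ supplied by relation~(\ref{E-xswap}). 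For the first factor I would work inside $\langle\chi_i(t),h_i(t),\wt_i\mid t\in\R^\times\rangle\cong\SL_2(\R)$: a one-line $2\times 2$ matrix computation (equivalently, Proposition~\ref{iwasawa} with $t=1$) shows that the element $\kappa_i(a)$ of the preceding Lemma satisfies $\kappa_i(a)\chi_i(a)\wt_i=h_i\bigl(1/\sqrt{a^2+1}\bigr)\chi_i\bigl(-a/(a^2+1)\bigr)$, which at $a=c_ix_i$ becomes $h_i(1/r_i)\chi_i(-c_ix_i/r_i^2)=\chi_i(-c_ix_i)\,h_i(1/r_i)$.

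For the second factor I would split $(h\tilde u)^{\wt_i}=h^{\wt_i}\,\tilde u^{\wt_i}$. For the toral piece, apply the relation $\wt_i^{-1}h_\beta(t)\wt_i=\prod_\gamma h_\gamma(t^{a_\gamma})$, with $w_i\beta=\sum_\gamma a_\gamma\gamma$, to each $h_j(t_j)$: using $w_i\alpha_i=-\alpha_i$ and $w_i\alpha_j=\alpha_j-C_{ij}\alpha_i$ for $j\ne i$, and separating the $j=i$ term (which contributes $h_i(t_i^{-1})$, not $h_i(t_i^{-2})$), the factors collect — thanks to $C_{ii}=2$ — into the identity
$$h^{\wt_i}=h_i(c_i)\,h.$$
For the unipotent piece, since $w_i$ permutes $\Phi^+\setminus\{\alpha_i\}$, conjugation by $\wt_i$ carries each $\chi_\gamma(a)$ occurring in $\tilde u$ to $\chi_{w_i\gamma}(\eta\,a)$ for the structure constant $\eta$ — namely $\eta_{ri}$ for $\chi_r$ and $\eta_{a\alpha_r+b\alpha_i,\alpha_i}$ for the commutator factors — so that $\tilde u^{\wt_i}\in U$ equals precisely the product $\prod_{r<i}\chi_{w_i\alpha_r}(\cdot)\,\tilde\chi_{ri}\cdot\prod_{r>i}\chi_{w_i\alpha_r}(\cdot)$ displayed in the statement.

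Recombining,
$$\kappa\,hu\,\wt_i=\chi_i(-c_ix_i)\,h_i(1/r_i)\,h^{\wt_i}\,\tilde u^{\wt_i}=\chi_i(-c_ix_i)\,h_i(c_i/r_i)\,h\,\tilde u^{\wt_i},$$
and one last toral commutation, moving $\chi_i(-c_ix_i)$ rightward past $h^{(i)}:=h_i(c_i/r_i)h$ (which scales its argument by $c_i/r_i^2$), turns it into $\chi_i\bigl(-(c_i/r_i)^2x_i\bigr)$; hence $\kappa\,hu\,\wt_i=h^{(i)}\,\chi_i\bigl(-(c_i/r_i)^2x_i\bigr)\,\tilde u^{\wt_i}=h^{(i)}u^{(i)}$. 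Finally $h^{(i)}\in H^+$ since $c_i,r_i>0$, and $u^{(i)}\in U$ since every root occurring ($\alpha_i$ and the $w_i\alpha_r$, $w_i(a\alpha_r+b\alpha_i)$ with $r\ne i$) is positive, so this is indeed the Iwasawa form of $\kappa\,hu\,\wt_i$.

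The only genuinely substantive step is the toral identity $h^{\wt_i}=h_i(c_i)h$, where the special action of the simple reflection $w_i$ on $\alpha_i$ enters and which is what makes the toral coordinate $c_i/r_i$ close up; the rest is the systematic bookkeeping of conjugation by $h$, $h_i(1/r_i)$ and $\wt_i$ acting on each root-group factor, and of the structure constants $C$ and $\eta$ — exactly what the computer algebra handles automatically.
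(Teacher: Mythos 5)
Your proposal is correct and follows essentially the same route as the paper's proof: the same decomposition $\kappa\,\chi_i(c_ix_i)\wt_i\cdot(h\tilde{u})^{\wt_i}$, the same toral identity $h^{\wt_i}=h_i(c_i)h$, and the same final commutation of $\chi_i(-c_ix_i)$ past $h^{(i)}$ to produce $\chi_i(-(c_i/r_i)^2x_i)$. The only cosmetic difference is that you package the cancellation $\chi_i(-c_ix_i)\chi_i(c_ix_i)=1$ together with $\wt_i^2=h_i(-1)$ as a rank-one $\SL_2$ computation, which the paper instead carries out inline.
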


\begin{proof} We have
\begin{eqnarray*}
\kappa hu \wt_i= \kappa_{i}(c_ix_i)hu\wt_i 
&=& \chi_{i}(-c_ix_i)h_i(-1/r_i)\wt_i \chi_{i}(-c_ix_i)\cdot
\chi_{i}(c_ix_i) \wt_i \left(h\tilde{u}\right)^{\wt_i}\\
&=& \chi_{i}(-c_ix_i)h_i(-1/r_i)\wt_i^2 \left(h \tilde{u}\right)^{\wt_i}\\
&=& \chi_{i}(-c_ix_i)\left(h_i(-1/r_i) h_i(-1) h^{\wt_i}\right) \tilde{u}^{\wt_i}\\
&=& h^{(i)} u^{(i)}
\end{eqnarray*}
where
$$
h^{(i)} := h_i(-1/r_i) h_i(-1) h^{\wt_i} = h_i(-1/r_i) h_i(-1) h_i(c_i) h=
 h_i(c_i/r_i)h
$$
and
\begin{eqnarray*}
u^{(i)} &:=& \chi_{i}(-c_ix_i)^{h^{(i)}} \tilde{u}^{\wt_i}\\
&=& \chi_{i}\left( (c_i/r_i)^2 \prod_{j=1}^nt_j^{C_{ij}}\cdot(-c_ix_i)\right)
\left(\prod_{r=1}^{i-1} \chi_r(x_r)^{\chi_i(x_i)}\cdot \prod_{r=i+1}^N \chi_r(x_r) \right)^{\wt_i}\\
&=& \chi_{i}\left(-\left({c_i}/{r_i}\right)^2x_i\right)
\cdot\prod_{r=1}^{i-1} \chi_r(x_r)^{\chi_i(x_i)\wt_i}\cdot \prod_{r=i+1}^N \chi_r(x_r)^{\wt_i}\\
&=&\chi_{i}\left(-\left({c_i}/{r_i}\right)^2x_i\right)
\cdot\prod_{r=1}^{i-1} \chi_{w_i\alpha_r}(\eta_{ri}x_i) \tilde{\chi}_{ri} \cdot \prod_{r=i+1}^N \chi_{w_i\alpha_r}(\eta_{ri}x_i).
\end{eqnarray*}
%for $ \tilde{\chi}_{ri}$ as defined in the statement.
% =  
%\prod_{\substack{a,b>0\text{ s.t.}\\a\alpha_r+b\alpha_i\in\Phi}}
%\chi_{w_i(a\alpha_{r}+b\alpha_i)}(\eta_{a\alpha_{r}+b\alpha_i,\alpha_i}C_{abri}x_r^ax_i^b).$$
\end{proof}

We can find the Iwasawa coordinates for the expression given in this proposition using the symbolic collection algorithm in \cite{CMH}. 
These coordinates are easy to store in Magma, but too large to list here.
Instead we give the uncollected expression for the Iwasawa coordinates for the action of generators of the extended Weyl group on supergravity cosets for $n=2,\dots,8$.  
Since each root system is contained in the previous one, we can fix a single root ordering for all of them.  The root indices for each value of $n$ are given in Table~\ref{T-RtIndices}. 
Tables~\ref{T-first} to \ref{T-last} give expressions for $\kappa_i(c_ix_i)hu\wt_i$ for $i=1,\dots,n$ in the supergravity coset space of rank $n$.

\begin{table}{\tiny
$\begin{array}{llllllll|llllllll|llllllll}
\multicolumn{7}{ l }{\text{Root indices for $n=$}}& &\multicolumn{7}{ l }{\text{Root indices for $n=$}}& &\multicolumn{7}{ l }{\text{Root indices for $n=$}}& \\
2&3&4&5&6&7&8& \text{Root}& 2&3&4&5&6&7&8& \text{Root}& 2&3&4&5&6&7&8& \text{Root} \\\hline
1 & 1 & 1 & 1 & 1 & 1 & 1 & (1 0 0 0 0 0 0 0) &  &  &  &  &  & 36 & 41 & (0 1 1 1 1 1 1 0) &  &  &  &  &  &  & 81 & (1 2 2 3 2 1 1 1)\\
2 & 2 & 2 & 2 & 2 & 2 & 2 & (0 1 0 0 0 0 0 0) &  &  &  &  &  &  & 42 & (0 0 1 1 1 1 1 1) &  &  &  &  &  & 59 & 82 & (1 2 1 3 3 2 1 0)\\
 & 3 & 3 & 3 & 3 & 3 & 3 & (0 0 1 0 0 0 0 0) &  &  &  &  &  &  & 43 & (0 1 0 1 1 1 1 1) &  &  &  &  &  &  & 83 & (1 2 1 3 2 2 1 1)\\
 &  & 4 & 4 & 4 & 4 & 4 & (0 0 0 1 0 0 0 0) &  &  &  & 20 & 29 & 37 & 44 & (1 2 1 2 1 0 0 0) &  &  &  &  &  &  & 84 & (1 2 1 2 2 2 2 1)\\
 &  &  & 5 & 5 & 5 & 5 & (0 0 0 0 1 0 0 0) &  &  &  &  & 30 & 38 & 45 & (1 1 1 2 1 1 0 0) &  &  &  &  &  & 60 & 85 & (1 2 2 3 3 2 1 0)\\
 &  &  &  & 6 & 6 & 6 & (0 0 0 0 0 1 0 0) &  &  &  &  &  & 39 & 46 & (1 1 1 1 1 1 1 0) &  &  &  &  &  &  & 86 & (1 2 2 3 2 2 1 1)\\
 &  &  &  &  & 7 & 7 & (0 0 0 0 0 0 1 0) &  &  &  &  &  &  & 47 & (1 1 0 1 1 1 1 1) &  &  &  &  &  &  & 87 & (1 2 1 3 3 2 1 1)\\
 &  &  &  &  &  & 8 & (0 0 0 0 0 0 0 1) &  &  &  &  & 31 & 40 & 48 & (0 1 1 2 2 1 0 0) &  &  &  &  &  &  & 88 & (1 2 1 3 2 2 2 1)\\
3 & 4 & 5 & 6 & 7 & 8 & 9 & (1 1 0 0 0 0 0 0) &  &  &  &  &  & 41 & 49 & (0 1 1 2 1 1 1 0) &  &  &  &  &  & 61 & 89 & (1 2 2 4 3 2 1 0)\\
 &  & 7 & 8 & 8 & 9 & 10 & (0 0 1 1 0 0 0 0) &  &  &  &  &  &  & 50 & (0 1 1 1 1 1 1 1) &  &  &  &  &  &  & 90 & (1 2 2 3 3 2 1 1)\\
 &  & 6 & 7 & 9 & 10 & 11 & (0 1 0 1 0 0 0 0) &  &  &  &  & 32 & 42 & 51 & (1 2 1 2 1 1 0 0) &  &  &  &  &  &  & 91 & (1 2 2 3 2 2 2 1)\\
 &  &  & 9 & 10 & 11 & 12 & (0 0 0 1 1 0 0 0) &  &  &  &  & 33 & 43 & 52 & (1 1 1 2 2 1 0 0) &  &  &  &  &  &  & 92 & (1 2 1 3 3 2 2 1)\\
 &  &  &  & 11 & 12 & 13 & (0 0 0 0 1 1 0 0) &  &  &  &  &  & 44 & 53 & (1 1 1 2 1 1 1 0) &  &  &  &  &  & 62 & 93 & (1 3 2 4 3 2 1 0)\\
 &  &  &  &  & 13 & 14 & (0 0 0 0 0 1 1 0) &  &  &  &  &  &  & 54 & (1 1 1 1 1 1 1 1) &  &  &  &  &  &  & 94 & (1 2 2 4 3 2 1 1)\\
 &  &  &  &  &  & 15 & (0 0 0 0 0 0 1 1) &  &  &  &  &  & 45 & 55 & (0 1 1 2 2 1 1 0) &  &  &  &  &  &  & 95 & (1 2 2 3 3 2 2 1)\\
 &  & 8 & 10 & 12 & 14 & 16 & (1 1 0 1 0 0 0 0) &  &  &  &  &  &  & 56 & (0 1 1 2 1 1 1 1) &  &  &  &  &  &  & 96 & (1 2 1 3 3 3 2 1)\\
 &  & 9 & 11 & 13 & 15 & 17 & (0 1 1 1 0 0 0 0) &  &  &  &  & 34 & 46 & 57 & (1 2 1 2 2 1 0 0) &  &  &  &  &  & 63 & 97 & (2 3 2 4 3 2 1 0)\\
 &  &  & 13 & 14 & 16 & 18 & (0 0 1 1 1 0 0 0) &  &  &  &  &  & 47 & 58 & (1 2 1 2 1 1 1 0) &  &  &  &  &  &  & 98 & (1 3 2 4 3 2 1 1)\\
 &  &  & 12 & 15 & 17 & 19 & (0 1 0 1 1 0 0 0) &  &  &  &  &  & 48 & 59 & (1 1 1 2 2 1 1 0) &  &  &  &  &  &  & 99 & (1 2 2 4 3 2 2 1)\\
 &  &  &  & 16 & 18 & 20 & (0 0 0 1 1 1 0 0) &  &  &  &  &  &  & 60 & (1 1 1 2 1 1 1 1) &  &  &  &  &  &  & 100 & (1 2 2 3 3 3 2 1)\\
 &  &  &  &  & 19 & 21 & (0 0 0 0 1 1 1 0) &  &  &  &  &  & 49 & 61 & (0 1 1 2 2 2 1 0) &  &  &  &  &  &  & 101 & (2 3 2 4 3 2 1 1)\\
 &  &  &  &  &  & 22 & (0 0 0 0 0 1 1 1) &  &  &  &  &  &  & 62 & (0 1 1 2 2 1 1 1) &  &  &  &  &  &  & 102 & (1 3 2 4 3 2 2 1)\\
 &  & 10 & 14 & 17 & 20 & 23 & (1 1 1 1 0 0 0 0) &  &  &  &  & 35 & 50 & 63 & (1 2 1 3 2 1 0 0) &  &  &  &  &  &  & 103 & (1 2 2 4 3 3 2 1)\\
 &  &  & 15 & 18 & 21 & 24 & (1 1 0 1 1 0 0 0) &  &  &  &  &  & 51 & 64 & (1 2 1 2 2 1 1 0) &  &  &  &  &  &  & 104 & (2 3 2 4 3 2 2 1)\\
 &  &  & 16 & 19 & 22 & 25 & (0 1 1 1 1 0 0 0) &  &  &  &  &  &  & 65 & (1 2 1 2 1 1 1 1) &  &  &  &  &  &  & 105 & (1 3 2 4 3 3 2 1)\\
 &  &  &  & 20 & 23 & 26 & (0 0 1 1 1 1 0 0) &  &  &  &  &  & 52 & 66 & (1 1 1 2 2 2 1 0) &  &  &  &  &  &  & 106 & (1 2 2 4 4 3 2 1)\\
 &  &  &  & 21 & 24 & 27 & (0 1 0 1 1 1 0 0) &  &  &  &  &  &  & 67 & (1 1 1 2 2 1 1 1) &  &  &  &  &  &  & 107 & (2 3 2 4 3 3 2 1)\\
 &  &  &  &  & 25 & 28 & (0 0 0 1 1 1 1 0) &  &  &  &  &  &  & 68 & (0 1 1 2 2 2 1 1) &  &  &  &  &  &  & 108 & (1 3 2 4 4 3 2 1)\\
 &  &  &  &  &  & 29 & (0 0 0 0 1 1 1 1) &  &  &  &  & 36 & 53 & 69 & (1 2 2 3 2 1 0 0) &  &  &  &  &  &  & 109 & (2 3 2 4 4 3 2 1)\\
 &  &  & 17 & 22 & 26 & 30 & (1 1 1 1 1 0 0 0) &  &  &  &  &  & 54 & 70 & (1 2 1 3 2 1 1 0) &  &  &  &  &  &  & 110 & (1 3 2 5 4 3 2 1)\\
 &  &  &  & 23 & 27 & 31 & (1 1 0 1 1 1 0 0) &  &  &  &  &  & 55 & 71 & (1 2 1 2 2 2 1 0) &  &  &  &  &  &  & 111 & (2 3 2 5 4 3 2 1)\\
 &  &  & 18 & 24 & 28 & 32 & (0 1 1 2 1 0 0 0) &  &  &  &  &  &  & 72 & (1 2 1 2 2 1 1 1) &  &  &  &  &  &  & 112 & (1 3 3 5 4 3 2 1)\\
 &  &  &  & 25 & 29 & 33 & (0 1 1 1 1 1 0 0) &  &  &  &  &  &  & 73 & (1 1 1 2 2 2 1 1) &  &  &  &  &  &  & 113 & (2 3 3 5 4 3 2 1)\\
 &  &  &  &  & 30 & 34 & (0 0 1 1 1 1 1 0) &  &  &  &  &  &  & 74 & (0 1 1 2 2 2 2 1) &  &  &  &  &  &  & 114 & (2 4 2 5 4 3 2 1)\\
 &  &  &  &  & 31 & 35 & (0 1 0 1 1 1 1 0) &  &  &  &  &  & 56 & 75 & (1 2 2 3 2 1 1 0) &  &  &  &  &  &  & 115 & (2 4 3 5 4 3 2 1)\\
 &  &  &  &  &  & 36 & (0 0 0 1 1 1 1 1) &  &  &  &  &  & 57 & 76 & (1 2 1 3 2 2 1 0) &  &  &  &  &  &  & 116 & (2 4 3 6 4 3 2 1)\\
 &  &  & 19 & 26 & 32 & 37 & (1 1 1 2 1 0 0 0) &  &  &  &  &  &  & 77 & (1 2 1 3 2 1 1 1) &  &  &  &  &  &  & 117 & (2 4 3 6 5 3 2 1)\\
 &  &  &  & 27 & 33 & 38 & (1 1 1 1 1 1 0 0) &  &  &  &  &  &  & 78 & (1 2 1 2 2 2 1 1) &  &  &  &  &  &  & 118 & (2 4 3 6 5 4 2 1)\\
 &  &  &  &  & 34 & 39 & (1 1 0 1 1 1 1 0) &  &  &  &  &  &  & 79 & (1 1 1 2 2 2 2 1) &  &  &  &  &  &  & 119 & (2 4 3 6 5 4 3 1)\\
 &  &  &  & 28 & 35 & 40 & (0 1 1 2 1 1 0 0) &  &  &  &  &  & 58 & 80 & (1 2 2 3 2 2 1 0) &  &  &  &  &  &  & 120 & (2 4 3 6 5 4 3 2)
\end{array}$}
\medskip\caption{Root indices}\label{T-RtIndices}
\end{table}

\begin{table} {\tiny
$\begin{array}{lll} i & c_i & \kappa_i{}(c_ix_i)hu\wt_i \\\hline
1 & t_{2}/t_{1}^2 & h_{1}(c_1/r_1)h\cdot \chi_{1}(-(c_1/r_1)^2x_1)\cdot\chi_{3}(-x_{2})  \chi_{2}(x_{3})   \\
2 & t_{1}/t_{2}^2 & h_{2}(c_2/r_2)h\cdot \chi_{2}(-(c_2/r_2)^2x_2)\cdot\chi_{1}(-x_{1}x_{2}) \chi_{3}(x_{1})  \cdot \chi_{1}(-x_{3})   \\
\end{array}$}
\medskip\caption{Iwasawa coordinates for $A_2$ with compensating element 
\mbox{$\kappa_i(c_ix_i)=\chi_i(-c_ix_i)h_i(-1/\sqrt{(c_ix_i)^2+1})\wt_i \chi_{i}(-c_ix_i)$}}\label{T-first}
\end{table}

\begin{table} {\tiny
$\begin{array}{lll} i & c_i & \kappa_i{}(c_ix_i)hu\wt_i \\\hline

1 & t_{2}/t_{1}^2 & h_{1}(c_1/r_1)h\cdot \chi_{i}(-{}(c_1/r_1)^2x_1)\cdot\chi_{4}(-x_{2})  \chi_{3}(x_{3})  \chi_{2}(x_{4})   \\
2 & t_{1}/t_{2}^2 & h_{2}(c_2/r_2)h\cdot \chi_{i}(-{}(c_2/r_2)^2x_2)\cdot\chi_{1}(-x_{1}x_{2}) \chi_{4}(x_{1})  \cdot \chi_{3}(x_{3})  \chi_{1}(-x_{4})   \\
3 & 1/t_{3}^2 & h_{3}(c_3/r_3)h\cdot \chi_{i}(-{}(c_3/r_3)^2x_3)\cdot\chi_{1}(x_{1})  \chi_{2}(x_{2})  \cdot \chi_{4}(x_{4})   \\
\end{array}$
}\medskip\caption{Iwasawa coordinates for $A_2A_1$ with compensating element 
\mbox{$\kappa_i(c_ix_i)=\chi_i(-c_ix_i)h_i(-1/\sqrt{(c_ix_i)^2+1})\wt_i \chi_{i}(-c_ix_i)$}}
\end{table}

\begin{table} {\tiny
$\begin{array}{lll} i & c_i & \kappa_i{}(c_ix_i)hu\wt_i \\\hline

1 & t_{2}/t_{1}^2 & h_{1}(c_1/r_1)h\cdot \chi_{i}(-{}(c_1/r_1)^2x_1)\cdot\chi_{5}(-x_{2})  \chi_{3}(x_{3})  \chi_{4}(x_{4})  \chi_{2}(x_{5})  \chi_{8}(-x_{6})  \chi_{7}(x_{7})  \chi_{6}(x_{8})  \chi_{10}(-x_{9})  \chi_{9}(x_{10})   \\
2 & t_{1}t_{3}/t_{2}^2 & h_{2}(c_2/r_2)h\cdot \chi_{i}(-{}(c_2/r_2)^2x_2)\cdot\chi_{1}(-x_{1}x_{2}) \chi_{5}(x_{1})  \cdot \chi_{6}(-x_{3})  \chi_{4}(x_{4})  \chi_{1}(-x_{5})  \chi_{3}(x_{6})  \chi_{9}(-x_{7})  \chi_{8}(x_{8})  \chi_{7}(x_{9})  \chi_{10}(x_{10})   \\
3 & t_{2}t_{4}/t_{3}^2 & h_{3}(c_3/r_3)h\cdot \chi_{i}(-{}(c_3/r_3)^2x_3)\cdot\chi_{1}(x_{1})  \chi_{2}(-x_{2}x_{3}) \chi_{6}(x_{2})  \cdot \chi_{7}(-x_{4})  \chi_{8}(x_{5})  \chi_{2}(-x_{6})  \chi_{4}(x_{7})  \chi_{5}(-x_{8})  \chi_{9}(x_{9})  \chi_{10}(x_{10})   \\
4 & t_{3}/t_{4}^2 & h_{4}(c_4/r_4)h\cdot \chi_{i}(-{}(c_4/r_4)^2x_4)\cdot\chi_{1}(x_{1})  \chi_{2}(x_{2})  \chi_{3}(-x_{3}x_{4}) \chi_{7}(x_{3})  \cdot \chi_{5}(x_{5})  \chi_{9}(x_{6})  \chi_{3}(-x_{7})  \chi_{10}(x_{8})  \chi_{6}(-x_{9})  \chi_{8}(-x_{10})   \\
\end{array}$
}\medskip\caption{Iwasawa coordinates for $A_4$ with compensating element 
\mbox{$\kappa_i(c_ix_i)=\chi_i(-c_ix_i)h_i(-1/\sqrt{(c_ix_i)^2+1})\wt_i \chi_{i}(-c_ix_i)$}}
\end{table}
\begin{table} {\tiny
$\begin{array}{lll} i & c_i & \kappa_i{}(c_ix_i)hu\wt_i \\\hline

1 & t_{2}x_{1}/t_{1}^2 & h_{1}(c_1/r_1)h\cdot \chi_{i}(-{}(c_1/r_1)^2x_1)\cdot\chi_{6}(-x_{2})  \chi_{3}(x_{3})  \chi_{4}(x_{4})  \chi_{5}(x_{5})  \chi_{2}(x_{6})  \chi_{10}(-x_{7})  \chi_{8}(x_{8})  \chi_{9}(x_{9})  \chi_{7}(x_{10})  \\
&&\chi_{14}(-x_{11})  \chi_{15}(-x_{12})  \chi_{13}(x_{13})  \chi_{11}(x_{14})  \chi_{12}(x_{15})  \chi_{17}(-x_{16})  \chi_{16}(x_{17})  \chi_{19}(-x_{18})  \chi_{18}(x_{19})  \chi_{20}(x_{20})   \\
2 & t_{1}t_{3}x_{2}/t_{2}^2 & h_{2}(c_2/r_2)h\cdot \chi_{i}(-{}(c_2/r_2)^2x_2)\cdot\chi_{1}(-x_{1}x_{2}) \chi_{6}(x_{1})  \cdot \chi_{7}(-x_{3})  \chi_{4}(x_{4})  \chi_{5}(x_{5})  \chi_{1}(-x_{6})  \chi_{3}(x_{7})  \chi_{11}(-x_{8})  \chi_{12}(-x_{9})  \chi_{10}(x_{10})  \\
&&\chi_{8}(x_{11})  \chi_{9}(x_{12})  \chi_{16}(-x_{13})  \chi_{14}(x_{14})  \chi_{15}(x_{15})  \chi_{13}(x_{16})  \chi_{17}(x_{17})  \chi_{18}(x_{18})  \chi_{20}(-x_{19})  \chi_{19}(x_{20})   \\
3 & t_{2}t_{4}t_{5}x_{3}/t_{3}^2 & h_{3}(c_3/r_3)h\cdot \chi_{i}(-{}(c_3/r_3)^2x_3)\cdot\chi_{1}(x_{1})  \chi_{2}(-x_{2}x_{3}) \chi_{7}(x_{2})  \cdot \chi_{8}(-x_{4})  \chi_{9}(-x_{5})  \chi_{10}(x_{6})  \chi_{2}(-x_{7})  \chi_{4}(x_{8})  \chi_{5}(x_{9})  \chi_{6}(-x_{10})  \\
&&\chi_{11}(x_{11})  \chi_{12}(x_{12})  \chi_{13}(x_{13})  \chi_{14}(x_{14})  \chi_{15}(x_{15})  \chi_{18}(-x_{16})  \chi_{19}(-x_{17})  \chi_{16}(x_{18})  \chi_{17}(x_{19})  \chi_{20}(x_{20})   \\
4 & t_{3}x_{4}/t_{4}^2 & h_{4}(c_4/r_4)h\cdot \chi_{i}(-{}(c_4/r_4)^2x_4)\cdot\chi_{1}(x_{1})  \chi_{2}(x_{2})  \chi_{3}(-x_{3}x_{4}) \chi_{8}(x_{3})  \cdot \chi_{5}(x_{5})  \chi_{6}(x_{6})  \chi_{11}(x_{7})  \chi_{3}(-x_{8})  \chi_{13}(-x_{9})  \chi_{14}(x_{10})  \\
&&\chi_{7}(-x_{11})  \chi_{16}(-x_{12})  \chi_{9}(x_{13})  \chi_{10}(-x_{14})  \chi_{17}(-x_{15})  \chi_{12}(x_{16})  \chi_{15}(x_{17})  \chi_{18}(x_{18})  \chi_{19}(x_{19})  \chi_{20}(x_{20})   \\
5 & t_{3}x_{5}/t_{5}^2 & h_{5}(c_5/r_5)h\cdot \chi_{i}(-{}(c_5/r_5)^2x_5)\cdot\chi_{1}(x_{1})  \chi_{2}(x_{2})  \chi_{3}(-x_{3}x_{5}) \chi_{9}(x_{3})  \chi_{4}(x_{4})  \cdot \chi_{6}(x_{6})  \chi_{12}(x_{7})  \chi_{13}(-x_{8})  \chi_{3}(-x_{9})  \chi_{15}(x_{10})  \\
&&\chi_{16}(-x_{11})  \chi_{7}(-x_{12})  \chi_{8}(x_{13})  \chi_{17}(-x_{14})  \chi_{10}(-x_{15})  \chi_{11}(x_{16})  \chi_{14}(x_{17})  \chi_{18}(x_{18})  \chi_{19}(x_{19})  \chi_{20}(x_{20})   \\
\end{array}$
}\medskip\caption{Iwasawa coordinates for $D_5$ with compensating element 
\mbox{$\kappa_i(c_ix_i)=\chi_i(-c_ix_i)h_\beta(-1/\sqrt{(c_ix_i)^2+1})\wt_i \chi_{i}(-c_ix_i)$}}
\end{table}
\begin{table} {\tiny
$\begin{array}{lll} i & c_i & \kappa_i{}(c_ix_i)hu\wt_i \\\hline

1 & t_{3}x_{1}/t_{1}^2 & h_{1}(c_1/r_1)h\cdot \chi_{i}(-{}(c_1/r_1)^2x_1)\cdot\chi_{2}(x_{2})  \chi_{7}(-x_{3})  \chi_{4}(x_{4})  \chi_{5}(x_{5})  \chi_{6}(x_{6})  \chi_{3}(x_{7})  \chi_{8}(x_{8})  \chi_{12}(-x_{9})  \chi_{10}(x_{10})  \\
&&\chi_{11}(x_{11})  \chi_{9}(x_{12})  \chi_{17}(-x_{13})  \chi_{14}(x_{14})  \chi_{18}(-x_{15})  \chi_{16}(x_{16})  \chi_{13}(x_{17})  \chi_{15}(x_{18})  \chi_{22}(-x_{19})  \chi_{20}(x_{20})  \\
&&\chi_{23}(-x_{21})  \chi_{19}(x_{22})  \chi_{21}(x_{23})  \chi_{26}(-x_{24})  \chi_{27}(-x_{25})  \chi_{24}(x_{26})  \chi_{25}(x_{27})  \chi_{30}(-x_{28})  \chi_{29}(x_{29})  \chi_{28}(x_{30})  \\
&&\chi_{33}(-x_{31})  \chi_{32}(x_{32})  \chi_{31}(x_{33})  \chi_{34}(x_{34})  \chi_{35}(x_{35})  \chi_{36}(x_{36})   \\
2 & t_{4}x_{2}/t_{2}^2 & h_{2}(c_2/r_2)h\cdot \chi_{i}(-{}(c_2/r_2)^2x_2)\cdot\chi_{1}(x_{1})  \cdot \chi_{3}(x_{3})  \chi_{8}(-x_{4})  \chi_{5}(x_{5})  \chi_{6}(x_{6})  \chi_{7}(x_{7})  \chi_{4}(x_{8})  \chi_{13}(-x_{9})  \chi_{14}(-x_{10})  \\
&&\chi_{11}(x_{11})  \chi_{17}(-x_{12})  \chi_{9}(x_{13})  \chi_{10}(x_{14})  \chi_{19}(-x_{15})  \chi_{20}(-x_{16})  \chi_{12}(x_{17})  \chi_{22}(-x_{18})  \chi_{15}(x_{19})  \chi_{16}(x_{20})  \\
&&\chi_{25}(-x_{21})  \chi_{18}(x_{22})  \chi_{27}(-x_{23})  \chi_{24}(x_{24})  \chi_{21}(x_{25})  \chi_{26}(x_{26})  \chi_{23}(x_{27})  \chi_{28}(x_{28})  \chi_{29}(x_{29})  \chi_{30}(x_{30})  \\
&&\chi_{31}(x_{31})  \chi_{32}(x_{32})  \chi_{33}(x_{33})  \chi_{34}(x_{34})  \chi_{36}(-x_{35})  \chi_{35}(x_{36})   \\
3 & t_{1}t_{4}x_{3}/t_{3}^2 & h_{3}(c_3/r_3)h\cdot \chi_{i}(-{}(c_3/r_3)^2x_3)\cdot\chi_{1}(-x_{1}x_{3}) \chi_{7}(x_{1})  \chi_{2}(x_{2})  \cdot \chi_{9}(-x_{4})  \chi_{5}(x_{5})  \chi_{6}(x_{6})  \chi_{1}(-x_{7})  \chi_{13}(-x_{8})  \chi_{4}(x_{9})  \chi_{15}(-x_{10})  \\
&&\chi_{11}(x_{11})  \chi_{12}(x_{12})  \chi_{8}(x_{13})  \chi_{19}(-x_{14})  \chi_{10}(x_{15})  \chi_{21}(-x_{16})  \chi_{17}(x_{17})  \chi_{18}(x_{18})  \chi_{14}(x_{19})  \chi_{25}(-x_{20})  \\
&&\chi_{16}(x_{21})  \chi_{22}(x_{22})  \chi_{23}(x_{23})  \chi_{24}(x_{24})  \chi_{20}(x_{25})  \chi_{29}(-x_{26})  \chi_{27}(x_{27})  \chi_{28}(x_{28})  \chi_{26}(x_{29})  \chi_{32}(-x_{30})  \\
&&\chi_{31}(x_{31})  \chi_{30}(x_{32})  \chi_{34}(-x_{33})  \chi_{33}(x_{34})  \chi_{35}(x_{35})  \chi_{36}(x_{36})   \\
4 & t_{2}t_{3}t_{5}x_{4}/t_{4}^2 & h_{4}(c_4/r_4)h\cdot \chi_{i}(-{}(c_4/r_4)^2x_4)\cdot\chi_{1}(x_{1})  \chi_{2}(-x_{2}x_{4}) \chi_{8}(x_{2})  \chi_{3}(-x_{3}x_{4}) \chi_{9}(x_{3})  \cdot \chi_{10}(-x_{5})  \chi_{6}(x_{6})  \chi_{12}(x_{7})  \chi_{2}(-x_{8})  \chi_{3}(-x_{9})  \chi_{5}(x_{10})  \\
&&\chi_{16}(-x_{11})  \chi_{7}(-x_{12})  \chi_{13}(x_{13})  \chi_{14}(x_{14})  \chi_{15}(x_{15})  \chi_{11}(x_{16})  \chi_{17}(x_{17})  \chi_{18}(x_{18})  \chi_{24}(-x_{19})  \chi_{20}(x_{20})  \\
&&\chi_{21}(x_{21})  \chi_{26}(-x_{22})  \chi_{23}(x_{23})  \chi_{19}(x_{24})  \chi_{28}(-x_{25})  \chi_{22}(x_{26})  \chi_{30}(-x_{27})  \chi_{25}(x_{28})  \chi_{29}(x_{29})  \chi_{27}(x_{30})  \\
&&\chi_{31}(x_{31})  \chi_{32}(x_{32})  \chi_{33}(x_{33})  \chi_{35}(-x_{34})  \chi_{34}(x_{35})  \chi_{36}(x_{36})   \\
5 & t_{4}t_{6}x_{5}/t_{5}^2 & h_{5}(c_5/r_5)h\cdot \chi_{i}(-{}(c_5/r_5)^2x_5)\cdot\chi_{1}(x_{1})  \chi_{2}(x_{2})  \chi_{3}(x_{3})  \chi_{4}(-x_{4}x_{5}) \chi_{10}(x_{4})  \cdot \chi_{11}(-x_{6})  \chi_{7}(x_{7})  \chi_{14}(x_{8})  \chi_{15}(x_{9})  \chi_{4}(-x_{10})  \\
&&\chi_{6}(x_{11})  \chi_{18}(x_{12})  \chi_{19}(x_{13})  \chi_{8}(-x_{14})  \chi_{9}(-x_{15})  \chi_{16}(x_{16})  \chi_{22}(x_{17})  \chi_{12}(-x_{18})  \chi_{13}(-x_{19})  \chi_{20}(x_{20})  \\
&&\chi_{21}(x_{21})  \chi_{17}(-x_{22})  \chi_{23}(x_{23})  \chi_{24}(x_{24})  \chi_{25}(x_{25})  \chi_{26}(x_{26})  \chi_{27}(x_{27})  \chi_{31}(-x_{28})  \chi_{29}(x_{29})  \chi_{33}(-x_{30})  \\
&&\chi_{28}(x_{31})  \chi_{34}(-x_{32})  \chi_{30}(x_{33})  \chi_{32}(x_{34})  \chi_{35}(x_{35})  \chi_{36}(x_{36})   \\
6 & t_{5}x_{6}/t_{6}^2 & h_{6}(c_6/r_6)h\cdot \chi_{i}(-{}(c_6/r_6)^2x_6)\cdot\chi_{1}(x_{1})  \chi_{2}(x_{2})  \chi_{3}(x_{3})  \chi_{4}(x_{4})  \chi_{5}(-x_{5}x_{6}) \chi_{11}(x_{5})  \cdot \chi_{7}(x_{7})  \chi_{8}(x_{8})  \chi_{9}(x_{9})  \chi_{16}(x_{10})  \\
&&\chi_{5}(-x_{11})  \chi_{12}(x_{12})  \chi_{13}(x_{13})  \chi_{20}(x_{14})  \chi_{21}(x_{15})  \chi_{10}(-x_{16})  \chi_{17}(x_{17})  \chi_{23}(x_{18})  \chi_{25}(x_{19})  \chi_{14}(-x_{20})  \\
&&\chi_{15}(-x_{21})  \chi_{27}(x_{22})  \chi_{18}(-x_{23})  \chi_{28}(x_{24})  \chi_{19}(-x_{25})  \chi_{30}(x_{26})  \chi_{22}(-x_{27})  \chi_{24}(-x_{28})  \chi_{32}(x_{29})  \chi_{26}(-x_{30})  \\
&&\chi_{31}(x_{31})  \chi_{29}(-x_{32})  \chi_{33}(x_{33})  \chi_{34}(x_{34})  \chi_{35}(x_{35})  \chi_{36}(x_{36})   \\
\end{array}$
}\medskip\caption{Iwasawa coordinates for $E_6$ with compensating element 
\mbox{$\kappa_i(c_ix_i)=\chi_i(-c_ix_i)h_\beta(-1/\sqrt{(c_ix_i)^2+1})\wt_i \chi_{i}(-c_ix_i)$}}
\end{table}

\begin{landscape}
\begin{table}
{\tiny
$\begin{array}{lll} i & c_i & \kappa_{i}(c_ix_i)hu\wt_i \\\hline
1 & t_{2}/t_{1}^2 & h_{1}(c_1/r_1)h\cdot \chi_{1}(-(c_1/r_1)^2x_1)\cdot\chi_{8}(-x_{2})  \chi_{3}(x_{3})  \chi_{4}(x_{4})  \chi_{5}(x_{5})  \chi_{6}(x_{6})  \chi_{7}(x_{7})  \chi_{2}(x_{8})  \chi_{9}(x_{9})  \chi_{14}(-x_{10})  \chi_{11}(x_{11})  \chi_{12}(x_{12})  \chi_{13}(x_{13})  \chi_{10}(x_{14}) \chi_{20}(-x_{15})  \chi_{16}(x_{16})  \chi_{21}(-x_{17})\\&&
\chi_{18}(x_{18})  \chi_{19}(x_{19})  \chi_{15}(x_{20})  \chi_{17}(x_{21})  \chi_{26}(-x_{22})  \chi_{23}(x_{23})  \chi_{27}(-x_{24})  \chi_{25}(x_{25})  \chi_{22}(x_{26})  \chi_{24}(x_{27})  \chi_{32}(-x_{28})  \chi_{33}(-x_{29})  \chi_{30}(x_{30}) \\&& \chi_{34}(-x_{31})  \chi_{28}(x_{32})  \chi_{29}(x_{33})  \chi_{31}(x_{34})  \chi_{38}(-x_{35})  \chi_{39}(-x_{36})  \chi_{37}(x_{37}) \chi_{35}(x_{38})  \chi_{36}(x_{39})  \chi_{43}(-x_{40})  \chi_{44}(-x_{41})  \chi_{42}(x_{42})  \chi_{40}(x_{43})\\&&  \chi_{41}(x_{44})  \chi_{48}(-x_{45})  \chi_{46}(x_{46})  \chi_{47}(x_{47})  \chi_{45}(x_{48})  \chi_{52}(-x_{49})  \chi_{50}(x_{50})  \chi_{51}(x_{51})  \chi_{49}(x_{52})  \chi_{53}(x_{53})  \chi_{54}(x_{54})  \chi_{55}(x_{55})  \chi_{56}(x_{56}) \\&& \chi_{57}(x_{57})  \chi_{58}(x_{58})  \chi_{59}(x_{59})  \chi_{60}(x_{60})  \chi_{61}(x_{61})  \chi_{63}(-x_{62})  \chi_{62}(x_{63})   \\
2 & t_{1}t_{4}/t_{2}^2 & h_{2}(c_2/r_2)h\cdot \chi_{2}(-(c_2/r_2)^2x_2)\cdot\chi_{1}(-x_{1}x_{2}) \chi_{8}(x_{1})  \cdot \chi_{3}(x_{3})  \chi_{10}(-x_{4})  \chi_{5}(x_{5})  \chi_{6}(x_{6})  \chi_{7}(x_{7})  \chi_{1}(-x_{8})  \chi_{15}(-x_{9})  \chi_{4}(x_{10})  \chi_{17}(-x_{11})  \chi_{12}(x_{12})  \chi_{13}(x_{13})  \chi_{14}(x_{14})  \chi_{9}(x_{15})\\&&  \chi_{22}(-x_{16})  \chi_{11}(x_{17})  \chi_{24}(-x_{18})  \chi_{19}(x_{19})  \chi_{20}(x_{20})  \chi_{21}(x_{21})  \chi_{16}(x_{22})  \chi_{29}(-x_{23})  \chi_{18}(x_{24})  \chi_{31}(-x_{25}) \\&& \chi_{26}(x_{26})  \chi_{27}(x_{27})  \chi_{28}(x_{28})  \chi_{23}(x_{29})  \chi_{36}(-x_{30})  \chi_{25}(x_{31})  \chi_{37}(-x_{32})  \chi_{33}(x_{33})  \chi_{34}(x_{34})  \chi_{35}(x_{35})  \chi_{30}(x_{36})  \chi_{32}(x_{37})  \chi_{42}(-x_{38})  \chi_{39}(x_{39})  \chi_{40}(x_{40})  \chi_{41}(x_{41}) \\&& \chi_{38}(x_{42})  \chi_{46}(-x_{43})  \chi_{47}(-x_{44})  \chi_{45}(x_{45})  \chi_{43}(x_{46})  \chi_{44}(x_{47})  \chi_{51}(-x_{48})  \chi_{49}(x_{49})  \chi_{50}(x_{50})  \chi_{48}(x_{51})  \chi_{55}(-x_{52})  \chi_{53}(x_{53})  \chi_{54}(x_{54})  \chi_{52}(x_{55})  \chi_{56}(x_{56})  \chi_{57}(x_{57}) \\&& \chi_{58}(x_{58})  \chi_{59}(x_{59})  \chi_{60}(x_{60})  \chi_{62}(-x_{61})  \chi_{61}(x_{62})  \chi_{63}(x_{63})   \\
3 & t_{4}/t_{3}^2 & h_{3}(c_3/r_3)h\cdot \chi_{3}(-(c_3/r_3)^2x_3)\cdot\chi_{1}(x_{1})  \chi_{2}(x_{2})  \cdot \chi_{9}(-x_{4})  \chi_{5}(x_{5})  \chi_{6}(x_{6})  \chi_{7}(x_{7})  \chi_{8}(x_{8})  \chi_{4}(x_{9})  \chi_{15}(-x_{10})  \chi_{16}(-x_{11})  \chi_{12}(x_{12})  \chi_{13}(x_{13})  \chi_{20}(-x_{14})  \chi_{10}(x_{15})  \chi_{11}(x_{16})  \chi_{22}(-x_{17}) \\&& \chi_{23}(-x_{18})  \chi_{19}(x_{19})  \chi_{14}(x_{20})  \chi_{26}(-x_{21})  \chi_{17}(x_{22})  \chi_{18}(x_{23})  \chi_{29}(-x_{24})  \chi_{30}(-x_{25})  \chi_{21}(x_{26})  \chi_{33}(-x_{27})  \chi_{28}(x_{28})  \chi_{24}(x_{29})  \chi_{25}(x_{30})  \chi_{36}(-x_{31})  \chi_{32}(x_{32})  \chi_{27}(x_{33})  \chi_{39}(-x_{34})\\&&  \chi_{35}(x_{35})  \chi_{31}(x_{36})  \chi_{37}(x_{37})  \chi_{38}(x_{38})  \chi_{34}(x_{39})  \chi_{40}(x_{40})  \chi_{41}(x_{41})  \chi_{42}(x_{42})  \chi_{43}(x_{43})  \chi_{44}(x_{44})  \chi_{45}(x_{45})  \chi_{46}(x_{46})  \chi_{47}(x_{47})  \chi_{48}(x_{48})  \chi_{49}(x_{49})  \chi_{53}(-x_{50})\\&&  \chi_{51}(x_{51})  \chi_{52}(x_{52})  \chi_{50}(x_{53})  \chi_{56}(-x_{54})  \chi_{55}(x_{55})  \chi_{54}(x_{56})  \chi_{58}(-x_{57})  \chi_{57}(x_{58})  \chi_{60}(-x_{59})  \chi_{59}(x_{60})  \chi_{61}(x_{61})  \chi_{62}(x_{62})  \chi_{63}(x_{63})   \\
4 & t_{2}t_{3}t_{5}/t_{4}^2 & h_{4}(c_4/r_4)h\cdot \chi_{4}(-(c_4/r_4)^2x_4)\cdot\chi_{1}(x_{1})  \chi_{2}(-x_{2}x_{4}) \chi_{10}(x_{2})  \chi_{3}(-x_{3}x_{4}) \chi_{9}(x_{3})  \cdot \chi_{11}(-x_{5})  \chi_{6}(x_{6})  \chi_{7}(x_{7})  \chi_{14}(x_{8})  \chi_{3}(-x_{9})  \chi_{2}(-x_{10})  \chi_{5}(x_{11})  \chi_{18}(-x_{12})  \chi_{13}(x_{13})  \chi_{8}(-x_{14}) \\&& \chi_{15}(x_{15})  \chi_{16}(x_{16})  \chi_{17}(x_{17})  \chi_{12}(x_{18})  \chi_{25}(-x_{19})  \chi_{20}(x_{20})  \chi_{21}(x_{21})  \chi_{28}(-x_{22})  \chi_{23}(x_{23})  \chi_{24}(x_{24})  \chi_{19}(x_{25})  \chi_{32}(-x_{26})  \chi_{27}(x_{27})  \chi_{22}(x_{28})  \chi_{35}(-x_{29})  \chi_{30}(x_{30}) \\&& \chi_{31}(x_{31})  \chi_{26}(x_{32})  \chi_{38}(-x_{33})  \chi_{34}(x_{34})  \chi_{29}(x_{35})  \chi_{41}(-x_{36})  \chi_{37}(x_{37})  \chi_{33}(x_{38})  \chi_{44}(-x_{39})  \chi_{40}(x_{40})  \chi_{36}(x_{41})  \chi_{42}(x_{42})  \chi_{43}(x_{43})  \chi_{39}(x_{44})  \chi_{45}(x_{45})  \chi_{50}(-x_{46}) \\&& \chi_{47}(x_{47})  \chi_{48}(x_{48})  \chi_{49}(x_{49})  \chi_{46}(x_{50})  \chi_{54}(-x_{51})  \chi_{52}(x_{52})  \chi_{53}(x_{53})  \chi_{51}(x_{54})  \chi_{57}(-x_{55})  \chi_{56}(x_{56}) \\&& \chi_{55}(x_{57})  \chi_{58}(x_{58})  \chi_{59}(x_{59})  \chi_{61}(-x_{60})  \chi_{60}(x_{61})  \chi_{62}(x_{62})  \chi_{63}(x_{63})   \\
5 & t_{4}t_{6}/t_{5}^2 & h_{5}(c_5/r_5)h\cdot \chi_{5}(-(c_5/r_5)^2x_5)\cdot\chi_{1}(x_{1})  \chi_{2}(x_{2})  \chi_{3}(x_{3})  \chi_{4}(-x_{4}x_{5}) \chi_{11}(x_{4})  \cdot \chi_{12}(-x_{6})  \chi_{7}(x_{7})  \chi_{8}(x_{8})  \chi_{16}(x_{9})  \chi_{17}(x_{10})  \chi_{4}(-x_{11})  \chi_{6}(x_{12})  \chi_{19}(-x_{13})  \chi_{21}(x_{14})  \chi_{22}(x_{15})  \chi_{9}(-x_{16})\\&&  \chi_{10}(-x_{17})  \chi_{18}(x_{18})  \chi_{13}(x_{19})  \chi_{26}(x_{20})  \chi_{14}(-x_{21})  \chi_{15}(-x_{22})  \chi_{23}(x_{23})  \chi_{24}(x_{24})  \chi_{25}(x_{25})  \chi_{20}(-x_{26})  \chi_{27}(x_{27})  \chi_{28}(x_{28})  \chi_{29}(x_{29})  \chi_{30}(x_{30})  \chi_{31}(x_{31})  \chi_{32}(x_{32}) \\&& \chi_{33}(x_{33})  \chi_{34}(x_{34})  \chi_{40}(-x_{35})  \chi_{36}(x_{36})  \chi_{37}(x_{37})  \chi_{43}(-x_{38})  \chi_{39}(x_{39})  \chi_{35}(x_{40})  \chi_{45}(-x_{41})  \chi_{46}(-x_{42})  \chi_{38}(x_{43})  \chi_{48}(-x_{44})  \chi_{41}(x_{45})  \chi_{42}(x_{46})  \chi_{51}(-x_{47})  \chi_{44}(x_{48}) \\&& \chi_{49}(x_{49})  \chi_{50}(x_{50})  \chi_{47}(x_{51})  \chi_{52}(x_{52})  \chi_{53}(x_{53})  \chi_{54}(x_{54})  \chi_{55}(x_{55})  \chi_{56}(x_{56})  \chi_{59}(-x_{57})  \chi_{60}(-x_{58})  \chi_{57}(x_{59})  \chi_{58}(x_{60})  \chi_{61}(x_{61})  \chi_{62}(x_{62})  \chi_{63}(x_{63})   \\
6 & t_{5}t_{7}/t_{6}^2 & h_{6}(c_6/r_6)h\cdot \chi_{6}(-(c_6/r_6)^2x_6)\cdot\chi_{1}(x_{1})  \chi_{2}(x_{2})  \chi_{3}(x_{3})  \chi_{4}(x_{4})  \chi_{5}(-x_{5}x_{6}) \chi_{12}(x_{5})  \cdot \chi_{13}(-x_{7})  \chi_{8}(x_{8})  \chi_{9}(x_{9})  \chi_{10}(x_{10})  \chi_{18}(x_{11})  \chi_{5}(-x_{12})  \chi_{7}(x_{13})  \chi_{14}(x_{14})  \chi_{15}(x_{15})  \chi_{23}(x_{16})\\&&  \chi_{24}(x_{17})  \chi_{11}(-x_{18})  \chi_{19}(x_{19})  \chi_{20}(x_{20})  \chi_{27}(x_{21})  \chi_{29}(x_{22})  \chi_{16}(-x_{23})  \chi_{17}(-x_{24})  \chi_{25}(x_{25})  \chi_{33}(x_{26})  \chi_{21}(-x_{27})  \chi_{35}(x_{28})  \chi_{22}(-x_{29})  \chi_{30}(x_{30})  \chi_{31}(x_{31})  \chi_{38}(x_{32})\\&&  \chi_{26}(-x_{33})  \chi_{34}(x_{34})  \chi_{28}(-x_{35})  \chi_{36}(x_{36})  \chi_{42}(x_{37})  \chi_{32}(-x_{38})  \chi_{39}(x_{39})  \chi_{40}(x_{40})  \chi_{41}(x_{41})  \chi_{37}(-x_{42})  \chi_{43}(x_{43})  \chi_{44}(x_{44})  \chi_{49}(-x_{45})  \chi_{46}(x_{46})  \chi_{47}(x_{47})  \chi_{52}(-x_{48}) \\&& \chi_{45}(x_{49})  \chi_{50}(x_{50})  \chi_{55}(-x_{51})  \chi_{48}(x_{52})  \chi_{53}(x_{53})  \chi_{57}(-x_{54})  \chi_{51}(x_{55})  \chi_{58}(-x_{56})  \chi_{54}(x_{57})  \chi_{56}(x_{58})  \chi_{59}(x_{59})  \chi_{60}(x_{60})  \chi_{61}(x_{61})  \chi_{62}(x_{62})  \chi_{63}(x_{63})   \\
7 & t_{6}/t_{7}^2 & h_{7}(c_7/r_7)h\cdot \chi_{7}(-(c_7/r_7)^2x_7)\cdot\chi_{1}(x_{1})  \chi_{2}(x_{2})  \chi_{3}(x_{3})  \chi_{4}(x_{4})  \chi_{5}(x_{5})  \chi_{6}(-x_{6}x_{7}) \chi_{13}(x_{6})  \cdot \chi_{8}(x_{8})  \chi_{9}(x_{9})  \chi_{10}(x_{10})  \chi_{11}(x_{11})  \chi_{19}(x_{12})  \chi_{6}(-x_{13})  \chi_{14}(x_{14})  \chi_{15}(x_{15})  \chi_{16}(x_{16})\\&&  \chi_{17}(x_{17})  \chi_{25}(x_{18})  \chi_{12}(-x_{19})  \chi_{20}(x_{20})  \chi_{21}(x_{21})  \chi_{22}(x_{22})  \chi_{30}(x_{23})  \chi_{31}(x_{24})  \chi_{18}(-x_{25})  \chi_{26}(x_{26})  \chi_{34}(x_{27})  \chi_{28}(x_{28})  \chi_{36}(x_{29})  \chi_{23}(-x_{30})  \chi_{24}(-x_{31})  \chi_{32}(x_{32}) \\&& \chi_{39}(x_{33})  \chi_{27}(-x_{34})  \chi_{41}(x_{35})  \chi_{29}(-x_{36})  \chi_{37}(x_{37})  \chi_{44}(x_{38})  \chi_{33}(-x_{39})  \chi_{45}(x_{40})  \chi_{35}(-x_{41})  \chi_{47}(x_{42})  \chi_{48}(x_{43})  \chi_{38}(-x_{44})  \chi_{40}(-x_{45})  \chi_{51}(x_{46})  \chi_{42}(-x_{47})  \chi_{43}(-x_{48})\\&&  \chi_{49}(x_{49})  \chi_{54}(x_{50})  \chi_{46}(-x_{51})  \chi_{52}(x_{52})  \chi_{56}(x_{53})  \chi_{50}(-x_{54})  \chi_{55}(x_{55})  \chi_{53}(-x_{56})  \chi_{57}(x_{57})  \chi_{58}(x_{58})  \chi_{59}(x_{59})  \chi_{60}(x_{60})  \chi_{61}(x_{61})  \chi_{62}(x_{62})  \chi_{63}(x_{63})   
\end{array}$
}
\caption{Iwasawa coordinates for $E_7$ with compensating element 
\mbox{$\kappa_i(c_ix_i)=\chi_i(-c_ix_i)h_i(-1/\sqrt{(c_ix_i)^2+1})\wt_i \chi_{i}(-c_ix_i)$}}
\end{table}

\begin{table}
{\tiny
$\begin{array}{lll} i & c_i & \kappa_i{}(c_ix_i)hu\wt_i \\\hline
1 & t_{2}/t_{1}^2 & h_{1}(c_1/r_1)h\cdot \chi_{1}(-(c_1/r_1)^2x_1)\cdot\chi_{9}(-x_{2})  \chi_{3}(x_{3})  \chi_{4}(x_{4})  \chi_{5}(x_{5})  \chi_{6}(x_{6})  \chi_{7}(x_{7})  \chi_{8}(x_{8})  \chi_{2}(x_{9})  \chi_{10}(x_{10})  \chi_{16}(-x_{11})  \chi_{12}(x_{12})  \chi_{13}(x_{13})  \chi_{14}(x_{14})  \chi_{15}(x_{15})  \chi_{11}(x_{16})  \chi_{23}(-x_{17})  \\&&
\chi_{18}(x_{18})  \chi_{24}(-x_{19})  \chi_{20}(x_{20})  \chi_{21}(x_{21})  \chi_{22}(x_{22})  \chi_{17}(x_{23})  \chi_{19}(x_{24})  \chi_{30}(-x_{25})  \chi_{26}(x_{26})  \chi_{31}(-x_{27})  \chi_{28}(x_{28})  \chi_{29}(x_{29})  \chi_{25}(x_{30})  \chi_{27}(x_{31})  \chi_{37}(-x_{32})  \chi_{38}(-x_{33}) \\&&
\chi_{34}(x_{34})  \chi_{39}(-x_{35})  \chi_{36}(x_{36})  \chi_{32}(x_{37})  \chi_{33}(x_{38})  \chi_{35}(x_{39})  \chi_{45}(-x_{40})  \chi_{46}(-x_{41})  \chi_{42}(x_{42})  \chi_{47}(-x_{43})  \chi_{44}(x_{44})  \chi_{40}(x_{45})  \chi_{41}(x_{46})  \chi_{43}(x_{47})  \chi_{52}(-x_{48})  \chi_{53}(-x_{49}) \\&& 
\chi_{54}(-x_{50})  \chi_{51}(x_{51})  \chi_{48}(x_{52})  \chi_{49}(x_{53})  \chi_{50}(x_{54})  \chi_{59}(-x_{55})  \chi_{60}(-x_{56})  \chi_{57}(x_{57})  \chi_{58}(x_{58})  \chi_{55}(x_{59})  \chi_{56}(x_{60})  \chi_{66}(-x_{61})  \chi_{67}(-x_{62})  \chi_{63}(x_{63})  \chi_{64}(x_{64})  \chi_{65}(x_{65}) \\&&
\chi_{61}(x_{66})  \chi_{62}(x_{67})  \chi_{73}(-x_{68})  \chi_{69}(x_{69})  \chi_{70}(x_{70})  \chi_{71}(x_{71})  \chi_{72}(x_{72})  \chi_{68}(x_{73})  \chi_{79}(-x_{74})  \chi_{75}(x_{75})  \chi_{76}(x_{76})  \chi_{77}(x_{77})  \chi_{78}(x_{78})  \chi_{74}(x_{79})  \chi_{80}(x_{80})  \chi_{81}(x_{81}) \\&&
\chi_{82}(x_{82})  \chi_{83}(x_{83})  \chi_{84}(x_{84})  \chi_{85}(x_{85})  \chi_{86}(x_{86})  \chi_{87}(x_{87})  \chi_{88}(x_{88})  \chi_{89}(x_{89})  \chi_{90}(x_{90})  \chi_{91}(x_{91})  \chi_{92}(x_{92})  \chi_{97}(-x_{93})  \chi_{94}(x_{94})  \chi_{95}(x_{95})  \chi_{96}(x_{96})  \chi_{93}(x_{97})  \\&&
\chi_{101}(-x_{98})  \chi_{99}(x_{99})  \chi_{100}(x_{100})  \chi_{98}(x_{101})  \chi_{104}(-x_{102})  \chi_{103}(x_{103})  \chi_{102}(x_{104})  \chi_{107}(-x_{105})  \chi_{106}(x_{106})  \chi_{105}(x_{107})  \chi_{109}(-x_{108})  \chi_{108}(x_{109})  \chi_{111}(-x_{110})  \chi_{110}(x_{111})  \chi_{113}(-x_{112})\\&&
\chi_{112}(x_{113})  \chi_{114}(x_{114})  \chi_{115}(x_{115})  \chi_{116}(x_{116})  \chi_{117}(x_{117})  \chi_{118}(x_{118})  \chi_{119}(x_{119})  \chi_{120}(x_{120})   \\
2 & t_{1}t_{4}/t_{2}^2 & h_{2}(c_2/r_2)h\cdot \chi_{2}(-(c_2/r_2)^2x_2)\cdot\chi_{1}(-x_{1}x_{2}) \chi_{9}(x_{1})  \cdot \chi_{3}(x_{3})  \chi_{11}(-x_{4})  \chi_{5}(x_{5})  \chi_{6}(x_{6})  \chi_{7}(x_{7})  \chi_{8}(x_{8})  \chi_{1}(-x_{9})  \chi_{17}(-x_{10})  \chi_{4}(x_{11})  \chi_{19}(-x_{12})  \chi_{13}(x_{13})  \chi_{14}(x_{14})  \chi_{15}(x_{15}) \\&&
\chi_{16}(x_{16})  \chi_{10}(x_{17})  \chi_{25}(-x_{18})  \chi_{12}(x_{19})  \chi_{27}(-x_{20})  \chi_{21}(x_{21})  \chi_{22}(x_{22})  \chi_{23}(x_{23})  \chi_{24}(x_{24})  \chi_{18}(x_{25})  \chi_{33}(-x_{26})  \chi_{20}(x_{27})  \chi_{35}(-x_{28})  \chi_{29}(x_{29})  \chi_{30}(x_{30})  \chi_{31}(x_{31}) \\&&
\chi_{32}(x_{32})  \chi_{26}(x_{33})  \chi_{41}(-x_{34})  \chi_{28}(x_{35})  \chi_{43}(-x_{36})  \chi_{44}(-x_{37})  \chi_{38}(x_{38})  \chi_{39}(x_{39})  \chi_{40}(x_{40})  \chi_{34}(x_{41})  \chi_{50}(-x_{42})  \chi_{36}(x_{43})  \chi_{37}(x_{44})  \chi_{51}(-x_{45})  \chi_{46}(x_{46})  \chi_{47}(x_{47})  \\&&
\chi_{48}(x_{48})  \chi_{49}(x_{49})  \chi_{42}(x_{50})  \chi_{45}(x_{51})  \chi_{57}(-x_{52})  \chi_{58}(-x_{53})  \chi_{54}(x_{54})  \chi_{55}(x_{55})  \chi_{56}(x_{56})  \chi_{52}(x_{57})  \chi_{53}(x_{58})  \chi_{64}(-x_{59})  \chi_{65}(-x_{60})  \chi_{61}(x_{61})  \chi_{62}(x_{62})  \chi_{63}(x_{63})  \\&&
\chi_{59}(x_{64})  \chi_{60}(x_{65})  \chi_{71}(-x_{66})  \chi_{72}(-x_{67})  \chi_{68}(x_{68})  \chi_{69}(x_{69})  \chi_{70}(x_{70})  \chi_{66}(x_{71})  \chi_{67}(x_{72})  \chi_{78}(-x_{73})  \chi_{74}(x_{74})  \chi_{75}(x_{75})  \chi_{76}(x_{76})  \chi_{77}(x_{77})  \chi_{73}(x_{78})  \chi_{84}(-x_{79}) \\&& 
\chi_{80}(x_{80})  \chi_{81}(x_{81})  \chi_{82}(x_{82})  \chi_{83}(x_{83})  \chi_{79}(x_{84})  \chi_{85}(x_{85})  \chi_{86}(x_{86})  \chi_{87}(x_{87})  \chi_{88}(x_{88})  \chi_{93}(-x_{89})  \chi_{90}(x_{90})  \chi_{91}(x_{91})  \chi_{92}(x_{92})  \chi_{89}(x_{93})  \chi_{98}(-x_{94})  \chi_{95}(x_{95})  \\&&
\chi_{96}(x_{96})  \chi_{97}(x_{97})  \chi_{94}(x_{98})  \chi_{102}(-x_{99})  \chi_{100}(x_{100})  \chi_{101}(x_{101})  \chi_{99}(x_{102})  \chi_{105}(-x_{103})  \chi_{104}(x_{104})  \chi_{103}(x_{105})  \chi_{108}(-x_{106})  \chi_{107}(x_{107})  \chi_{106}(x_{108})  \chi_{109}(x_{109})  \chi_{110}(x_{110})  \chi_{114}(-x_{111})  \\&&
\chi_{112}(x_{112})  \chi_{115}(-x_{113})  \chi_{111}(x_{114})  \chi_{113}(x_{115})  \chi_{116}(x_{116})  \chi_{117}(x_{117})  \chi_{118}(x_{118})  \chi_{119}(x_{119})  \chi_{120}(x_{120})   \\
3 & t_{4}/t_{3}^2 & h_{3}(c_3/r_3)h\cdot \chi_{3}(-(c_3/r_3)^2x_3)\cdot\chi_{1}(x_{1})  \chi_{2}(x_{2})  \cdot \chi_{10}(-x_{4})  \chi_{5}(x_{5})  \chi_{6}(x_{6})  \chi_{7}(x_{7})  \chi_{8}(x_{8})  \chi_{9}(x_{9})  \chi_{4}(x_{10})  \chi_{17}(-x_{11})  \chi_{18}(-x_{12})  \chi_{13}(x_{13})  \chi_{14}(x_{14})  \chi_{15}(x_{15})  \chi_{23}(-x_{16})  \chi_{11}(x_{17})  \\&&
\chi_{12}(x_{18})  \chi_{25}(-x_{19})  \chi_{26}(-x_{20})  \chi_{21}(x_{21})  \chi_{22}(x_{22})  \chi_{16}(x_{23})  \chi_{30}(-x_{24})  \chi_{19}(x_{25})  \chi_{20}(x_{26})  \chi_{33}(-x_{27})  \chi_{34}(-x_{28})  \chi_{29}(x_{29})  \chi_{24}(x_{30})  \chi_{38}(-x_{31})  \chi_{32}(x_{32})  \chi_{27}(x_{33})  \\&&
\chi_{28}(x_{34})  \chi_{41}(-x_{35})  \chi_{42}(-x_{36})  \chi_{37}(x_{37})  \chi_{31}(x_{38})  \chi_{46}(-x_{39})  \chi_{40}(x_{40})  \chi_{35}(x_{41})  \chi_{36}(x_{42})  \chi_{50}(-x_{43})  \chi_{44}(x_{44})  \chi_{45}(x_{45})  \chi_{39}(x_{46})  \chi_{54}(-x_{47})  \chi_{48}(x_{48})  \chi_{49}(x_{49})  \\&&
\chi_{43}(x_{50})  \chi_{51}(x_{51})  \chi_{52}(x_{52})  \chi_{53}(x_{53})  \chi_{47}(x_{54})  \chi_{55}(x_{55})  \chi_{56}(x_{56})  \chi_{57}(x_{57})  \chi_{58}(x_{58})  \chi_{59}(x_{59})  \chi_{60}(x_{60})  \chi_{61}(x_{61})  \chi_{62}(x_{62})  \chi_{69}(-x_{63})  \chi_{64}(x_{64})  \chi_{65}(x_{65}) \\&&
 \chi_{66}(x_{66})  \chi_{67}(x_{67})  \chi_{68}(x_{68})  \chi_{63}(x_{69})  \chi_{75}(-x_{70})  \chi_{71}(x_{71})  \chi_{72}(x_{72})  \chi_{73}(x_{73})  \chi_{74}(x_{74})  \chi_{70}(x_{75})  \chi_{80}(-x_{76})  \chi_{81}(-x_{77})  \chi_{78}(x_{78})  \chi_{79}(x_{79})  \chi_{76}(x_{80})  \chi_{77}(x_{81})  \\&&
 \chi_{85}(-x_{82})  \chi_{86}(-x_{83})  \chi_{84}(x_{84})  \chi_{82}(x_{85})  \chi_{83}(x_{86})  \chi_{90}(-x_{87})  \chi_{91}(-x_{88})  \chi_{89}(x_{89})  \chi_{87}(x_{90})  \chi_{88}(x_{91})  \chi_{95}(-x_{92})  \chi_{93}(x_{93})  \chi_{94}(x_{94})  \chi_{92}(x_{95})  \chi_{100}(-x_{96})  \chi_{97}(x_{97})  \\&&
 \chi_{98}(x_{98})  \chi_{99}(x_{99})  \chi_{96}(x_{100})  \chi_{101}(x_{101})  \chi_{102}(x_{102})  \chi_{103}(x_{103})  \chi_{104}(x_{104})  \chi_{105}(x_{105})  \chi_{106}(x_{106})  \chi_{107}(x_{107})  \chi_{108}(x_{108})  \chi_{109}(x_{109})  \chi_{112}(-x_{110})  \chi_{113}(-x_{111})  \chi_{110}(x_{112})  \\&&
 \chi_{111}(x_{113})  \chi_{115}(-x_{114})  \chi_{114}(x_{115})  \chi_{116}(x_{116})  \chi_{117}(x_{117})  \chi_{118}(x_{118})  \chi_{119}(x_{119})  \chi_{120}(x_{120})   \\
4 & t_{2}t_{3}t_{5}/t_{4}^2 & h_{4}(c_4/r_4)h\cdot \chi_{4}(-(c_4/r_4)^2x_4)\cdot\chi_{1}(x_{1})  \chi_{2}(-x_{2}x_{4}) \chi_{11}(x_{2})  \chi_{3}(-x_{3}x_{4}) \chi_{10}(x_{3})  \cdot \chi_{12}(-x_{5})  \chi_{6}(x_{6})  \chi_{7}(x_{7})  \chi_{8}(x_{8})  \chi_{16}(x_{9})  \chi_{3}(-x_{10})  \chi_{2}(-x_{11})  \chi_{5}(x_{12})  \chi_{20}(-x_{13})  \chi_{14}(x_{14})\\&&
\chi_{15}(x_{15})  \chi_{9}(-x_{16})  \chi_{17}(x_{17})  \chi_{18}(x_{18})  \chi_{19}(x_{19})  \chi_{13}(x_{20})  \chi_{28}(-x_{21})  \chi_{22}(x_{22})  \chi_{23}(x_{23})  \chi_{24}(x_{24})  \chi_{32}(-x_{25})  \chi_{26}(x_{26})  \chi_{27}(x_{27})  \chi_{21}(x_{28})  \chi_{36}(-x_{29})\\&&
 \chi_{37}(-x_{30})  \chi_{31}(x_{31})  \chi_{25}(x_{32})  \chi_{40}(-x_{33})  \chi_{34}(x_{34})  \chi_{35}(x_{35})  \chi_{29}(x_{36})  \chi_{30}(x_{37})  \chi_{45}(-x_{38})  \chi_{39}(x_{39})  \chi_{33}(x_{40})  \chi_{49}(-x_{4
1})  \chi_{42}(x_{42})  \chi_{43}(x_{43})  \chi_{44}(x_{44})\\&&
\chi_{38}(x_{45})  \chi_{53}(-x_{46})  \chi_{47}(x_{47})  \chi_{48}(x_{48})  \chi_{41}(x_{49})  \chi_{56}(-x_{50})  \chi_{51}(x_{51})  \chi_{52}(x_{52})  \chi_{46}(x_{53})  \chi_{60}(-x_{54})  \chi_{55}(x_{55})  \chi_{50}(x_{56})  \chi_{63}(-x_{57})  \chi_{58}(x_{58})  \chi_{59}(x_{59})  \chi_{54}(x_{60})  \\&&
\chi_{61}(x_{61})  \chi_{62}(x_{62})  \chi_{57}(x_{63})  \chi_{70}(-x_{64})  \chi_{65}(
x_{65})  \chi_{66}(x_{66})  \chi_{67}(x_{67})  \chi_{68}(x_{68})  \chi_{69}(x_{69})  \chi_{64}(x_{70})  \chi_{76}(-x_{71})  \chi_{77}(-x_{72})  \chi_{73}(x_{73})  \chi_{74}(x_{74})  \\&&
\chi_{75}(x_{75})  \chi_{71}(x_{76})  \chi_{72}(x_{77})  \chi_{83}(-x_{78})  \chi_{79}(x_{79})  \chi_{80}(x_{80})  \chi_{81}(x_{81})  \chi_{82}(x_{82})  \chi_{78}(x_{83})  \chi_{88}(-x_{84})  \chi_{89}(-x_{85})  \chi_{86}(x_{86})  \chi_{87}(x_{87})  \chi_{84}(x_{88}) \chi_{85}(x_{89})  \chi_{94}(-x_{90})  \\&&
\chi_{91}(x_{91})  \chi_{92}(x_{92})  \chi_{93}(x_{93})  \chi_{90}(x_{94})  \chi_{99}(-x_{95})  \chi_{96}(x_{96})  \chi_{97}(x_{97})  \chi_{98}(x_{98})  \chi_{95}(x_{99})  \chi_{103}(-x_{100})  \chi_{101}(x_{101})  \chi_{102}(x_{102})  \chi_{100}(x_{103})  \chi_{104}(x_{104})  \chi_{105}(x_{105})  \chi_{106}(x_{106})\\&&
  \chi_{107}(x_{107})  \chi_{110}(-x_{108})  \chi_{111}(-x_{109})  \chi_{108}(x_{110}) \chi_{109}(x_{111})  \chi_{112}(x_{112})  \chi_{113}(x_{113})  \chi_{114}(x_{114})  \chi_{116}(-x_{115})  \chi_{115}(x_{116})  \chi_{117}(x_{117})  \chi_{118}(x_{118})  \chi_{119}(x_{119})  \chi_{120}(x_{120})   \\
\end{array}$
}
\caption{Iwasawa coordinates for $E_8$ with compensating element 
\mbox{$\kappa_i(c_ix_i)=\chi_i(-c_ix_i)h_i(-1/\sqrt{(c_ix_i)^2+1})\wt_i \chi_{i}(-c_ix_i)$: Part 1}}\label{T-last}
\end{table}
\begin{table}
{\tiny
$\begin{array}{lll} i & c_i & \kappa_i{}(c_ix_i)hu\wt_i \\\hline
5 & t_{4}t_{6}/t_{5}^2 & h_{5}(c_5/r_5)h\cdot \chi_{5}(-(c_5/r_5)^2x_5)\cdot\chi_{1}(x_{1})  \chi_{2}(x_{2})  \chi_{3}(x_{3})  \chi_{4}(-x_{4}x_{5}) \chi_{12}(x_{4})  \cdot \chi_{13}(-x_{6})  \chi_{7}(x_{7})  \chi_{8}(x_{8})  \chi_{9}(x_{9})  \chi_{18}(x_{10})  \chi_{19}(x_{11})  \chi_{4}(-x_{12})  \chi_{6}(x_{13})  \chi_{21}(-x_{14})  \chi_{15}(x_{15})  \chi_{24}(x_{16})\\&&
\chi_{25}(x_{17})  \chi_{10}(-x_{18})  \chi_{11}(-x_{19}) \chi_{20}(x_{20})  \chi_{14}(x_{21})  \chi_{29}(-x_{22})  \chi_{30}(x_{23})  \chi_{16}(-x_{24})  \chi_{17}(-x_{25})  \chi_{26}(x_{26})  \chi_{27}(x_{27})  \chi_{28}(x_{28})  \chi_{22}(x_{29})  \chi_{23}(-x_{30})  \chi_{31}(x_{31})  \chi_{32}(x_{32}) \\&&
\chi_{33}(x_{33})  \chi_{34}(x_{34})  \chi_{35}(x_{35})  \chi_{36}(x_{36})  \chi_{37}(x_{37})  \chi_{38}(x_{38})  \chi_{39}(x_{39})  \chi_{48}(-x_{40})  \chi_{41}(x_{41})  \chi_{42}(x_{42})  \chi_{43}(x_{43})  \chi_{44}(x_{44})  \chi_{52}(-x_{45})  \chi_{46}(x_{46})  \chi_{47}(x_{47})  \chi_{40}(x_{48})  \\&&
\chi_{55}(-x_{49})  \chi_{50}(x_{50})  \chi_{57}(-x_{51})  \chi_{45}(x_{52})  \chi_{59}(-x_{53})  \chi_{54}(x_{54})  \chi_{49}(
x_{55})  \chi_{62}(-x_{56})  \chi_{51}(x_{57})  \chi_{64}(-x_{58})  \chi_{53}(x_{59})  \chi_{67}(-x_{60})  \chi_{61}(x_{61})  \chi_{56}(x_{62})  \chi_{63}(x_{63})  \chi_{58}(x_{64})  \\&&
\chi_{72}(-x_{65})  \chi_{66}(x_{66})\chi_{60}(x_{67})  \chi_{68}(x_{68})  \chi_{69}(x_{69})  \chi_{70}(x_{70})  \chi_{71}(x_{71})  \chi_{65}(x_{72})  \chi_{73}(x_{73})  \chi_{74}(x_{74})  \chi_{75}(x_{75})  \chi_{82}(-x_{76})  \chi_{77}(x_{77})  \chi_{78}(x_{78}) \chi_{79}(x_{79})  \chi_{85}(-x_{80}) \\&&
 \chi_{81}(x_{81})  \chi_{76}(x_{82})  \chi_{87}(-x_{83})  \chi_{84}(x_{84})  \chi_{80}(x_{85})  \chi_{90}(-x_{86})  \chi_{83}(x_{87})  \chi_{92}(-x_{88})  \chi_{89}(x_{89})  \chi_{86}(x_{90})  \chi_{95}(-x_{91})  \chi_{88}(x_{92})  \chi_{93}(x_{93})  \chi_{94}(x_{94})  \chi_{91}(x_{95})  \chi_{96}(x_{96})  \\&&
\chi_{97}(x_{97})  \chi_{98}(x_{98})  \chi_{99}(x_{99})  \chi_{100}(x_{100})  \chi_{101}(x_{101})  \chi_{102}(x_{102})  \chi_{106}(-x_{103})  \chi_{104}(x_{104})  \chi_{108}(-x_{105})  \chi_{103}(x_{106})  \chi_{109}(-x_{107})  \chi_{105}(x_{108})  \chi_{107}(x_{109})  \chi_{110}(x_{110})  \chi_{111}(x_{111})  \\&&
\chi_{112}(x_{112})  \chi_{113}(x_{113})  \chi_{114}(x_{114})  \chi_{115}(x_{115})  \chi_{117}(-x_{116})  \chi_{116}(x_{117})  \chi_{118}(x_{118})  \chi_{119}(x_{119})  \chi_{120}(x_{120})   \\
6 & t_{5}t_{7}/t_{6}^2 & h_{6}(c_6/r_6)h\cdot \chi_{6}(-(c_6/r_6)^2x_6)\cdot\chi_{1}(x_{1})  \chi_{2}(x_{2})  \chi_{3}(x_{3})  \chi_{4}(x_{4})  \chi_{5}(-x_{5}x_{6}) \chi_{13}(x_{5})  \cdot \chi_{14}(-x_{7})  \chi_{8}(x_{8})  \chi_{9}(x_{9})  \chi_{10}(x_{10})  \chi_{11}(x_{11})  \chi_{20}(x_{12})  \chi_{5}(-x_{13})  \chi_{7}(x_{14})  \chi_{22}(-x_{15})  \chi_{16}(x_{16}) \\&&
\chi_{17}(x_{17})  \chi_{26}(x_{18})  \chi_{27}(x_{19})  \chi_{12}(-x_{20})  \chi_{21}(x_{21})  \chi_{15}(x_{22})  \chi_{23}(x_{23})  \chi_{31}(x_{24})  \chi_{33}(x_{25})  \chi_{18}(-x_{26})  \chi_{19}(-x_{27})  \chi_{28}(x_{28})  \chi_{29}(x_{29})  \chi_{38}(x_{30})  \chi_{24}(-x_{31})  \chi_{40}(x_{32})  \\&&
\chi_{25}(-x_{33})  \chi_{34}(x_{34})  \chi_{35}(x_{35})  \chi_{36}(x_{36})  \chi_{45}(x_{37})  \chi_{30}(-x_{38})  \chi_{39}(x_{39})  \chi_{32}(-x_{40})  \chi_{41}(x_{41})  \chi_{42}(x_{42})  \chi_{43}(x_{43})  \chi_{51}(x_{44})  \chi_{37}(-x_{45})  \chi_{46}(x_{46})  \chi_{47}(x_{47})  \chi_{48}(x_{48})  \\&&
\chi_{49}(x_{49})  \chi_{50}(x_{50})  \chi_{44}(-x_{51})  \chi_{52}(x_{52})  \chi_{53}(x_{53})  \chi_{54}(x_{54})  \chi_{61}(-x_{55})  \chi_{56}(x_{56})  \chi_{57}(x_{57})  \chi_{58}(x_{58})  \chi_{66}(-x_{59})  \chi_{60}(x_{60})  \chi_{55}(x_{61})  \chi_{68}(-x_{62})  \chi_{63}(x_{63})  \chi_{71}(-x_{64})  \\&&
\chi_{65}(x_{65})  \chi_{59}(x_{66})  \chi_{73}(-x_{67})  \chi_{62}(x_{68})  \chi_{69}(x_{69})  \chi_{76}(-x_{70})  \chi_{64}(x_{71})  \chi_{78}(-x_{72})  \chi_{67}(x_{73})  \chi_{74}(x_{74})  \chi_{80}(-x_{75})  \chi_{70}(x_{76})  \chi_{83}(-x_{77})  \chi_{72}(x_{78})  \chi_{79}(x_{79})  \chi_{75}(x_{80})  \\&&
\chi_{86}(-x_{81})  \chi_{82}(x_{82})  \chi_{77}(x_{83})  \chi_{84}(x_{84})  \chi_{85}(x_{85})  \chi_{81}(x_{86})  \chi_{87}(x_{87})  \chi_{88}(x_{88})  \chi_{89}(x_{89})  \chi_{90}(x_{90})  \chi_{91}(x_{91})  \chi_{96}(-x_{92})  \chi_{93}(x_{93})  \chi_{94}(x_{94})  \chi_{100}(-x_{95})  \chi_{92}(x_{96}) \\&&
\chi_{97}(x_{97})  \chi_{98}(x_{98})  \chi_{103}(-x_{99})  \chi_{95}(x_{100})  \chi_{101}(x_{101})  \chi_{105}(-x_{102})  \chi_{99}(x_{103})  \chi_{107}(-x_{104})  \chi_{102}(x_{105})  \chi_{106}(x_{106})  \chi_{104}(x_{107})  \chi_{108}(x_{108})  \chi_{109}(x_{109})  \chi_{110}(x_{110})  \chi_{111}(x_{111})  \\&&
\chi_{112}(x_{112})  \chi_{113}(x_{113})  \chi_{114}(x_{114})  \chi_{115}(x_{115})  \chi_{116}(x_{116})  \chi_{118}(-x_{117})  \chi_{117}(x_{118})  \chi_{119}(x_{119})  \chi_{120}(x_{120})   \\
7 & t_{6}t_{8}/t_{7}^2 & h_{7}(c_7/r_7)h\cdot \chi_{7}(-(c_7/r_7)^2x_7)\cdot\chi_{1}(x_{1})  \chi_{2}(x_{2})  \chi_{3}(x_{3})  \chi_{4}(x_{4})  \chi_{5}(x_{5})  \chi_{6}(-x_{6}x_{7}) \chi_{14}(x_{6})  \cdot \chi_{15}(-x_{8})  \chi_{9}(x_{9})  \chi_{10}(x_{10})  \chi_{11}(x_{11})  \chi_{12}(x_{12})  \chi_{21}(x_{13})  \chi_{6}(-x_{14})  \chi_{8}(x_{15})  \chi_{16}(x_{16}) \\&& \chi_{17}(x_{17})  \chi_{18}(x_{18})  \chi_{19}(x_{19})  \chi_{28}(x_{20})  \chi_{13}(-x_{21})  \chi_{22}(x_{22})  \chi_{23}(x_{23})  \chi_{24}(x_{24})  \chi_{25}(x_{25})  \chi_{34}(x_{26})  \chi_{35}(x_{27})  \chi_{20}(-x_{28})  \chi_{29}(x_{29})  \chi_{30}(x_{30})  \chi_{39}(x_{31})  \chi_{32}(x_{32})  \\&&
\chi_{41}(x_{33})  \chi_{26}(-x_{34})  \chi_{27}(-x_{35})  \chi_{36}(x_{36})  \chi_{37}(x_{37})  \chi_{46}(x_{38})  \chi_{31}(-x_{39})  \chi_{49}(x_{40})  \chi_{33}(-x_{41})  \chi_{42}(x_{42})  \chi_{43}(x_{43})  \chi_{44}(x_{44})  \chi_{53}(x_{45})  \chi_{38}(-x_{46})  \chi_{47}(x_{47})  \chi_{55}(x_{48})  \\&&
\chi_{40}(-x_{49})  \chi_{50}(x_{50})  \chi_{58}(x_{51})  \chi_{59}(x_{52})  \chi_{45}(-x_{53})  \chi_{54}(x_{54})  \chi_{48}(-x_{55})  \chi_{56}(x_{56})  \chi_{64}(x_{57})  \chi_{51}(-x_{58})  \chi_{52}(-x_{59})  \chi_{60}(x_{60})  \chi_{61}(x_{61})  \chi_{62}(x_{62})  \chi_{70}(x_{63})  \chi_{57}(-x_{64})  \\&&
\chi_{65}(x_{65})  \chi_{66}(x_{66})  \chi_{67}(x_{67})  \chi_{74}(-x_{68})  \chi_{75}(x_{69})  \chi_{63}(-x_{70})  \chi_{71}(x_{71})  \chi_{72}(x_{72})  \chi_{79}(-x_{73})  \chi_{68}(x_{74})  \chi_{69}(-x_{75})  \chi_{76}(x_{76})  \chi_{77}(x_{77})  \chi_{84}(-x_{78})  \chi_{73}(x_{79})  \chi_{80}(x_{80})  \\&&
\chi_{81}(x_{81})  \chi_{82}(x_{82})  \chi_{88}(-x_{83})  \chi_{78}(x_{84})  \chi_{85}(x_{85})  \chi_{91}(-x_{86})  \chi_{92}(-x_{87})  \chi_{83}(x_{88})  \chi_{89}(x_{89})  \chi_{95}(-x_{90})  \chi_{86}(x_{91})  \chi_{87}(x_{92})  \chi_{93}(x_{93})  \chi_{99}(-x_{94})  \chi_{90}(x_{95})  \chi_{96}(x_{96})  \\&&
\chi_{97}(x_{97})  \chi_{102}(-x_{98})  \chi_{94}(x_{99})  \chi_{100}(x_{100})  \chi_{104}(-x_{101})  \chi_{98}(x_{102})  \chi_{103}(x_{103})  \chi_{101}(x_{104})  \chi_{105}(x_{105})  \chi_{106}(x_{106})  \chi_{107}(x_{107})  \chi_{108}(x_{108})  \chi_{109}(x_{109})  \chi_{110}(x_{110})  \chi_{111}(x_{111})  \\&&
\chi_{112}(x_{112})  \chi_{113}(x_{113})  \chi_{114}(x_{114})  \chi_{115}(x_{115})  \chi_{116}(x_{116})  \chi_{117}(x_{117})  \chi_{119}(-x_{118})  \chi_{118}(x_{119})  \chi_{120}(x_{120})   \\
8 & t_{7}/t_{8}^2 & h_{8}(c_8/r_8)h\cdot \chi_{8}(-(c_8/r_8)^2x_8)\cdot\chi_{1}(x_{1})  \chi_{2}(x_{2})  \chi_{3}(x_{3})  \chi_{4}(x_{4})  \chi_{5}(x_{5})  \chi_{6}(x_{6})  \chi_{7}(-x_{7}x_{8}) \chi_{15}(x_{7})  \cdot \chi_{9}(x_{9})  \chi_{10}(x_{10})  \chi_{11}(x_{11})  \chi_{12}(x_{12})  \chi_{13}(x_{13})  \chi_{22}(x_{14})  \chi_{7}(-x_{15})  \chi_{16}(x_{16}) \\&&
\chi_{17}(x_{17})  \chi_{18}(x_{18})  \chi_{19}(x_{19})  \chi_{20}(x_{20})  \chi_{29}(x_{21})  \chi_{14}(-x_{22})  \chi_{23}(x_{23})  \chi_{24}(x_{24})  \chi_{25}(x_{25})  \chi_{26}(x_{26})  \chi_{27}(x_{27})  \chi_{36}(x_{28})  \chi_{21}(-x_{29})  \chi_{30}(x_{30})  \chi_{31}(x_{31})  \chi_{32}(x_{32})  \\&&
\chi_{33}(x_{33})  \chi_{42}(x_{34})  \chi_{43}(x_{35})  \chi_{28}(-x_{36})  \chi_{37}(x_{37})  \chi_{38}(x_{38})  \chi_{47}(x_{39})  \chi_{40}(x_{40})  \chi_{50}(x_{41})  \chi_{34}(-x_{42})  \chi_{35}(-x_{43})  \chi_{44}(x_{44})  \chi_{45}(x_{45})  \chi_{54}(x_{46})  \chi_{39}(-x_{47})  \chi_{48}(x_{48})  \\&&
\chi_{56}(x_{49})  \chi_{41}(x_{50})  \chi_{51}(x_{51})  \chi_{52}(x_{52})  \chi_{60}(x_{53})  \chi_{46}(-x_{54})  \chi_{62}(x_{55})  \chi_{49}(-x_{56})  \chi_{57}(x_{57})  \chi_{65}(x_{58})  \chi_{67}(x_{59})  \chi_{53}(-x_{60})  \chi_{68}(x_{61})  \chi_{55}(-x_{62})  \chi_{63}(x_{63})  \chi_{72}(x_{64})  \\&&
\chi_{58}(-x_{65})  \chi_{73}(x_{66})  \chi_{59}(-x_{67})  \chi_{61}(-x_{68})  \chi_{69}(x_{69})  \chi_{77}(x_{70})  \chi_{78}(x_{71})  \chi_{64}(-x_{72})  \chi_{66}(-x_{73})  \chi_{74}(x_{74})  \chi_{81}(x_{75})  \chi_{83}(x_{76})  \chi_{70}(-x_{77})  \chi_{71}(-x_{78})  \chi_{79}(x_{79})  \chi_{86}(x_{80})  \\&&
\chi_{75}(-x_{81})  \chi_{87}(x_{82})  \chi_{76}(-x_{83})  \chi_{84}(x_{84})  \chi_{90}(x_{85})  \chi_{80}(-x_{86})  \chi_{82}(-x_{87})  \chi_{88}(x_{88})  \chi_{94}(x_{89})  \chi_{85}(-x_{90})  \chi_{91}(x_{91})  \chi_{92}(x_{92})  \chi_{98}(x_{93})  \chi_{89}(-x_{94})  \chi_{95}(x_{95})  \chi_{96}(x_{96})  \\&&
\chi_{101}(x_{97})  \chi_{93}(-x_{98})  \chi_{99}(x_{99})  \chi_{100}(x_{100})  \chi_{97}(-x_{101})  \chi_{102}(x_{102})  \chi_{103}(x_{103})  \chi_{104}(x_{104})  \chi_{105}(x_{105})  \chi_{106}(x_{106})  \chi_{107}(x_{107})  \chi_{108}(x_{108})  \chi_{109}(x_{109})  \chi_{110}(x_{110})  \chi_{111}(x_{111})\\&&  \chi_{112}(x_{112})  \chi_{113}(x_{113})  \chi_{114}(x_{114})  \chi_{115}(x_{115})  \chi_{116}(x_{116})  \chi_{117}(x_{117})  \chi_{118}(x_{118})  \chi_{120}(-x_{119})  \chi_{119}(x_{120})   \\
\end{array}$
}
\caption{Iwasawa coordinates for $E_8$ with compensating element 
\mbox{$\kappa_i(c_ix_i)=\chi_i(-c_ix_i)h_i(-1/\sqrt{(c_ix_i)^2+1})\wt_i \chi_{i}(-c_ix_i)$: Part 2}}\label{T-last}
\end{table}

\end{landscape}

%Higher-dimensional analogs, the $p$-branes, satisfy similar conditions \cite{DGHT}. The particle, together with its quantization condition,can be obtained from certain $p$-branes, together with their quantization conditions, via dimensional reduction.  

%%%%%%%%%%%%%%%%%%%%%%%%
\section{The $\SL_2$ and $E_7$ cosets}
%%%%%%%%%%%%%%%%%%%
\subsection{$\SL_2$ symmetry in type IIB supergravity in $D=10$ dimensions}
%%%%%%%%%%%%%%%%%%%
Type IIB supergravity in $D=10$ spacetime dimensions has  two real scalar fields, the dilation $\phi$ and the axion $\chi$. These can be combined to define a complex scalar field
$$\tau=\chi+ie^{\phi}\;,$$
which parametrizes the upper half-plane 
$\SO(2)\backslash \SL_2(\mathbb{R})\cong \mathbb{H}$ and  transforms under $\SL_2(\mathbb{R})$ 
as the fractional linear transformation
$$\left(\begin{matrix} a & b\\ c & d\end{matrix}\right):\tau\mapsto\dfrac{a\tau+ b}{ c\tau + d}\;.$$
The group $G=\SL_2(\mathbb{R})$ is a global symmetry of the theory and $G$ acts on the asymptotic 
values of the scalar fields at spatial infinity. The orbits of the global symmetry group $G$
yield families of $p$-brane solutions (\cite{CLPS}).

\subsection{Action of $\SL_2(\mathbb{Z})$ on the scalar sector} We will use the results of Section 3 but change some of the notation to follow the conventions of [HPS]. The coset element $V(x)$ 
of the $(\phi, \chi)$ scalar sector is given in Borel gauge by ([HPS])
$$V(x)=\exp\left[\dfrac{\phi(x)}{2}\left(\begin{matrix} -1 & ~0\\ 0 & 1\end{matrix}\right)\right]\exp\left[\chi(x)\left(\begin{matrix} 0 & 1\\ 0 & 0\end{matrix}\right)\right]=\left(\begin{matrix} e^{-\frac{\phi(x)}{2}} & \chi(x)e^{\frac{\phi(x)}{2}}\\ 0 & e^{\frac{\phi(x)}{2}}\end{matrix}\right)\;,$$
where $h=\left(\begin{matrix} -1 & ~0\\ 0 & 1\end{matrix}\right)$ and $e=\left(\begin{matrix} 0 & 1\\ 0 & 0\end{matrix}\right)$ are  $\mathfrak{sl}_2(\mathbb{R})$ generators and $x$ is a spacetime coordinate. To determine the action of $\SL_2(\mathbb{Z})$ on the coset element $V(x)$, it is enough to determine the action of the generators $T$ and $S$ on $V(x)$.

\begin{theorem} The action of the generators  $T$ and $S$ of $\SL_2(\mathbb{Z})$ on $V(x)\in \SO(2)\backslash \SL_2(\mathbb{R})$ is given in Iwasawa coordinates by
$$V(x)\cdot T=\left(\begin{matrix} e^{-{\phi(x)}/{2}} & 0\\ 0 & e^{{\phi(x)}/{2}}\end{matrix}\right)
\left(\begin{matrix} 1 & e^{{\phi(x)}}[e^{-\phi(x)}+\chi(x)]\\ 0 & e^{{\phi(x)}/{2}}\end{matrix}\right)$$
and 
$$V(x)\cdot S=\left(\begin{matrix} e^{{\phi(x)}/{2}} \sqrt{1+\chi(x)^2} & 0 \\ 0 & \dfrac{e^{-{\phi(x)}/{2}}}{\sqrt{1+\chi(x)^2}} \end{matrix}\right)
\left(\begin{matrix} 1 & -\dfrac{e^{-\phi(x)}\chi(x)}{{1+\chi(x)^2}}  \\ 0 & 1\end{matrix}\right)\;.$$
\end{theorem}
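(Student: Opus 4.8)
The plan is to deduce the theorem from the rank-one computations already carried out in Section~\ref{S-action-sl2}, the only additional work being a change of variables between the Borel-gauge parametrization $V(x)$ and the Iwasawa coordinates $h_\alpha(t^{-1})\chi_\alpha(x_0)$ used there. \textbf{The dictionary.} Reading off the entries of $V(x)$ against the standard form $h_\alpha(s)\chi_\alpha(y)$, whose matrix has diagonal $(s^{-1},s)$ and $(1,2)$-entry $s^{-1}y$, one has $s^{-1}=e^{-\phi(x)/2}$ and $s^{-1}y=\chi(x)e^{\phi(x)/2}$, i.e. $s=e^{\phi(x)/2}$ and $y=\chi(x)e^{\phi(x)}$. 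Equivalently, in the notation of Section~\ref{S-action-sl2} the coset $\SO(2)V(x)$ has Iwasawa coordinates $hu$ with $t=e^{-\phi(x)/2}>0$, $h=h_\alpha(t^{-1})\in H^+$, and $u=\chi_\alpha(x_0)$, $x_0=\chi(x)e^{\phi(x)}$. In particular $V(x)$ is already in $H^+U$, so the right action of $z\in\SL_2(\Z)$ on this coset is by definition the canonical Iwasawa representative of $\SO(2)V(x)z$.

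For $z=T=\chi_\alpha(1)$ I would invoke the computation in Section~\ref{S-action-sl2} that $\SO(2)hu\cdot T=\SO(2)h_\alpha(t^{-1})\chi_\alpha(x_0+1)$, which is already in Iwasawa form, so no compensating element is needed; substituting $t^{-1}=e^{\phi(x)/2}$ and $x_0+1=e^{\phi(x)}\big[e^{-\phi(x)}+\chi(x)\big]$ and separating the resulting upper-triangular matrix into its diagonal part $\diag(e^{-\phi(x)/2},e^{\phi(x)/2})$ and a unipotent upper-triangular factor gives the formula for $V(x)\cdot T$. For $z=S=\widetilde{w}_\alpha$ I would apply Proposition~\ref{iwasawa} verbatim: it supplies the compensating element $\kappa\in\SO(2)$ together with the identity $\SO(2)h_\alpha(t^{-1})\chi_\alpha(x_0)\cdot S=\SO(2)\,h_\alpha\!\big(1/\sqrt{t^2x_0^2+t^{-2}}\big)\chi_\alpha\!\big(-t^2x_0/(t^2x_0^2+t^{-2})\big)$. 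With $t=e^{-\phi(x)/2}$ and $x_0=\chi(x)e^{\phi(x)}$ one computes $t^2x_0^2+t^{-2}=\chi(x)^2e^{\phi(x)}+e^{\phi(x)}=e^{\phi(x)}\big(1+\chi(x)^2\big)$ and $-t^2x_0=-\chi(x)$, whence the new toral coordinate is $h_\alpha\big(e^{-\phi(x)/2}/\sqrt{1+\chi(x)^2}\big)$ and the new unipotent coordinate is $\chi_\alpha\big(-e^{-\phi(x)}\chi(x)/(1+\chi(x)^2)\big)$; writing $h_\alpha(s)=\diag(s^{-1},s)$ turns this into exactly the diagonal-times-unipotent product displayed in the theorem for $V(x)\cdot S$.

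I do not expect a genuine obstacle: once the dictionary is fixed the proof is a substitution into the general $\SL_2(\R)$ formulas, and it can (and should) be cross-checked by a direct matrix multiplication — verifying $\det\kappa=1$ and $\kappa\kappa^T=\Id$ so that $\kappa\in\SO(2)$, and that $\kappa V(x)S$ is upper triangular, exactly as in the proof of Proposition~\ref{iwasawa}. The one point that requires care is the normalization: the supergravity convention carries a factor $\tfrac12$ in the exponent of the Cartan generator, so that $V(x)$ corresponds to $t=e^{-\phi(x)/2}$ rather than $e^{-\phi(x)}$, and it is this choice that makes the $\SL_2$-invariant $t^2x_0^2+t^{-2}$ collapse to $e^{\phi(x)}(1+\chi(x)^2)$ and the $S$-image emerge in the clean form stated above.
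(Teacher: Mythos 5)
Your proposal is correct and follows essentially the same route as the paper: the paper's own proof also identifies $t=e^{-\phi(x)/2}$, $u=\chi(x)e^{\phi(x)}$, multiplies out $V(x)\cdot T$ directly, and applies the compensating element of Proposition~\ref{iwasawa} to $V(x)\cdot S$, which is exactly your substitution into the Section~\ref{S-action-sl2} formulas. Your version even cleans up a small blemish in the statement: the second factor of the displayed $V(x)\cdot T$ should be genuinely unipotent (with $(2,2)$-entry $1$ rather than $e^{\phi(x)/2}$), as your diagonal-times-unipotent splitting makes clear.
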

\medskip\noindent{\it Proof:} We start with the right action of $T$:
$$V(x)\cdot T=\left(\begin{matrix} e^{-{\phi(x)}/{2}} & \chi(x)e^{{\phi(x)}/{2}}\\ 0 & e^{{\phi(x)}/{2}}\end{matrix}\right)\cdot \left(\begin{matrix} 1 & 1\\ 0 & 1\end{matrix}\right)
=
\left(\begin{matrix} e^{-{\phi(x)}/{2}} & e^{{\phi(x)}/{2}}[e^{-\phi(x)}+\chi(x)]\\ 0 & e^{{\phi(x)}/{2}}\end{matrix}\right),$$
which may be written in Iwasawa coordinates as
$$V(x)\cdot T=\left(\begin{matrix} e^{-{\phi(x)}/{2}} & 0\\ 0 & e^{{\phi(x)}/{2}}\end{matrix}\right)
\left(\begin{matrix} 1 & e^{{\phi(x)}}[e^{-\phi(x)}+\chi(x)]\\ 0 & e^{{\phi(x)}/{2}}\end{matrix}\right)\;.$$
Similarly, we have for the action of $S$:
$$V(x)\cdot S=\left(\begin{matrix} e^{-{\phi(x)}/{2}} & \chi(x)e^{{\phi(x)}/{2}}\\ 0 & e^{{\phi(x)}/{2}}\end{matrix}\right)\cdot \left(\begin{matrix} 0 & 1\\ -1 & 0\end{matrix}\right)= 
\left(\begin{matrix} -tu & t\\ -t^{-1} & 0\end{matrix}\right)\;,$$
where $t=e^{-{\phi(x)}/{2}}$ and $u=\chi(x)e^{\phi(x)}$. However, this last
 expression is not in Iwasawa form, hence we multiply on the left by the compensating element (as in Proposition~\ref{iwasawa})
$$\kappa=\dfrac{1}{\sqrt{t^2u^2+t^{-2}}}\left(\begin{matrix} -tu & -t^{-1}\\ t^{-1} & -tu\end{matrix}\right)$$
to obtain the desired form
$$\kappa \cdot V(x)\cdot S=\left(\begin{matrix} e^{{\phi(x)}/{2}} \sqrt{1+\chi(x)^2} & 0 \\ 0 & \dfrac{e^{-{\phi(x)}/{2}}}{\sqrt{1+\chi(x)^2}} \end{matrix}\right)
\left(\begin{matrix} 1 & -\dfrac{e^{-\phi(x)}\chi(x)}{{1+\chi(x)^2}}  \\ 0 & 1\end{matrix}\right).$$
%$\square$
\endofproof

For example, if $\phi(x)=\chi(x)=0$, that is, at the point $i$, the action reduces to
$$V\cdot T=\left(\begin{matrix} 1 & 0\\ 0 & 1\end{matrix}\right)\left(\begin{matrix} 1 & 1\\ 0 & 1\end{matrix}\right)=\left(\begin{matrix} 1 & 1\\ 0 & 1\end{matrix}\right),$$
$$V\cdot S=\left(\begin{matrix} 1 & 0\\ 0 & 1\end{matrix}\right)\left(\begin{matrix} 1 & 0\\ 0 & 1\end{matrix}\right)=\left(\begin{matrix} 1 & 0\\ 0 & 1\end{matrix}\right).$$
Note that the latter result is expected since $S\in K=\SO(2)$, so $S$  fixes $V$ at the point $i$.

%%%%%%%%%%%%%%%%%%%%%
\subsection{The action of $G(\mathbb{Z})$ on $K\backslash G(\mathbb{R})$}
\label{Sec E7}
 %%%%%%%%%%%%%%%%%%%%%
 Let $G=G(\R)$ be a split, simply connected Lie group with maximal compact subgroup $K$ as in Section~\ref{Sec Iwasawa}. Let $\mathfrak{g}$ be the Lie algebra of $G$.
Let $V$ be a highest-weight representation of $\mathfrak{g}$. Then $G\leq \Aut(V)$. Provided $V$ contains an admissible lattice $V_\Z$,  we can define a $\Z$-form $G(\Z)$ of $G$ as the stabilizer
 of $V_\Z$ (\cite{St}).
 In order to describe the action of $G(\Z)$ on $K\backslash G(\R)$, we need a generating set for $G(\Z)$.
 The following result provides generators for a large class of integral forms:
 \begin{theorem}[\cite{CCa},\cite{BC}]
 Let $V$ be a highest-weight representation of $\mathfrak{g}$ and suppose that the lattice generated by the weights of $V$ contains all the fundamental weights.
 Then $G(\Z)$ is generated by 
$\chi_{\alpha_i}(1)$ and $\widetilde{w}_{\alpha_i}=\chi_{\alpha_i}(1)\chi_{-\alpha_i}(-1)\chi_{\alpha_i}(1)$, for $i=1,\dots , n$.
\end{theorem}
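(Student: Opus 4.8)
The plan is to show that the subgroup $\Gamma:=\langle\chi_{\alpha_i}(1),\ \widetilde{w}_{\alpha_i}\mid i=1,\dots,n\rangle$ equals $G(\Z)=\Stab_{G(\R)}(V_\Z)$, by proving the two inclusions separately. The inclusion $\Gamma\subseteq G(\Z)$ is essentially automatic: by the defining property of an admissible lattice, every $\chi_\alpha(m)$ with $m\in\Z$ preserves $V_\Z$, and $\widetilde{w}_{\alpha_i}=\chi_{\alpha_i}(1)\chi_{-\alpha_i}(-1)\chi_{\alpha_i}(1)$ is a product of such elements; hence $\Gamma\subseteq G(\Z)$, and in particular $h_{\alpha_i}(-1)=\widetilde{w}_{\alpha_i}^{\,2}\in\Gamma$.

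For the reverse inclusion I would first reduce to the elementary subgroup $E(\Z):=\langle\chi_\beta(1)\mid\beta\in\Phi\rangle$. The $\widetilde{w}_{\alpha_i}$ generate the extended Weyl group $\widetilde{W}$, which surjects onto the Weyl group $W$, and every root $\beta\in\Phi$ is $W$-conjugate to a simple root, say $\beta=w\alpha_i$. Lifting $w$ to some $\widetilde{w}\in\widetilde{W}$ and applying the Steinberg relation $\widetilde{w}_\alpha^{-1}\chi_\gamma(a)\widetilde{w}_\alpha=\chi_{w_\alpha\gamma}(\eta_{\alpha\gamma}a)$ repeatedly gives $\widetilde{w}\,\chi_{\alpha_i}(\pm1)\,\widetilde{w}^{-1}=\chi_\beta(\pm1)$; since $\chi_\beta(m)=\chi_\beta(1)^m$, this yields $E(\Z)\subseteq\Gamma$. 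As the reverse inclusion $\Gamma\subseteq E(\Z)$ is clear from the definitions, in fact $\Gamma=E(\Z)$, and it remains only to prove $G(\Z)\subseteq E(\Z)$.

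This last step is where the weight hypothesis enters and where the real work lies. Because the weights of $V$ generate the full weight lattice $P$ (equivalently, contain all fundamental weights), the lattice $V_\Z$ ``sees'' all of $G$: one checks that $\Stab_{G(\R)}(V_\Z)$ is precisely the group of $\Z$-points of the simply connected Chevalley--Demazure group scheme of type $\Phi$ --- a representation whose weights spanned a proper sublattice would not pin this down (for the trivial representation the stabilizer is all of $G(\R)$, and intermediate choices can produce arithmetic groups strictly between a simply connected and an adjoint form, as with $\PSL_2(\Z)\subsetneq\PGL_2(\Z)$). Granting this identification, $G(\Z)=E(\Z)$ is classical: for $G=\SL_2$ it is the Euclidean algorithm underlying $\SL_2(\Z)=\langle T,S\rangle$ from Section~\ref{S-action-sl2}, and in general it follows from reduction theory together with the congruence subgroup results of Bass--Milnor--Serre and Matsumoto --- concretely, one shows that any $g\in G(\Z)$ can be carried, by left and right multiplication by elements of $E(\Z)$, into $H(\Z)U(\Z)$, with $H(\Z)$ generated by the $h_{\alpha_i}(-1)=\widetilde{w}_{\alpha_i}^{\,2}$ and $U(\Z)\subseteq E(\Z)$ via the Steinberg commutator relations. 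Combining this with $\Gamma=E(\Z)$ gives $G(\Z)\subseteq\Gamma$ and hence equality.

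The main obstacle is exactly this last step: correctly identifying the stabilizer of $V_\Z$ with the simply connected $\Z$-form (which is where the fundamental-weight hypothesis must genuinely be used, especially in controlling the torus part), together with the deeper input that that group is generated by its root subgroups. For the U-duality types at hand ($A_n$, $D_n$, $E_n$) all of these ingredients are available, so the argument goes through uniformly.
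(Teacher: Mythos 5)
The paper does not prove this theorem: it is quoted verbatim from \cite{CCa} and \cite{BC}, so there is no internal argument to compare against. Judged on its own terms, your outline is structurally sound and isolates the right issues: the easy inclusion $\Gamma\subseteq G(\Z)$ via admissibility of $V_\Z$; the identification $\Gamma=E(\Z)$ by conjugating $\chi_{\alpha_i}(\pm1)$ around the $W$-orbit of the simple roots; and the observation that the fundamental-weight hypothesis is exactly what forces $\Stab_{G(\R)}(V_\Z)$ to be the simply connected $\Z$-form rather than something with extra torus elements (your $\PGL_2$ versus $\PSL_2$ remark is the right cautionary example, and $H(\Z)=\operatorname{Hom}(P,\{\pm1\})$ being generated by the $h_{\alpha_i}(-1)=\widetilde{w}_{\alpha_i}^2$ is where that hypothesis pays off).

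The soft spot is the final step, $G(\Z)\subseteq E(\Z)$, which you cite rather than prove, and the citation is not quite aimed correctly. The congruence subgroup results of Bass--Milnor--Serre concern normal subgroups of finite index, not generation; the statement you actually need is the triviality of $K_1(\Phi,\Z)=G(\Z)/E(\Z)$ for simply connected $\Phi$ (Matsumoto, Stein for rank $\geq 2$; the Euclidean algorithm for $\SL_2$). Relatedly, the sentence ``any $g\in G(\Z)$ can be carried by left and right multiplication by $E(\Z)$ into $H(\Z)U(\Z)$'' is essentially a restatement of the conclusion rather than a mechanism for it: there is no Bruhat decomposition over $\Z$, so this reduction is precisely the content that must be supplied. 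The sources the paper cites do this by a direct descent argument on the lattice itself --- examining the coordinates of $g\cdot v_\lambda\in V_\Z$ in a Chevalley basis and using the root-group generators to shorten that vector until it becomes $\pm v_\lambda$, a generalized continued-fraction algorithm --- which is self-contained and avoids importing $K_1$ machinery. So: correct skeleton, correct location of the difficulty, but the decisive step is outsourced to a theorem that is true yet misattributed, and as written the proof is an outline rather than a complete argument.
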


The Iwasawa coordinates are 
$$Khu=K\prod_{j=1}^n h_{j}(t_{j})\prod_{r=1}^{N} \chi_r(x_r)$$
where $t_j, x_r\in\R,$ and $t_j>0$.

The right action of $\chi_{i}(1)$ for $i=1,\dots ,n$, is given by
\begin{eqnarray*}
&&\prod_{j=1}^n h_{j}(t_j)\prod_{r=1}^{N} \chi_{r}(x_r)\cdot \chi_{i}(1),
%\label{action ofX}
\end{eqnarray*}
which can be rewritten as an expression   in  Iwasawa form
 using the collection algorithms of \cite{CMH}.
However the right action of $\widetilde{w}_{i}$ for $i=1,\dots ,n$, is given by
\begin{eqnarray*}
&&\prod_{j=1}^n h_{j}(t_j)\prod_{r=1}^{N} \chi_{r}(x_r)\cdot \widetilde{w}_{i},
%\label{action of w}
\end{eqnarray*}
which is not   in Iwasawa form. 
 To remedy this, we use the  compensating element $\kappa\in K$ of Section~\ref{Sec Iwasawa} to  left multiply  
%$\prod_j h_{{i}_j}(t_j)^{\epsilon_j}\prod_j \chi_{(\alpha)_j}(s_j)\cdot \widetilde{w}_{i}$ 
the expression %\eqref{action of w}
by $\kappa_i(c_ix_i)$ %=\chi_i(-c_ix_i)h_\beta(-1/\sqrt{(c_ix_i)^2+1})\wt_i \chi_{i}(-c_ix_i)$$ 
which gives an expression in Iwasawa form.

This result applies to  the
 noncompact split real form $E_{7(+7)}(\mathbb{R})$ of the exceptional Lie group  $E_7$ with the following form of $E_7(\mathbb{Z})$:
$$E_{7(+7)}(\mathbb{Z}) = E_{7(+7)}(\mathbb{R}) \cap \Sp_{56}(\mathbb{Z})\;,$$
discovered in \cite{HT} following \cite{CJ} in the framework of type II string theory. Soul\'e gave a rigorous mathematical proof that the $E_{7(+7)}(\mathbb{Z})$ of Hull and Townsend coincides with the Chevalley $\Z$-form $G(\Z)$ of $G=E_7$ ([S]). Here $E_{7(+7)}(\mathbb{R}) \cap \Sp_{56}(\mathbb{Z})$ is the stabilizer of the standard lattice in the fundamental representation of $E_7$ which has dimension 56. The charge lattice of \cite{HT} can be normalized to coincide with the lattice $V_{\Z}$. Once a basis for $V_{\Z}$ has been chosen, the $E_{7}(\mathbb{Z})$ orbits can be computed explicitly in terms of this basis. 
The  non-compact split real form $E_{7(+7)}(\mathbb{R})$ has maximal compact subgroup $K=[\SU(8)/\{\pm \Id\}]$. The coset $[\SU(8)/\{\pm \Id\}]\backslash E_{7(+7)}(\mathbb{R})$ occurs as a scalar coset for dimensional reduction of $N=1$ supergravity in 11 dimensions  to $N = 8$ supergravity in four dimensions (\cite{CJ}).

\section{The action of $\SL_2(\Z)$ on the charge lattice}\label{charge}
%%%%%%%%%%%%%%%%%%%%%%%
\label{Sec SL2 action}

For now,  we assume the existence of dyons, that is, particles that carry both electric and magnetic charge,
see \cite{O1} and \cite{O2}. 
The Dirac quantization condition  was generalized  by Zwanziger and Schwinger to a pair of dyons with 
$$\text{electric charges $q_1$ and $q_2$}$$
$$\text{magnetic charges $g_1$ and $g_2$}$$
such that 
\(
q_1g_2-q_2g_1=2\pi n\hbar, ~~\ n=0,\ \pm 1,\ \pm 2,\cdots .
\label{ZS cond} 
\)
Witten solved this quantization condition giving a general family of solutions $\{(q,g)\}$ of electric and magnetic charges respectively, with
$$q+{i}g=q_0(m\tau+n)$$
where $m,n\in \mathbb{Z}$ are coprime, $q_0$ is the electron charge, and $\tau$ is the parameter 
$$\tau=\dfrac{\theta}{2\pi}+\dfrac{2\pi i n\hbar}{q_0^2}\;,$$
where 
%$n_0$ is a constant and 
$\theta$ is the ``vacuum angle"
%introduced by Witten 
([Wi]). Here $\tau$ is a complex variable 
with positive imaginary part and which 
depends on dimensionless parameters corresponding to the particular theory. 
%The imaginary part of $\tau$ is positive, being 
%essentially the inverse of the fine structure constant.

If we equate real and imaginary parts of $q+{i}g=q_0(m\tau+n)$ and set $m=0$, then we obtain
$q=q_0n,\ n\in\Z$.
This means that the electric charge is an integer multiple of the basic electron charge $q_0$.

The collection of possible dyonic charges is encoded in the 
{\it charge lattice} $$\mathcal{Q}=\{q+{i} g\mid q,g~{\rm satisfy~condition~} \eqref{ZS cond} \}\subset \C\;,$$
depicted in Figure~\ref{ChargeLattice}.

\begin{figure}[h]
	\centering
		\includegraphics[scale=0.21]{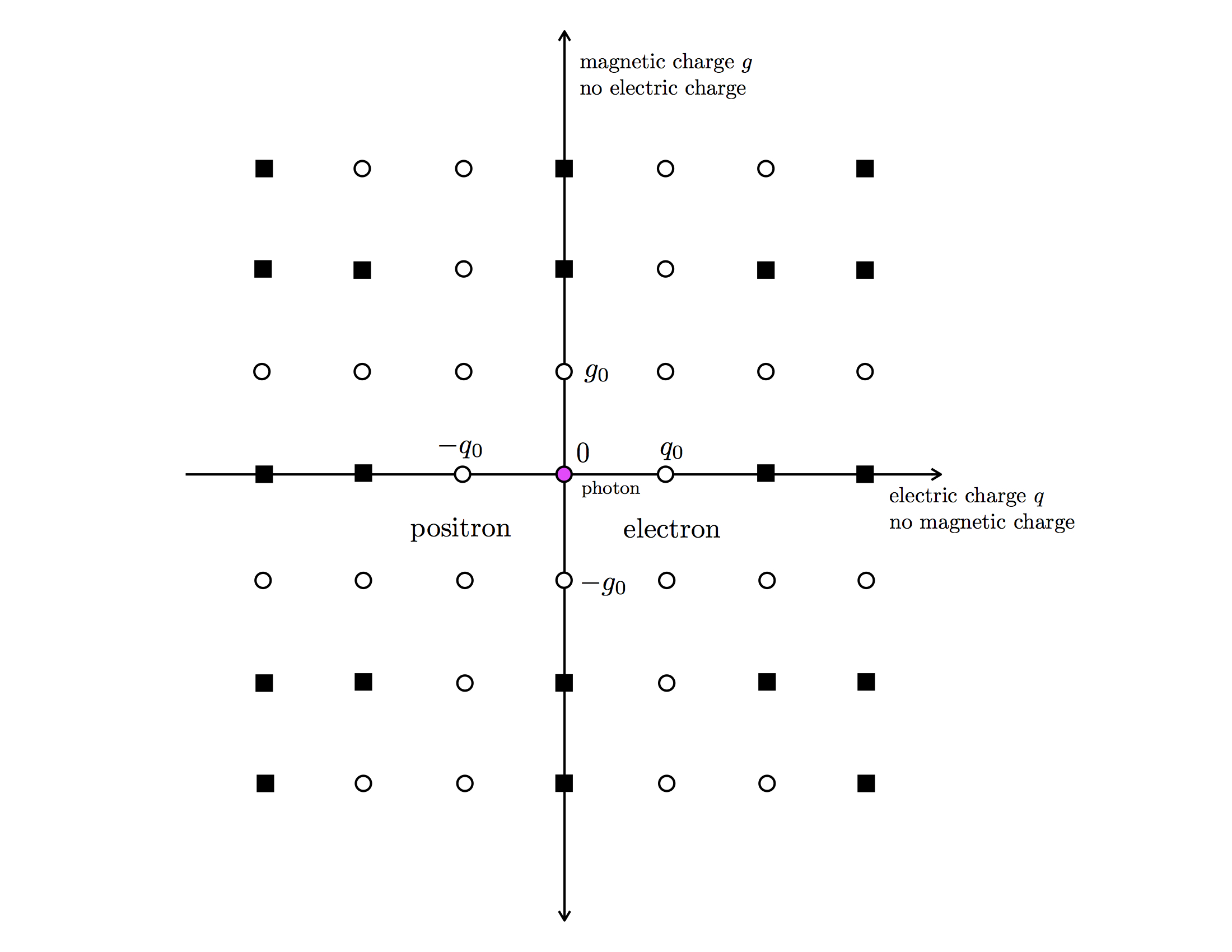}
		\caption{The charge lattice $\mathcal{Q}$}
	\label{ChargeLattice}
\end{figure}

Here electric particles lie on the real axis,  monopoles lie on the imaginary axis,
and a general point in the $(q,g)$ plane off the axes is a  dyon. The origin represents 
a state with no electric or magnetic charge, hence could be taken to represent a photon
or a charge-less (flat) background.  
For magnetically neutral states, electric charges are integral multiplies of $q_0$.
The open circles represent `primitive vectors' on the lattice. These are vectors that can be connected to the origin by a line that does not intersect any other lattice points. Equivalently, $(m,n)$ are relatively prime for these points in the equation $q+{i}g=q_0(m\tau+n)$. The action of the modular group $SL_2(\Z)$ preserves primitive vectors in the charge lattice. 

Given a specific theory, an important question is identifying 
 the subset of the charge lattice that can be realized by single-particle states, as
opposed to multi-particle states. Here it is the primitive vectors that represent single-particle states (\cite{O2}).

We now look at symmetries of the charge lattice. Firstly, the Weyl group $W\cong \mathbb{Z}_2$ permutes the axes of the charge vector lattice, interchanging electric and magnetic charges.
Secondly, the generalization of duality symmetry of Maxwell is a rotation by $\pi/2$ which exchanges electrically charged particles and magnetic monopoles. The $\pi/2$-rotational symmetry  
$$(q_0,0)\to (0,-g_0)\;, \qquad
(-q_0,0)\to (0,g_0)\;,$$
$$(0,g_0)\to (q_0,0)\;, \qquad  \quad
(0,-g_0)\to (-q_0,0)$$
preserves the magnitudes of electric and magnetic charges.
The rotation group $R$ of the elementary charges is hence the cyclic group of order 4 generated by the matrix
$$a=\left(
\begin{array}{cc}
 ~0 & 1 \\
 -1 & 0 \\
\end{array}
\right)\;,$$
that is
$$R=\SL_2(\Z)\cap \SO(2).$$
The discrete subgroup $R$  acts by rotating the elementary charge vector $(q_0,0)$ by integral multiples of $\pi/2$,  while keeping its length fixed. 
%The {\it charge lattice} $$\mathcal{Q}=\{q+{\bf {i}} g\mid q,g\in\Z\}\subseteq \C$$
%is depicted in Fig~\ref{ChargeLattice}. 
The group $R$ is the full rotation group of the charge lattice, and the rotational symmetry preserves primitive vectors and interchanges electric and magnetic charges.

%
%\begin{figure}[h]
%	\centering
%		\includegraphics[scale=0.55]{ChargeLattice.pdf}
%		\caption{The charge lattice $\mathcal{Q}$}
%	\label{ChargeLattice}
%\end{figure}
 
 The charge lattice can be characterized algebraically. 
Let  $G=\SL_2(\mathbb{C})$ and consider the vector spaces
$$V_{\Z}=\Z\left[\begin{matrix} 1 \\ 0\end{matrix}\right]\oplus \Z\left[\begin{matrix} 0 \\ 1\end{matrix}\right],$$
$$V_{\C}=\C\left[\begin{matrix} 1 \\ 0\end{matrix}\right]\oplus \C\left[\begin{matrix} 0 \\ 1\end{matrix}\right].$$
Then $V_{\C}$ is a highest weight module for the Lie algebra $\mathfrak{g}=\mathfrak{sl}_2(\C)$. 
The vector space $V_{\C}$ can be identified with the charge lattice $\mathcal{Q}$ by setting the magnitude of the elementary electric charge $q_0$ equal to 1. The group $\SL_2(\Z)$ can be viewed as  
the subgroup of  $\SL_2(\C)$ that stabilizes the vector space $V_{\Z}$.

In the absence of magnetic charges, the space of states allowed by the Dirac quantization condition, $p=p_0n,\ n\in\Z$, 
is the set of points  $(p,q)$ with $p,q\in\Z-\{0\}$. 
In general, the charge lattice contains pairs of charges $(p,q)$ 
on both the real and imaginary axes with $p,q\neq 0$. As before, primitive vectors correspond to coordinates
 $(p,q)$ with $p,q\in\Z-\{0\}$, $gcd(p,q)=1$. 

The following proposition is easy to verify.

\begin{proposition} The action of $\SL_2(\Z)$ on the set of allowed states $(p,q)$ with $p,q\in\Z-\{0\}$ 
is not transitive. The action of the generators of $\SL_2(\Z)$ is:

$$\left(\begin{matrix}  
1 & 1 \\
 0 & 1  \end{matrix}\right)\left[\begin{matrix} 1 \\ 0 \end{matrix}\right]= 
\left[\begin{matrix} 1 \\ 0\end{matrix}\right],\quad \left(\begin{matrix}  
1 & 1 \\
 0 & 1  \end{matrix}\right)\left[\begin{matrix} 0 \\ 1 \end{matrix}\right]= 
\left[\begin{matrix} 1 \\ 1\end{matrix}\right]$$
$$\left(\begin{matrix}  
1 & 0 \\
 1 & 1  \end{matrix}\right)\left[\begin{matrix} 1 \\ 0 \end{matrix}\right]= 
\left[\begin{matrix} 1 \\ 1\end{matrix}\right],\quad \left(\begin{matrix}  
1 & 0 \\
 1 & 1  \end{matrix}\right)\left[\begin{matrix} 0 \\ 1 \end{matrix}\right]= 
\left[\begin{matrix} 0 \\ 1\end{matrix}\right].$$
Thus there is a subset of fixed vectors.
The action of  $\SL_2(\Z)$ on the set  $(p,q)$ with $p,q\in\Z-\{0\}$, $gcd(p,q)=1$ is  transitive.

\end{proposition}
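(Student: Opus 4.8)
The plan is to dispatch the three assertions separately; each reduces to an elementary property of the standard left action of $\SL_2(\Z)$ on column vectors in $\Z^2$.

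For the failure of transitivity on $\{(p,q)\mid p,q\in\Z\setminus\{0\}\}$, I would observe that if $g\in\SL_2(\Z)$ and $(p',q')=g\cdot(p,q)$, then, since both $g$ and $g^{-1}$ have integer entries, the ideals $p\Z+q\Z$ and $p'\Z+q'\Z$ of $\Z$ coincide; their positive generator $\gcd(p,q)$ is therefore an $\SL_2(\Z)$-invariant. Since the set in question contains vectors of every positive gcd --- for instance $(m,m)$ has gcd $m$ for each $m\geq1$ --- it is a union of infinitely many orbits, so the action is not transitive; concretely $(1,1)$ and $(2,2)$ lie in distinct orbits. The displayed action of the two generators $T=\left(\begin{smallmatrix}1&1\\0&1\end{smallmatrix}\right)$ and $\left(\begin{smallmatrix}1&0\\1&1\end{smallmatrix}\right)$ on the standard basis vectors is then a pair of $2\times2$ matrix products, which also makes the ``fixed vectors'' remark transparent: $T$ fixes $(1,0)$ --- indeed every vector on the $p$-axis --- while $\left(\begin{smallmatrix}1&0\\1&1\end{smallmatrix}\right)$ fixes $(0,1)$ and every vector on the $q$-axis, so each axis state has a nontrivial stabilizer.

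For transitivity on the coprime pairs I would invoke the B\'ezout identity: if $\gcd(p,q)=1$ there exist $r,s\in\Z$ with $ps-qr=1$, so $\left(\begin{smallmatrix}p&r\\q&s\end{smallmatrix}\right)\in\SL_2(\Z)$ sends the column vector $(1,0)$ to $(p,q)$. Hence the set of all primitive vectors of $\Z^2$ forms a single $\SL_2(\Z)$-orbit, and in particular, given any two primitive vectors $v,w$ with all coordinates nonzero, there is $g\in\SL_2(\Z)$ with $g\cdot v=w$. The one point worth a remark is that the set $\{(p,q)\mid p,q\neq0,\ \gcd(p,q)=1\}$ is not $\SL_2(\Z)$-stable --- for example $T$ maps $(0,1)$ into it --- so ``transitive'' is meant here in the sense that any two of its elements are joined by an element of the full group $\SL_2(\Z)$; this causes no difficulty, since the target $w$ already has the required form. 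No genuine obstacle arises; the only care needed is in making precise what ``transitive'' and ``fixed vectors'' mean on these non-invariant subsets.
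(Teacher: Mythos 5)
Your proof is correct and is essentially the standard argument that the paper leaves implicit (the paper merely states the proposition is ``easy to verify''): the $\gcd$ is an $\SL_2(\Z)$-invariant, which rules out transitivity on all of $(\Z\setminus\{0\})^2$, while B\'ezout's identity puts every primitive vector in the orbit of $(1,0)^T$, giving transitivity on the coprime pairs. Your added remark that the subset $\{(p,q)\mid p,q\neq 0,\ \gcd(p,q)=1\}$ is not $\SL_2(\Z)$-stable, so that ``transitive'' must be read as any two such vectors being joined by a group element, is a worthwhile clarification that the paper does not make.
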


%%%%%%%%%%%%%%%%%%
\subsection{Spectrum-generating symmetries for BPS solitons}
%%%%%%%%%%%%%%%%%

The concept of dyons and their corresponding charges from electromagnetism 
generalize to `higher notions' of particles, meaning extended objects, for example,  $p$-branes
(see \cite{DGHT}). 
The case $p=0$ corresponds to the case of point particles described above. 
A {\it  BPS-saturated soliton}  is a  solution of a supergravity theory that describes the low energy limit of a string theory in which infinite $p$-branes occupy a longitudinal submanifold of spacetime. Associated to such a solution is a vector space of electric or magnetic charges, or both, if dyonic $p$-branes are present.

One important goal is to determine the spectrum-generating symmetries for the fundamental BPS solitons. In~\cite{CLPS}, it was shown that the standard global supergravity symmetry group is not sufficient for the purpose of generating complete sets of $p$-brane solitons. The additional ingredient needed is a  scaling transformation that allows one to map between BPS solitons with different masses. 
To that end, consider the matrices
$$\sigma(s)=\left(\begin{matrix} s & 0\\ 0 & 1\end{matrix}\right)\qquad  \text { and } \qquad \tau(t)=\left(\begin{matrix} 1 & 0\\ 0 & t\end{matrix}\right)\;,$$
for fixed $s,t\in \mathbb{R}_{>0}$. These elements $\sigma$ and $\tau$ belong to $\GL_2(\mathbb{R})$ and 
are the `trombone' scaling symmetries of [CLPS]. In particular,  $\sigma$ and $\tau$ each contain only single $\mathbb{R}_{>0}$ parameters.
%The  $\sigma$ and $\tau$
They are symmetries of the equations of motion, but not of the action, are  invariant under dimensional reduction,
 and preserve the dilation and axion scalar fields (\cite{CLPS}).
Indeed, for any $s,t\in \R_{>0}$, we have
$$\left(\begin{matrix} s & 0\\ 0 & 1\end{matrix}\right)\left(\begin{matrix} q_0 \\ g_0 \end{matrix}\right)=\left(\begin{matrix} sq_0 \\ g_0 \end{matrix}\right),$$
$$\left(\begin{matrix} 1 & 0\\ 0 & t\end{matrix}\right)\left(\begin{matrix} q_0 \\ g_0 \end{matrix}\right)=\left(\begin{matrix} q_0 \\ tg_0 \end{matrix}\right),$$
so $\sigma(s)$ rescales the elementary electric charge and $\tau(t)$ rescales the elementary magnetic charge.

%From now on, we consider only electric $p$-branes. 
%Thus the charge lattice contains only pairs of electric 
%charges $(p,q)$ on the real and imaginary axes.

Let $P(\R)$ be the subgroup of $\GL_2(\mathbb{R})$ generated by $K=\SO(2)$ and the elements 
$$\sigma(\R)=\left \{\left(\begin{matrix} s & 0\\ 0 & 1\end{matrix}\right)\mid s\in\R\right \}$$ and 
$$\tau(\R)=\left \{\left(\begin{matrix} 1 & 0\\ 0 & t\end{matrix}\right)\mid t\in\R\right \}.$$

As in \cite{CLPS}, we have the following.
\begin{theorem}  At the classical level, the subgroup $P(\R)=\langle \SO(2),\ \sigma(\R),\  \tau(\R)\rangle$ of $\GL_2(\mathbb{R})$ is the spectrum-generating symmetry group  for the parameter space of BPS solitons that preserve half the supersymmetry
 of Type IIB supergravity in $D=10$ dimensions. 
 \end{theorem}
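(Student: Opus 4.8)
The plan is to adapt the argument of \cite{CLPS}, using the Iwasawa-coordinate description of the $\SL_2$-action from Section~\ref{S-action-sl2}. First I would fix the description of the objects being parametrized. For a fixed brane dimension $p$, a half-BPS $p$-brane soliton of $D=10$ type IIB supergravity is determined, up to an irrelevant transverse location and worldvolume reparametrization, by two pieces of data: the asymptotic value $\tau_\infty=\chi_\infty+ie^{\phi_\infty}\in\mathbb{H}$ of the scalars --- equivalently a Borel-gauge coset representative $V|_{r\to\infty}\in\SO(2)\backslash\SL_2(\mathbb{R})$ --- together with a positive tension $\mathcal{T}$ (equivalently a charge), which at the classical level ranges freely over $\mathbb{R}_{>0}$ because charges are not yet quantized; solitons of different $p$ are not related by $P(\mathbb{R})$, so the statement is understood fibrewise over $p$. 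I would then record the action of the generators of $P(\mathbb{R})$ on this parameter space: $\SL_2(\mathbb{R})$ acts on $\tau_\infty$ by the fractional linear transformations of Section~\ref{S-action-sl2}, with the compensating element $\kappa\in\SO(2)$ of Proposition~\ref{iwasawa} restoring the Borel gauge after the transformation; it permutes the members of the $\SL_2$-multiplet of $p$-branes; and, being a symmetry of the action, it preserves the ADM tension. The trombone generators $\sigma(s),\tau(t)$, which are symmetries of the equations of motion fixing $\phi$ and $\chi$ (as recorded above, following \cite{CLPS}), leave $\tau_\infty$ unchanged and rescale the electric and magnetic charges respectively, hence rescale $\mathcal{T}$.

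Next I would record the purely group-theoretic fact that $\SL_2(\mathbb{R})\subseteq P(\mathbb{R})$. The subgroup $\langle\sigma(\mathbb{R}_{>0}),\tau(\mathbb{R}_{>0})\rangle$ is the group $A$ of diagonal matrices with positive entries, and the polar decomposition together with the spectral theorem for positive-definite symmetric matrices gives $g=k_1ak_2$ with $k_1,k_2\in\SO(2)$, $a\in A$ for every $g\in\GL_2(\mathbb{R})$ of positive determinant; in particular $\SL_2(\mathbb{R})\subseteq\langle\SO(2),A\rangle=P(\mathbb{R})$, and in fact $P(\mathbb{R})$ is the identity component of $\GL_2(\mathbb{R})$.

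Transitivity is then immediate. Given two half-BPS solitons of brane dimension $p$, with data $(\tau_\infty^{(1)},\mathcal{T}_1)$ and $(\tau_\infty^{(2)},\mathcal{T}_2)$ and possibly lying in different members of the multiplet, first pick $g\in\SL_2(\mathbb{R})\subseteq P(\mathbb{R})$ carrying $\tau_\infty^{(1)}$ to $\tau_\infty^{(2)}$ and the first multiplet member to the second; this is possible because $\SL_2(\mathbb{R})$ acts transitively on $\mathbb{H}$ and transitively on the nonzero charge vectors $\mathbb{R}^2\setminus\{0\}$, and it leaves the tension equal to $\mathcal{T}_1$. Then apply the appropriate trombone scaling, $\sigma(s)$ or $\tau(t)$ with parameter chosen so that $\mathcal{T}_1\mapsto\mathcal{T}_2$; this fixes $\tau_\infty^{(2)}$. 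The composite lies in $P(\mathbb{R})$ and carries the first soliton to the second, so $P(\mathbb{R})$ acts transitively on the parameter space. Since every element of $P(\mathbb{R})$ preserves the half-BPS property --- $\SL_2(\mathbb{R})$ commutes with the surviving supersymmetry, and the trombone rescales mass and charge together so that the BPS bound remains saturated --- $P(\mathbb{R})$ is a spectrum-generating symmetry group. To see that it is \emph{the} such group in the sense of \cite{CLPS}, one checks minimality: $\SL_2(\mathbb{R})$ alone preserves $\mathcal{T}$, so its orbits are the slices $\mathbb{H}\times\{\mathcal{T}\}$ and cannot connect solitons of different mass, so the trombone generators are genuinely needed; the scalings alone fix $\tau_\infty$ and so cannot move the scalar moduli; and $\SO(2)$ is precisely the stabilizer of the Borel-gauge reference point $\tau=i$, so the orbit of $P(\mathbb{R})$ through a single reference soliton is exactly the parameter space and no smaller group will do.

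The main obstacle is not the transitivity step, which is short, but the preliminary identification of the parameter space and of the action on it: one must verify from the explicit IIB soliton solutions that the only continuous moduli are $(\tau_\infty,\mathcal{T})$, that $\SL_2(\mathbb{R})$ acts on them exactly as claimed (in particular that its action on a single reference soliton is trivial only up to $\SO(2)$ and the discrete multiplet permutation), and that the trombone scalings map half-BPS solutions to half-BPS solutions with $\mathcal{T}$ rescaled and the scalars untouched. All of this is established in \cite{CLPS}; what is new here is the bookkeeping of the $\SL_2(\mathbb{R})$-action in the Steinberg/Iwasawa coordinates of Section~\ref{S-action-sl2}, where the compensating element $\kappa$ of Proposition~\ref{iwasawa} is exactly what restores the Borel gauge in which the soliton solution is written.
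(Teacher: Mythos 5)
The paper does not actually prove this statement: it is introduced only by the sentence ``As in \cite{CLPS}, we have the following,'' so the entire content is delegated to Cremmer--Lu--Pope--Stelle. Your proposal is therefore not competing with an argument in the paper but supplying one, and as a sketch it is sound and rests on exactly the same physics input that the paper implicitly outsources to \cite{CLPS} (that the continuous moduli of a half-BPS $p$-brane are $(\tau_\infty,\mathcal{T})$, that the trombone rescales the tension while fixing the scalars, and that half-BPS saturation is preserved). The group theory you add on top --- polar decomposition showing $P(\R)=\GL_2(\R)^+\supseteq\SL_2(\R)$, then transitivity by matching scalars and multiplet member with an $\SL_2(\R)$ element and matching tension with a scaling, plus the minimality discussion --- is correct and goes beyond what the paper records. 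Two small points of care. First, the paper's displayed definition of $\sigma(\R)$ and $\tau(\R)$ allows all $s,t\in\R$ (including $0$ and negative values); your reading $s,t\in\R_{>0}$, consistent with the earlier ``for fixed $s,t\in\R_{>0}$,'' is the one under which $P(\R)$ is a genuine subgroup of $\GL_2(\R)$ and equals its identity component. Second, in the transitivity step a single $g\in\SL_2(\R)$ must simultaneously match $\tau_\infty$ and the charge direction; this works because the stabilizer in $\SL_2(\R)$ of a point of $\mathbb{H}$ is a conjugate of $\SO(2)$ and still acts transitively on charge directions, but it is worth saying so explicitly rather than citing transitivity on $\mathbb{H}$ and on $\R^2\setminus\{0\}$ as separate facts. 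Relatedly, for a dyonic charge the individual scalings $\sigma(s)$ or $\tau(t)$ change the charge direction, so the final rescaling should be the combination $\sigma(s)\tau(s)=s\cdot\Id$, which fixes both the direction and $\tau_\infty$ while rescaling $\mathcal{T}$. With those adjustments your outline is a faithful and more explicit version of the argument the paper attributes to \cite{CLPS}.
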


Next consider 
$$P(\Z)=\langle \SL_2(\Z),\ \sigma(\Z),\  \tau(\Z)\rangle $$
where
$$\sigma(\Z)=\left \{\left(\begin{matrix} s & 0\\ 0 & 1\end{matrix}\right),\  s\in \mathbb{Z} \right \}\quad \text{ and } 
\quad \tau(\Z)=\left \{\left(\begin{matrix} 1 & 0\\ 0 & t\end{matrix}\right),\ t\in \mathbb{Z}\right \}.$$
Then $\SL_2(\Z)$ is naturally contained in $P(\Z)$. However, $P(\Z)$ is not a group, as the inverses of $\sigma(\Z)$ and $\tau(\Z)$ are not contained in $P(\Z)$. We can identify $P(\Z)$ as a {\it monoid} 
 and in fact a submonoid of $\GL_2(\Z_{\geq 0})$, where $\GL_2(\mathbb{Z}_{\geq 0})=M_2(\mathbb{Z}_{\geq 0})\cap GL_2(\Z)$ is  defined as the monoid of matrices of determinant $\neq 0$ with entries in $\mathbb{Z}_{\geq 0}$. The elements of $\GL_2(\mathbb{Z}_{\geq 0})$ are not necessarily invertible, since inverses may not exist over $\Z_{\geq 0}$.

To see that $P(\Z)$ is  a submonoid of $\GL_2(\Z_{\geq 0})$, we use the following.

\begin{theorem}\label{invert}([R])
\begin{enumerate}
\item $\SL_2(\Z_{\geq 0})=M_2(\mathbb{Z}_{\geq 0})\cap \SL_2(\Z)$ is a free monoid on two generators 
$\left(\begin{smallmatrix} 
1 & 1\\
0 & 1
\end{smallmatrix}\right)$,
$\left(\begin{smallmatrix} 
1 & 0\\
1 & 1
\end{smallmatrix}\right)$.

\medskip\item $\SL_2(\Z)$ can be decomposed using 8 copies of $\SL_2(\Z_{\geq 0})$:

\bigskip
$\SL_2(\Z)=\SL_2(\Z_{\geq 0})\ \sqcup\  \gamma \SL_2(\Z_{\geq 0})\ \sqcup\  \SL_2(\Z_{\geq 0})  \gamma \ \sqcup\  \gamma \SL_2(\Z_{\geq 0}) \gamma$

\medskip
\qquad $
\ \sqcup\  (-1)\SL_2(\Z_{\geq 0})\ \sqcup\   (-1)\gamma \SL_2(\Z_{\geq 0})\ \sqcup\   (-1)\SL_2(\Z_{\geq 0})   \gamma\ \sqcup\   (-1)\gamma \SL_2(\Z_{\geq 0}) \gamma,
$

\bigskip
where $\gamma=\left(\begin{smallmatrix} 
~0 & 1\\
-1 & 0
\end{smallmatrix}\right)$ and $(-1)=\left(\begin{smallmatrix} 
-1 & ~0\\
~0 & -1
\end{smallmatrix}\right)$

\medskip\item $\GL_2(\Z)$ can be decomposed using 2 copies of $\SL_2(\Z)$:\ $\GL_2(\Z)=\SL_2(\Z)\ \sqcup\  \left(\begin{smallmatrix} 
1 & ~0\\
0 & -1
\end{smallmatrix}\right) \SL_2(\Z)$.
\end{enumerate}

\end{theorem}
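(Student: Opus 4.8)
Write $M:=\SL_2(\Z_{\ge0})=M_2(\Z_{\ge0})\cap\SL_2(\Z)$, $T:=\left(\begin{smallmatrix}1&1\\0&1\end{smallmatrix}\right)$ and $L:=\left(\begin{smallmatrix}1&0\\1&1\end{smallmatrix}\right)$. Part (3) is just determinant bookkeeping, so I dispose of it first: $\det\colon\GL_2(\Z)\to\{\pm1\}$ has kernel $\SL_2(\Z)$, and if $\det A=-1$ then $\left(\begin{smallmatrix}1&0\\0&-1\end{smallmatrix}\right)A\in\SL_2(\Z)$, whence $A\in\left(\begin{smallmatrix}1&0\\0&-1\end{smallmatrix}\right)\SL_2(\Z)$ since $\left(\begin{smallmatrix}1&0\\0&-1\end{smallmatrix}\right)^2=I$; the two cosets are disjoint because $\det$ separates them. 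The real content is in (1) and (2).

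\textbf{Part (1).} For generation I would run a one-sided Euclidean algorithm. Let $A=\left(\begin{smallmatrix}a&b\\c&d\end{smallmatrix}\right)\in M$ with $A\ne I$; note that $ad-bc=1$ with nonnegative entries forces $a\ge1$ and $d\ge1$. The reductions $AT^{-1}=\left(\begin{smallmatrix}a&b-a\\c&d-c\end{smallmatrix}\right)$ and $AL^{-1}=\left(\begin{smallmatrix}a-b&b\\c-d&d\end{smallmatrix}\right)$ each strictly decrease the sum of entries whenever they land in $M_2(\Z_{\ge0})$. If $b\ge a$ then $a(d-c)=ad-ac\ge ad-bc=1$, so $d>c$ and $AT^{-1}\in M$. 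If $a>b$ then $c\ge1$ (otherwise $ad=1$ forces $a=d=1$, hence $b=0$ and $A=I$); and if $d>c$ then $ad-bc\ge a(c+1)-bc=c(a-b)+a\ge c+a\ge 2$, a contradiction, so $c\ge d$ and $AL^{-1}\in M$. Iterating terminates at $I$, so $A$ is a word in $T,L$. For freeness I would use ping-pong on the open cone $C=\{(x,y):x,y>0\}$: one checks $T(C)=\{(u,v):u>v>0\}$ and $L(C)=\{(u,v):v>u>0\}$, which are disjoint and both contained in $C$. Hence two words in $T,L$ beginning with different letters send $C$ to disjoint nonempty sets and so are distinct matrices, while two words beginning with the same letter can have that letter cancelled (invertibility of $T,L$ in $\GL_2(\R)$); induction on word length then finishes it.

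\textbf{Part (2).} The key point is that each of the eight sets is exactly the set of $X\in\SL_2(\Z)$ satisfying a fixed \emph{weak sign pattern}. For instance, for $B=\left(\begin{smallmatrix}a&b\\c&d\end{smallmatrix}\right)\in M$ one has $\gamma B=\left(\begin{smallmatrix}c&d\\-a&-b\end{smallmatrix}\right)$, so $\gamma M$ is precisely the set of $X\in\SL_2(\Z)$ with first row $\ge0$ and second row $\le0$; similarly $M\gamma$ has first column $\le0$ and second column $\ge0$, $\gamma M\gamma$ has diagonal $\le0$ and antidiagonal $\ge0$, and the four $(-1)(\cdot)$ pieces carry the negated patterns. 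Covering then reduces to showing every $A\in\SL_2(\Z)$ realizes one of these eight patterns. When $A=\left(\begin{smallmatrix}a&b\\c&d\end{smallmatrix}\right)$ has no zero entry this is forced by the identity $ad=1+bc$, which forbids $ad$ and $bc$ from having opposite signs (else $|ad-bc|\ge2$); only eight strict sign patterns survive, and a short case split on the signs identifies each with exactly one of the eight pieces. The matrices with a zero entry are precisely the one-parameter families $\pm T^{k}$, $\pm L^{k}$, $\pm\gamma T^{k}$, $\pm\gamma L^{k}$ ($k\in\Z$), each of which is placed directly into a piece. Disjointness comes from the same dictionary: a matrix with no zero entry lies in a unique piece, and the only matrices lying in two pieces are $\gamma\in\gamma M\cap M\gamma$ and $-\gamma\in(-1)\gamma M\cap(-1)M\gamma$; assigning each of these two elements to a single piece makes the union genuinely disjoint.

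\textbf{Main obstacle.} I expect the delicate part to be the bookkeeping in (2): matching each coset to its sign pattern, checking that $ad=1+bc$ kills exactly the complementary eight patterns, and handling the zero-entry families cleanly, including the harmless overlap at $\pm\gamma$ that forces a convention in the ``$\sqcup$''. The conceptual heart, though, is the free-monoid statement (1); once the reduction algorithm and the ping-pong cones are in hand, the rest follows.
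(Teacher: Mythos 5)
The paper does not actually prove this theorem: it is quoted verbatim from a private communication of A.~Ram (reference [R]), so there is no in-paper argument to compare yours against, and your proposal has to be judged on its own. Parts (1) and (3) are correct. The Euclidean reduction ($b\ge a\Rightarrow AT^{-1}\in M$, $a>b\Rightarrow AL^{-1}\in M$, entry sum strictly decreasing) is airtight, and the ping-pong cones $T(C)=\{u>v>0\}$, $L(C)=\{v>u>0\}$ give freeness; just note that your induction also needs the base case ``nonempty word $\ne I$,'' which follows from $w(C)\subseteq T(C)\cup L(C)\subsetneq C$ while $I(C)=C$.

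In part (2) there is one concrete error in the disjointness bookkeeping. Your dictionary (each piece equals exactly the set of matrices in $\SL_2(\Z)$ with the corresponding weak sign pattern) is correct, and the identity $ad=1+bc$ does pin every zero-free matrix to a unique piece. But the set of matrices lying in two pieces is $\{\pm I,\pm\gamma\}$, not $\{\pm\gamma\}$: since $(-1)\gamma B\gamma=\left(\begin{smallmatrix}d&-c\\-b&a\end{smallmatrix}\right)$, the piece $(-1)\gamma\,\SL_2(\Z_{\ge0})\,\gamma$ is exactly the set with nonnegative diagonal and nonpositive antidiagonal, which contains $I=(-1)\gamma I\gamma$; likewise $-I\in(-1)\SL_2(\Z_{\ge0})\cap\gamma\,\SL_2(\Z_{\ge0})\,\gamma$. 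So the theorem's ``$\sqcup$'' is not literally disjoint as stated --- it fails at exactly these four elements, and any proof must either record all four overlaps and fix a convention (as you do for $\pm\gamma$) or restate the decomposition with half-open sign conditions. Apart from enlarging your exceptional set from two elements to four, your argument for (2) goes through.
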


It follows that the generators for $\GL_2(\mathbb{Z}_{\geq 0})$ are 
$$\left(\begin{matrix} 1 & u\\ 0 & 1\end{matrix}\right)\text { and }\left(\begin{matrix} 1 & 0\\ v & 1\end{matrix}\right)$$
for $u,v\in{\mathbb{Z}_{\geq 0}}$ and
$$\left(\begin{matrix} s & 0\\ 0 & 1\end{matrix}\right)\text { and } \left(\begin{matrix} 1 & 0\\ 0 & t\end{matrix}\right)$$
for $s,t\in \mathbb{Z}_{> 0}$. Thus $P(\Z)$ is a submonoid of $\GL_2(\Z_{\geq 0})$.

Let $(p_0,0)^T$, $(0,q_0)^T$ be the elementary vectors of electric charge. It is convenient to choose $p_0=1$. Then we must have $q_0\neq 1$ since the vector $(p_0,q_0)$ is primitive. We can generate all primitive vectors using the $P(\Z)$ action on $(p_0,0)^T$ and $(0,q_0)^T$ in the following way. We choose matrices 
$$m_1=\left(\begin{matrix} n & b\\ q_0 & d\end{matrix}\right),\ m_2=\left(\begin{matrix} a & p_0\\ c & m\end{matrix}\right),\ m_3=\left(\begin{matrix} s & 0\\ 0 & t\end{matrix}\right)\in P(\Z)$$
with $a$, $b$, $c$, $d$, $p_0$, $q_0$, $m$, $n$, $s$, $t\in\Z$, in such a way that 
$$\gcd(n,q_0)=\gcd(p_0,m)=\gcd(s,t)=1.$$ Then
\begin{align*}
\left(\begin{matrix} n & b\\ q_0 & d\end{matrix}\right)  \left(\begin{matrix} 1 \\ 0 \end{matrix}\right)=
\left(\begin{matrix} n \\ q_0 \end{matrix}\right),\quad
\left(\begin{matrix} a & p_0\\ c & m\end{matrix}\right)  \left(\begin{matrix} 0 \\ 1 \end{matrix}\right)=
\left(\begin{matrix} p_0 \\ m \end{matrix}\right),\quad\text{and}\quad
\left(\begin{matrix} s & 0\\ 0 & t\end{matrix}\right)\left(\begin{matrix} p_0 \\ q_0 \end{matrix}\right)=\left(\begin{matrix} sp_0 \\ tq_0 \end{matrix}\right).
\end{align*}
Thus all the primitive vectors are generated by the action of $m_1$, $m_2$ or $m_3\in P(\Z)$.

 Even though the matrices $m_1$, $m_2$ and $m_3$ have integral entries and are contained in $P(\Z)$, their inverses cannot be contained in $P(\Z)$. This it is a subtle question to determine what it means to `invert' a charge under the action of $P(\Z)$.  One can always invert charges under the action of the group $\SL_2(\Z)$ which preserves the primitive vectors and and hence maps between allowed charge vectors. Hence the group  $\SL_2(\Z)$ generates the spectrum of physically distinct states in a single fixed vacuum. 

Theorem~\ref{invert} also  ensures that charges can be inverted under the action of the group $\GL_2(\Z)$.

However, we have a quantum anomaly of the theory: the trombone symmetries $\sigma(\Z)$ and $\tau(\Z)$ cannot be inverted over $\Z$ in the monoid $P(\Z)$ that they generate.

We may summarize the above discussion in the following.
\begin{lemma} Elements of the cosets
$$\SO(2)\sigma(\Z),\quad \SO(2)\tau(\Z)$$
generate the full charge lattice from the elementary charge vector $(1,0)^T$ and preserve the moduli of the scalar fields.
\end{lemma}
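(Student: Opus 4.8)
The plan is to verify the two assertions of the lemma separately, treating it as the bookkeeping summary of the preceding discussion that it is announced to be. Throughout I would write a generic element of $\SO(2)\sigma(\Z)$ as $\kappa\,\sigma(s)$ with $\kappa\in\SO(2)$ and $s\in\Z$, and likewise $\kappa\,\tau(t)$ for $\SO(2)\tau(\Z)$; note in particular that $\sigma(1)=\tau(1)=\Id$, so that $\SO(2)\subseteq\SO(2)\sigma(\Z)$ and $\SO(2)\subseteq\SO(2)\tau(\Z)$, and that $R=\SL_2(\Z)\cap\SO(2)$ (with its quarter-turn generator $a=\bigl(\begin{smallmatrix}0&1\\-1&0\end{smallmatrix}\bigr)$) and $-\Id$ all sit inside $\SO(2)$.

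For the generation statement I would argue directly. Given a nonzero target $(m,n)^{T}$ in the charge lattice $\Z^{2}$, first dispose of $m^{2}+n^{2}=1$: then $(m,n)^{T}$ is one of $\pm\Id\,(1,0)^{T}$, $a^{\pm1}(1,0)^{T}$, and each of $\pm\Id,\,a^{\pm1}$ lies in $\SO(2)\subseteq\SO(2)\sigma(\Z)$. For $m^{2}+n^{2}\ge 2$, choose $s\in\Z$ with $s^{2}\ge m^{2}+n^{2}$ (so $s\ge 2$) and set $u_{1}^{2}=(m^{2}+n^{2}-1)/(s^{2}-1)\in[0,1]$. Let $\kappa_{2}\in\SO(2)$ rotate $(1,0)^{T}$ to a unit vector with first coordinate $u_{1}$; then $\sigma(s)$ sends it to a vector of norm $\sqrt{m^{2}+n^{2}}$, and a further rotation $\kappa_{1}\in\SO(2)$ carries that onto $(m,n)^{T}$. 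The composite $\kappa_{1}\sigma(s)\kappa_{2}=(\kappa_{1}\sigma(s))\cdot\kappa_{2}$ is a product of two elements of $\SO(2)\sigma(\Z)$ and takes $(1,0)^{T}$ to $(m,n)^{T}$; together with the origin (reached by $\sigma(0)$) this exhausts $\Z^{2}$. I would also record the more structured route matching the physical picture: $\sigma(s)$ produces the electric states $(s,0)^{T}$ directly, $a\in\SO(2)$ produces $(0,-1)^{T}$, then $\tau(t)\in\SO(2)\tau(\Z)$ produces the magnetic states $(0,-t)^{T}$, and the remaining dyonic vectors are obtained by composing with the generators $m_{1},m_{2},m_{3}\in P(\Z)$ and the transitivity of $\SL_2(\Z)$ on primitive vectors established just above.

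For the invariance of the scalar moduli I would use that $\sigma(s)$ and $\tau(t)$ are diagonal: right multiplication of the Borel-gauge representative $V(x)$ by either keeps it upper triangular, so no compensating element of $K=\SO(2)$ is introduced, and, as recalled above following \cite{CLPS}, the trombone scalings act on the gravitational sector and leave the dilaton $\phi$ and axion $\chi$ untouched. The residual $\SO(2)$ factor of a coset element acts on the left of $Kg\in K\backslash G$ and hence trivially on the coset. Combining the two clauses gives the lemma.

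The step requiring the most care is conceptual rather than computational: since $P(\Z)$ is only a monoid -- the scalings $\sigma(\Z),\tau(\Z)$ are not invertible over $\Z$ -- one must be careful that the element realizing a given charge vector is genuinely a product of elements drawn from the two cosets $\SO(2)\sigma(\Z)$ and $\SO(2)\tau(\Z)$, and not from $\SO(2)\cup\sigma(\Z)\cup\tau(\Z)$ with inverses adjoined. The rotation--scaling--rotation construction above is arranged exactly to respect this (each of its two factors lies in $\SO(2)\sigma(\Z)$), and it is in squaring that construction with the monoid bookkeeping of the preceding paragraphs, plus separately handling the norm-one vectors and the origin, that the only real attention is needed.
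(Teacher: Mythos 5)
Your proposal is correct, but it is doing genuinely more work than the paper, which offers no proof at all: the lemma is introduced with ``We may summarize the above discussion in the following,'' so its implicit justification is the preceding material ($\sigma(s)$ and $\tau(t)$ rescale the elementary electric and magnetic charges, $\SO(2)\supseteq R$ rotates between the axes, the matrices $m_1,m_2,m_3\in P(\Z)$ reach the primitive vectors, and \cite{CLPS} is cited for the invariance of the dilaton and axion). Your rotation--scaling--rotation sandwich is a sharper and more self-contained argument: it correctly identifies that a \emph{single} element $\kappa\sigma(s)$ applied to $(1,0)^T$ only produces vectors of integer length $|s|$, so that e.g.\ $(1,1)^T$ is unreachable in one step, and your interpolation $u_1^2=(m^2+n^2-1)/(s^2-1)$ with the two-factor product $(\kappa_1\sigma(s))\cdot(\kappa_2\sigma(1))$ repairs this while staying inside the coset $\SO(2)\sigma(\Z)$ (using $\sigma(1)=\Id$). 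This is a genuine improvement in rigor over the paper's reading, and the norm computation $s^2u_1^2+u_2^2=m^2+n^2$ checks out. Your secondary route via $m_1,m_2,m_3$ and the transitivity of $\SL_2(\Z)$ on primitive vectors is essentially what the paper has in mind.

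One caution on the second clause: your observation that right multiplication by $\sigma(s)$ or $\tau(t)$ keeps $V(x)$ upper triangular (so no compensating $\SO(2)$ element is needed) does not by itself show the moduli are preserved --- the diagonal entries, hence the dilaton, are visibly rescaled by such a right multiplication if one interprets the action on $K\backslash G$ literally. The preservation of $\phi$ and $\chi$ is a statement about how the trombone scaling acts on the full supergravity field content, and both you and the paper ultimately rest this clause on the citation to \cite{CLPS}; that is acceptable here, but the upper-triangularity remark should not be presented as carrying that weight.
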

The elements of $\{\SO(2)\sigma(\Z)$, $\SO(2)\tau(\Z)\}$ are not integral matrices and moreover do not generate a group, but rather  cosets of $\SO(2)\backslash \GL_2(\R)$.

This shows that the usual duality group symmetries at the classical level are not inherited in the same form at the quantum level and one must take care when describing the discrete symmetries in the quantum theory.

\section{U-duality groups in general}\label{other}

 Our method for computing Iwasawa coordinates of the quotient $K\,\backslash G$ and the action of the $\mathbb{Z}$-form $G(\mathbb{Z})$ on the coset $K\,\backslash G$  do not depend on the choice of group $G$. Our techniques can hence be applied to other U-duality groups, including the hyperbolic Kac--Moody group $E_{10}$, which is conjectured to be the U-duality group of 11-dimensional supergravity in 1 dimension (\cite{HT}, \cite{DHN}).

In \cite{AC}, the authors obtained a finite presentation of an important class of hyperbolic Kac-Moody groups, including $E_{10}(\R)$ and $E_{10}(\Z)$. Hence the action of $E_{10}(\Z)$ on the coset $K(E_{10}(\R))\backslash E_{10}(\R)$ is given by finitely many rules which can be determined by a generalization of the methods given here.

 The $\Z$-form $G(\Z)$, for any simple and simply connected Chevalley group or Kac--Moody group $G$, may be defined as the stabilizer 
$$G({\mathbb{Z}})=\{g\in G({\mathbb{R}})\mid g\cdot V_{\Z}\subseteq V_{\Z}\}$$
of a lattice $V_{\Z}$ in a suitable highest weight module $V$ for the Lie algebra $\mathfrak{g}$ of $G$ (\cite{St}, \cite{BC}). In many cases of physical interest, a fundamental representation serves as a suitable choice of $V$.

For a general discrete duality group $G(\Z)$, we may also determine the  action of $G(\Z)$  on $\Z$-forms $V_{\Z}$  of fundamental modules $V$ for Lie algebras and Kac--Moody algebras in general. The charge lattice $\mathcal{Q}$ of Section~\ref{charge} can be normalized to coincide with the lattice $V_{\Z}$. Once a basis for $V_{\Z}$ has been chosen, the $G(\mathbb{Z})$ orbits on the charge lattice can be computed explicitly in terms of this basis. We hope to take  up these open questions in forthcoming work.

\bigskip

\end{document}